\documentclass[a4paper,oneside,12pt]{article}

\usepackage{jheppub}

\usepackage[usenames,dvipsnames,table]{xcolor}

\usepackage{amsmath}
\usepackage{amsthm}
\usepackage{amsfonts}
\usepackage{braket}
\usepackage{bbold}
\usepackage{mathrsfs}  
\usepackage[mathscr]{eucal}
\usepackage{tikz}
\usepackage{enumerate}
\usepackage[normalem]{ulem}

\def\shadeB{\cellcolor{blue!5}}
\def\shadeR{\cellcolor{red!5}}

\usepackage{subcaption}
\usepackage{float}
\usepackage{afterpage}
\captionsetup{font={sf,footnotesize},labelfont=footnotesize}
\captionsetup[sub]{font={sf,footnotesize},labelfont=footnotesize}

\newtheorem{theorem}{Theorem}[section]
\newtheorem{corollary}[theorem]{Corollary}
\newtheorem{lemma}[theorem]{Lemma}
\newtheorem{definition}{Definition}[]

\definecolor{maxcolor}{rgb}{1,0.03,0}
\definecolor{dblue}{rgb}{0.25,0.03,0.8}
\definecolor{vecolor}{rgb}{0.7,0.3,0.9}


\def\c{\mathscr{C}}
\def\f{\mathscr{F}}

\def\om{\boldsymbol{\Omega}}
\DeclareMathOperator{\Tr}{Tr}
\newcommand\eqdef{\mathrel{\overset{\makebox[0pt]{\mbox{\normalfont\tiny\sffamily def}}}{=}}}

\newcommand{\qcf}[1]{Q_{#1}}
\newcommand{\bQ}{{\mathbf Q}}
\newcommand{\extr}[1]{\mathcal{E}_{#1}}
\def\esf{\omega}
\def\regA{\mathcal{A}}
\def\bulk{\mathcal{M}}
\def\univ{\mathcal{O}}
\newcommand{\si}{\mathscr{I}}
\newcommand{\sj}{\mathscr{J}}
\newcommand{\sk}{\mathscr{K}}

\newcommand{\arr}{\text{{\large $\mathbb{A}$}}}


\subheader{}
\title{\boldmath Holographic entropy relations}

\author[a]{Veronika  E. Hubeny,}
\author[a]{Mukund Rangamani,}
\author[b]{\! Massimiliano Rota}
\affiliation[a]{Center for Quantum Mathematics and Physics (QMAP)\\
Department of Physics, University of California, Davis, CA 95616 USA}
\affiliation[b]{Department of Physics, University of California, Santa Barbara, CA 93106, USA}

%
\emailAdd{veronika@physics.ucdavis.edu}
\emailAdd{mukund@physics.ucdavis.edu}
\emailAdd{mrota@physics.ucsb.edu}

\abstract{
We develop a framework for the derivation of new information theoretic quantities which are natural from a holographic perspective. We demonstrate the utility of our techniques by deriving the tripartite information (the quantity associated to monogamy of mutual information) using a set of  abstract arguments involving bulk extremal surfaces. Our arguments rely on formal manipulations of surfaces and not on local surgery or explicit computation of entropies through the holographic entanglement entropy prescriptions. As an application, we show how to derive a family of similar information quantities for an arbitrary number of parties. The present work establishes the foundation of a broader program that aims at the understanding of the entanglement structures of geometric states for an arbitrary number of parties. We stress that our method is completely democratic with respect to bulk geometries and is equally valid in static and dynamical situations. While rooted in holography, we expect that our construction will provide a useful characterization of multipartite correlations in quantum field theories. 
}

\begin{document} 

\maketitle
\flushbottom

\section{Introduction}
\label{sec:intro}

Quantum information theoretic constructs are playing an increasingly prominent role in theoretical physics. In part, this is thanks to the realization that entanglement can provide a useful diagnostic of interesting features of a quantum system and its dynamics. In the context of holographic dualities, entanglement seems to underlie the mechanism of the duality itself, encouraging the expectation that understanding the entanglement structure will elucidate the emergence of bulk spacetime \cite{VanRaamsdonk:2010pw,Maldacena:2013xja}. 

The most familiar, and in many ways natural, measure of entanglement is the entanglement entropy, defined as the von Neumann entropy of the reduced density matrix of a given subsystem. A particularly natural decomposition is delineated by a spatial region of the background (non-dynamical) spacetime on which the field theory lives.  In what follows we will consider such regions, bounded by smooth entangling surfaces, focusing thus on spatially ordered entanglement in relativistic QFTs.\footnote{  A-priori  the definition of entanglement entropy assumes a bi-partitioning of the Hilbert space. In relativistic quantum field theories one can alternately work directly with the local algebra of observables, thereby circumventing the notion of partitioning of the Hilbert space (which strictly-speaking does not apply). } However, the entanglement entropy associated with these regions has a UV divergence, whose leading part scales with the area of the entangling surface.  This suggests that the most physically meaningful quantities are not the entropies themselves, but rather linear combinations thereof, whose actual values can be finite despite the divergences in the building blocks.  Indeed, this expectation is ratified  within quantum information theory itself, even when dealing with finite quantum systems where such divergences do not arise.  In particular, \emph{information quantities}, which we define to be certain  linear combinations of entropies, have been used in many contexts both in classical and quantum information theory, e.g., to quantify and characterize correlations.\footnote{
Relative entropy is another quantity which is both finite and meaningful in QFTs. It however refers to properties of the state relative to another reference state. We will focus on quantities which capture the intrinsic information theoretic features of a state.
}  Such finite quantities tend to obey interesting bounds, whose saturation typically carries information theoretic significance.
 
The simplest example of such an information quantity is the mutual information between two disjoint subsystems, defined as the difference between the entanglement entropy of the combined system and the sum of the entanglement entropies of the individual subsystems, 
cf., Eq.\eqref{eq:midef} below.  Since this quantity characterizes the amount of correlation (both classical and quantum) between the two subsystems, it cannot be negative.  This powerful statement is known as subadditivity (SA) \cite{Araki:1970ba}, and is satisfied universally, for any quantum system in any state, and for any meaningful partition.  The saturation of this inequality then signifies the lack of correlation between the two subsystems.\footnote{ This typically does not happen in quantum field theories, but can occur in holographic systems if we focus on the leading contribution in the planar limit (large $N$), see \S\ref{sec:overview} \label{fn:misat}.} Similarly, the stronger statement of non-negativity of the conditional mutual information, known as strong subadditivity (SSA) \cite{Lieb:1973cp}, is satisfied by all classical probability distributions and quantum density matrices.  Since one can think of this property as the monotonicity of correlations under inclusion, its saturation implies a Markov property of the subsystems \cite{Hayden_2004}\cite{Casini:2017roe}.
   
However, not all interesting information quantities obey universal bounds: some may satisfy certain inequalities only in some particular circumstances.  There are numerous examples of such restricted relations, such as the non-negativity of the conditional entropy in the classical case, or the Ingleton inequality in the quantum context,  characterizing the set of 4-party stabilizer states \cite{Linden:2013aa}. Nevertheless, the restriction on the validity of such bounds does not diminish their utility.  In fact, such conditional inequalities are in a sense even more interesting than the universal ones, since they are more sensitive to distinctions between different classes of physical systems, and could potentially characterize the essence of this difference.
 
In what follows we will be particularly interested in understanding information quantities in the realm of holographic dualities  exemplified by the gauge/gravity correspondence. In this context, the Ryu-Takayanagi (RT) proposal \cite{Ryu:2006bv}, and its covariant generalization by HRT \cite{Hubeny:2007xt}, underpin the necessary link between field theory entanglement and geometry.\footnote{ For recent reviews of these developments we refer the reader to \cite{VanRaamsdonk:2016exw,Rangamani:2016dms,Harlow:2018fse}.} Of interest to us will be a sub-class of states in such holographic field theories, defined by states whose  dual description is in terms of a  smooth classical bulk geometry. We will henceforth refer to this subset as the set of \emph{geometric states}.\footnote{ 
Attempts to characterize geometric states using concepts from quantum error correction \cite{Almheiri:2014lwa,Dong:2016eik,Harlow:2016vwg} introduce a notion of code subspace of states, which at a heuristic level would coincide with our notion of geometric states, though one essential difference is that the code subspace additionally includes fluctuations of gravitational and matter fields about our geometric states.}
What we wish to  ascertain is which information quantities pertain specifically to such geometric states. 

Indeed, one might hope that the full set of information quantities could potentially usefully characterize this set by providing interesting necessary conditions for a field theory state to have a holographic dual corresponding to a classical geometry. Some examples, such as the monogamy of mutual information (MMI) are already well-known, cf.\ Eq.~\eqref{eq:mmidef} below.  This inequality, relating entanglement entropies for three subsystems, is the statement of non-positivity of tripartite information. It is guaranteed to hold when all correlations are purely quantum and therefore subject to the monogamy property, namely the statement that entanglement between two systems cannot be shared by a third one.  On the other hand, it is easy to construct quantum states which violate this inequality.  The remarkable fact that all geometric states of holographic field theories necessarily satisfy this inequality \cite{Hayden:2011ag} then hints at some residual quantumness of the state (despite the bulk geometry itself being described by classical dynamics),  perhaps even associated with bulk locality in this context \cite{Hubeny:2018bri}, whose precise meaning would be illuminating to understand.

While MMI is well known and easy to prove using the holographic entanglement entropy prescription,\footnote{
See \cite{Hayden:2011ag,Wall:2012uf} which generalize the RT-based proof of SSA \cite{Headrick:2007km}.    Two further (distinct) proofs of MMI based on  the `bit thread' reformulation of holographic entanglement entropy \cite{Freedman:2016zud} recently appeared in \cite{Cui:2018dyq,Hubeny:2018bri}.
}
the form of the corresponding information quantity, namely the tripartite information, has not been derived holographically from first principles. The situation is more dire for the  other inequalities explored in the context of the holographic entropy cone \cite{Bao:2015bfa}. For instance, these authors proved a set of inequalities for five subsystems (and a family of inequalities for a higher number of parties). However, in their present form these inequalities do not render themselves to a simple physical interpretation.  Nor is it fully known whether these inequalities hold for general time-dependent geometries, since the analysis of \cite{Bao:2015bfa} was restricted to time-reflection symmetric states where the RT prescription can be applied.\footnote{ One can argue that this restriction can be lifted in the case of two-dimensional conformal field theories with AdS$_3$ holographic duals \cite{Czech:2018aa}. We thank Xi Dong for discussions on this subject. \label{fn:3dproof}}
So far, we could at best realize that certain specific combinations of entanglement entropies have a definite sign, but hitherto we did not have a good way of deriving further inequalities, or, for the ones which are known, the corresponding information quantities directly.
This motivates a broader program for the search of information quantities; we lay out the general framework for such an exploration and extract some preliminary lessons in this paper.

For definiteness, we will focus on field theories in the planar limit with strong coupling (or large higher spin mass gap) which are expected to be dual to semi-classical Einstein-Hilbert gravity in the bulk.\footnote{ As will become clear in the course of our discussion, much of what we say will continue to apply in the planar limit even when higher curvature corrections are taken into account. In such situations we should use the general prescription given by \cite{Dong:2013qoa,Camps:2013zua} for computing the semi-classical field theory entanglement which involves a geometric functional built from intrinsic and extrinsic curvatures of a codimension-2 bulk surface.  However, the only crucial point for our analysis is the fact that there is a bulk surface which is associated with the field theory entanglement.  } In this limit, the  holographic entanglement entropy prescription of RT/HRT associates the entanglement entropy corresponding to certain spatial region, now thought of as living on the boundary of asymptotically-AdS bulk geometry, to the quarter-area of a certain (extremal) bulk surface homologous to that region.  The association of boundary regions with bulk surfaces will allow us to construct natural information quantities, by identifying classes of boundary region configurations for which these quantities vanish identically.  

Unpacking this statement and identifying a clear algorithm that directs the search for holographic information quantities will be the primary subject of this work.  To this end, we will develop a broad framework for deriving a specific set of information quantities. We will demonstrate the efficacy of our strategy in reproducing known results for a small number of subsystems. These ideas can be easily understood in the case of bipartite systems, and are powerful enough to allow for generalization to arbitrary number of subsystems.   Moreover, since our arguments are quite general, and do not rely on using the RT (as opposed to HRT) prescription, our method will apply to general time-dependent states of the field theory.

It is worth noting that the information quantities we construct using holography can in fact transcend the context of their origin, as is the case for the tripartite information.  Hence one can view our constructions as a new quantum information theoretic tool for obtaining novel information quantities which usefully characterize the entanglement structure of multipartite quantum systems.  It is therefore particularly useful here that our methodology applies equally well for any number of partitions, and is not restricted to static situations. 

In the holographic context our framework is complementary to the holographic entropy cone construction of \cite{Bao:2015bfa}, as we further explain below: rather than focusing on the entropy vectors, we work with entropy relations (corresponding to hyperplanes bounding the cone), which absolves us of having to consider multi-boundary wormholes or making cutoff-dependent statements.  We therefore view the hyperplanes (i.e., the entropy relations) as the more fundamental.  Correspondingly, this should allow us to make closer contact with the physical content of the associated information quantities.

The present work establishes the foundation for a boarder program that will be developed in a sequence of publications \cite{Hubeny:2018ab,Hubeny:2018aa}.  Future investigations will be organized according to three main complementary directions. As mentioned above, the primary goal is to find new information quantities of relevance in holography. We hope to do much more and in fact believe that one can extract the entire collection of \emph{primitive information quantities} (primitive here referring to an irreducible unit as we shall define later), in full generality, for any number of parties. A signature that this might indeed be possible comes from the main result of the present paper. As we will see, under two simple restrictions on the topology of allowed field theory regions and entangling surfaces, one can prove a general result (the {``${\bf I}_{\sf n}$-}Theorem"~\ref{thm:In}) which derives all possible primitive information quantities consistent with this restriction, for an arbitrary number of parties. The next step involves lifting these restrictions and correspondingly extracting more interesting information quantities. The fact that we are able to gain sufficient insight from restricted configurations of regions suggests that as we scan over more complex situations we will be able to uncover more structure.

The second direction aims at establishing a clearer connection to \textit{entropy inequalities} and the general structure of the holographic entropy cone. In the present work, we will show that in the particular case of three parties, the primitive information quantities emerging from our framework yield precisely the 3-party holographic entropy inequalities. This however is not the case for four or more parties, namely there are primitive information quantities which in general do not have a definite sign, even holographically. The plan is then to first construct a `sieve' that can be used to efficiently extract, for any number of parties, a set of `good' candidate inequalities from the set of all primitive information quantities. Then one would want to prove that these candidates are indeed new inequalities which hold for arbitrary dynamical spacetimes.

Finally, the underlying motivation of these efforts is to understand the implications of  holographic entropy inequalities for the entanglement structure of geometric states. As we will explain in due course, it is conceivable that not only the inequalities, but the full structure of the \textit{hyperplane arrangement} of the primitive information quantities, might play an important role. To this end having a framework that allows for efficient exploration of  this object is a necessary first step. We will already see a few glimpses of patterns towards the end of our discussion (see also earlier comments in \cite{Headrick:2013zda} and more recent work \cite{Cui:2018dyq}), but we hope to make  clear that there is more information to be mined here.

The plan of the paper is as follows. In \S\ref{sec:overview} we provide a first informal introduction to our framework, using intuition from the simple cases of two and three parties. In \S\ref{sec:formalization} we proceed with the formalization of the framework and present an overview of the logic that one can follow to derive primitive information quantities, at least in principle, for an arbitrary number of parties. The simple case of three parties is covered in detail in \S\ref{sec:three_parties}. In particular, we will see that the tripartite information falls out very naturally from this procedure, which one can then view as a holographic construction of the tripartite information, and consequently (using additional arguments to prove sign-definiteness) of MMI. The most far reaching result of the present work is the ${\bf I}_{\sf n}$-Theorem \ref{thm:In},  presented in \S\ref{sec:multipartite_information}. As mentioned above, we view it as the first step towards the systematic derivation of all primitive information quantities for an arbitrary number of parties. A more detailed presentation of the plan for future investigations, in relation to the findings presented here, and other interesting open questions are described in \S\ref{sec:discuss}. 

\section{Overview of the framework}
\label{sec:overview}

We begin with a non-technical overview of the framework which will be developed in the rest of the paper. In \S\ref{subsec:overview2} we consider the simplest case of bipartite systems and use it to review the notions of \textit{entropy space}, \textit{entropy vectors} and \textit{entropy cones}. The focus will be on the distinction between quantum mechanics of finite dimensional Hilbert spaces, where entropies are finite, and quantum field theory, where entropies are generically infinite. We will show how this crucial difference suggests that in quantum field theory it is preferable to attribute a fundamental role to \textit{entropy relations}, rather than to \textit{entropy values}. Furthermore, we will explain how for holographic states, the RT/HRT prescription naturally identifies a particular class of such relations. In \S\ref{subsec:overview3} we will introduce the generalization to an arbitrary number of parties and use the intuition from the case of tripartite systems to motivate the definition of the \textit{primitive information quantities} that we want to derive.

\subsection{Entropy constructs for bipartite systems}
\label{subsec:overview2}

To understand the form of information quantities we are after, it is useful to begin our discussion in the familiar context of bipartite systems. Even though our primary interest will be in holographic field theories, it will be helpful to understand the constructs both in simple finite dimensional quantum systems and in a general quantum field theory, which we will therefore do before turning to the aspects that are more naturally suggested by holography.

\subsubsection{Case 1: Finite quantum systems}
\label{subsec:overview2a}

Consider a bipartite Hilbert space $\mathcal{H}_{\mathcal{A}}\otimes\mathcal{H}_{\mathcal{B}}$ and a density matrix $\rho_{\mathcal{AB}}$ acting on it. Starting from $\rho_{\mathcal{AB}}$ we can construct the reduced density matrices $\rho_{\mathcal{A}}$ and $\rho_{\mathcal{B}}$ by tracing out the subsystems $\mathcal{B}$ and $\mathcal{A}$ respectively. For each of these three density matrices, the original one and the two marginals, we can then compute the von Neumann entropy $S(\rho)$. We collect these entropies into a vector ${\bf S}(\rho_{\mathcal{AB}})=\{S_\mathcal{A},S_{\mathcal{B}},S_{\mathcal{AB}}\}\in\mathbb{R}_+^3$ which we will call an \textit{entropy vector}. The space $\mathbb{R}_+^3$ where these vectors live will be referred to as \textit{entropy space}. The collection of all possible entropy vectors, for all possible density matrices and Hilbert spaces, has a complicated structure, but its topological closure is a convex cone, known as the \textit{quantum entropy cone} \cite{Pippenger:2003aa}. 

Furthermore, in the case of bipartite systems, this cone has a remarkably simple structure. It is a polyhedral cone corresponding to the intersection of the half-spaces specified by three inequalities \cite{Pippenger:2003aa}, namely \textit{subadditivity} (SA) and two permutations of the \textit{Araki-Lieb inequality} (AL),
\begin{align}
\text{SA:}\qquad &S_\mathcal{A}+S_\mathcal{B}\geq S_{\mathcal{AB}}
\label{eq:sa}\\
\text{AL:}\qquad &S_\mathcal{A}+S_\mathcal{AB}\geq S_{\mathcal{B}}\label{eq:al}\\
							&S_\mathcal{B}+S_\mathcal{AB}\geq S_{\mathcal{A}}\nonumber
\label{eq:bipartite_inequalities}
\end{align}
If we think of $\{S_\mathcal{A},S_{\mathcal{B}},S_{\mathcal{AB}}\}$ as variables, the equations associated to the saturation of these inequalities can be interpreted as planes in entropy space. This geometric representation will be very convenient in the following. We remind the reader that although formally different, and therefore associated to different planes in entropy space, SA and AL are in fact physically equivalent, since each inequality implies the other. To see that this is the case one can start from SA, introduce the purification $\mathcal{O}$ of the system $\mathcal{AB}$, replace $S_\mathcal{B}$ and $S_\mathcal{AB}$ with the entropies of the complementary subsystems, and relabel $\mathcal{O}\rightarrow\mathcal{B}$. This kind of relation between different inequalities will be ubiquitous also in the multipartite generalization and we will say that one inequality is mapped to the other under the \textit{purification symmetry}. 

Any polyhedral cone has an equivalent description in terms of a finite number of generators called \textit{extremal rays}.\footnote{ By definition, these are one-dimensional subspaces of the entropy space -- they are simply rays emanating from the origin which generate the polyhedral cone.  In particular, any vector within the cone can be obtained as a conical combination of the extremal rays.} For the bipartite quantum entropy cone, the extremal rays are  generated by the following vectors:
\begin{equation}
{\bf S}^{\text{ext}}_1=\{1,1,0\},\; \quad {\bf S}^{\text{ext}}_2=\{1,0,1\},\; \quad {\bf S}^{\text{ext}}_3=\{0,1,1\} \,.
\end{equation}
The entropies of the first vector are trivially realized by any pure state $\ket{\psi}_{\mathcal{AB}}$. More generally, we can consider a state $\ket{\psi}_{\mathcal{AB}}\otimes\ket{\phi}_\mathcal{O}$ and realize the other two vectors by simply relabeling the subsystems, respectively as $\ket{\psi}_{\mathcal{AO}}\otimes\ket{\phi}_\mathcal{B}$ and $\ket{\psi}_{\mathcal{OB}}\otimes\ket{\phi}_\mathcal{A}$. Notice that since these states realize the vectors which generate the extremal rays, each of them simultaneously saturates two of the three inequalities which specify the cone. This in fact must be the case, since the extremal rays lie precisely at the intersections of the planes corresponding to the saturation of the inequalities which specify the cone. The bipartite quantum entropy cone and its extremal rays are shown in Fig.~\ref{fig:bipartite_cone}.

\begin{figure}[]
\centering
\includegraphics[width=0.5\textwidth]{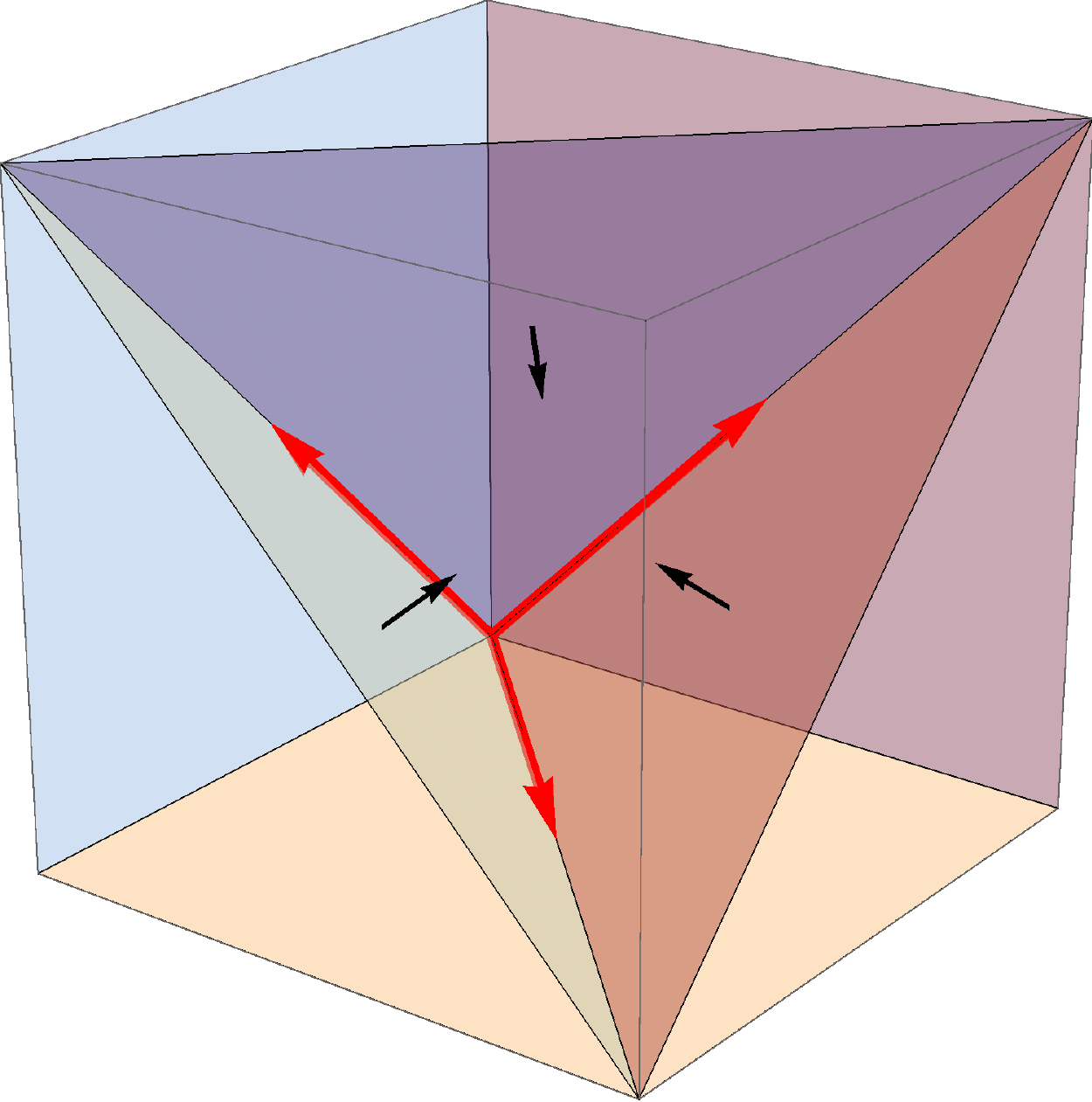}
\put(8,60) {\makebox(0,0) {\small{$S_\mathcal{A}$}}}
\put(-220,50) {\makebox(0,0) {\small{$S_\mathcal{B}$}}}
\put(-115,228) {\makebox(0,0) {\small{$S_\mathcal{AB}$}}}
\caption{The bipartite quantum entropy cone (delimited by the diagonal planes, with short black arrows indicating the direction prescribed by SA and AL) and its extremal rays (long red arrows) embedded in entropy space $\mathbb{R}_+^3$.}
\label{fig:bipartite_cone}
\end{figure}

It is important to stress the difference between the set defined as the collection of entropy vectors realized by all possible bipartite quantum states, and the entropy cone specified by the inequalities, which is its topological closure. Although, as we showed, it is straightforward to construct quantum states that realize the vectors which generate all the extremal rays of the bipartite entropy cone, it is not true that an arbitrary conical combination (viz., a linear combination with non-negative coefficients) of these vectors can be \textit{exactly} realized by a quantum state. In particular it is important to notice that the \textit{holographic entropy cone} \citep{Bao:2015bfa} was defined as a collection of finite\footnote{ 
While the authors of \citep{Bao:2015bfa} were interested in holographic field theories where entanglement is plagued by UV divergences, finiteness was achieved by considering states in the tensor product of a set of holographic field theories. Geometrically these states correspond to multi-boundary wormhole geometries, and by restricting the allowed subsystems to be entire boundaries, one has finite entanglement (per unit spatial volume).} entropy vectors, and not as a region of entropy space bounded by a set of inequalities.\footnote{ 
The complications of the quantum mechanical case do not arise in the holographic context, where the collection of entropy vectors automatically coincides with its topological closure. Specifically, if one can construct geometries that realize the generators of the extremal rays, it is guaranteed that any other vector within the cone can also be realized by some geometry, see \citep{Bao:2015bfa} for more details.} The latter perspective will instead characterize our approach.

\subsubsection{Case 2: Quantum field theory}
\label{subsec:overview2b}

To explain why it is preferable to delineate regions in entropy space defined by inequalities, it will be useful to first extend the previous construction to a quantum field theory. On a Cauchy slice of the background spacetime the field theory lives on, consider a configuration $\c$ of two subsystems $\mathcal{A}$ and $\mathcal{B}$. We can construct the entropy vector associated to the corresponding density matrix\footnote{
Of course,  the reduced density matrix depends both on the configuration as well as on the total state.  However,  in the interest of avoiding unnecessary clutter of notation, and to indicate what will be the more crucial aspect in what follows, we will explicitly write only the configuration dependence, leaving the state dependence implied.
}
as in the quantum mechanical case. However, since in quantum field theory the von Neumann entropy is generically infinite, the interpretation of this vector is unclear. One possibility is to fix a regulator $\epsilon$ and consider the entropy vector ${\bf S}_\epsilon(\c)$, with all entropies finite by construction.
 However, the values of the various entropies now have no intrinsic physical meaning, since they depend on the regulator.\footnote{
In fact, the regulator need not be a constant value over all space (especially in conformal field theories where there is no intrinsic meaning to a scale), so ${\bf S}_\epsilon(\c)$ is determined not just by a parameter $\epsilon$ but by the function $\epsilon(\vec{x})$.
} In particular, by locally varying the regulator, one can obtain an infinite collection of entropy vectors ${\bf S}_\epsilon(\c)$ which will in general not be proportional to each other. Therefore in quantum field theory one is forced to associate a configuration of subsystems to an infinite collection of finite entropy vectors, rather than to a single one, as was the case for finite dimensional Hilbert spaces. Furthermore, this collection of finite entropy vectors will generally span the whole entropy space, thereby preventing us from identifying a particular location associated to the configuration $\c$, unlike the quantum mechanical case.

However, in some particular circumstances, the unregulated entropies satisfy some cutoff-independent relation. This is the case when the individual divergences cancel in a universal way, which only becomes apparent as we remove the cutoff.
Consider for example
 the \textit{mutual information}
\begin{equation}
{\bf I}_2(\mathcal{A}:\mathcal{B})\equiv S_\mathcal{A}+S_\mathcal{B}-S_{\mathcal{AB}} \,,
\label{eq:midef}
\end{equation}
and for simplicity take a pair of intervals $\mathcal{A}$ and $\mathcal{B}$ of fixed sizes $\ell_\mathcal{A}$ and $\ell_\mathcal{B}$ on a time slice of a (1+1)-dimensional CFT. At finite cutoff $\epsilon$ the vectors  ${\bf S}_\epsilon(\c)$ will span the full entropy space $\mathbb{R}_+^3$. Let us however examine what happens as we take the cut-off $\epsilon \to 0$.  While each term in Eq.~\eqref{eq:midef} diverges, these divergences cancel so as to render ${\bf I}_2(\mathcal{A}:\mathcal{B})$ not only finite (for separation $x>0$ between the two intervals), but cut-off independent. In particular, this finite value ${\bf I}_2({\bf S}(\c))$ has physical significance since it is independent from the regulator scheme.
This means that although the (unregulated) entropy vector ${\bf S}(\c)$ is divergent, we can think of it as being \textit{localized} on a hyperplane defined by the following relation
\begin{equation}
S_\mathcal{A}+S_\mathcal{B}-S_{\mathcal{AB}}={\bf I}_2({\bf S}(\c))\,,
\end{equation}
where we now think of the entropies $\{S_\mathcal{A},S_\mathcal{B},S_\mathcal{AB}\}$ as variables in entropy space. 

As we modify the configuration $\c$, the value of ${\bf I}_2({\bf S}(\c))$ will change and the vector ${\bf S}(\c)$ will be localized on different hyperplanes. In particular, one may wonder if there exists a particular choice of configuration such that this hyperplane corresponds to one of the facets of the quantum entropy cone, specifically 
\begin{equation}
S_\mathcal{A}+S_\mathcal{B}-S_{\mathcal{AB}}=0\,.
\end{equation}
However, in general this is not possible in quantum field theory. In fact, if we increase the separation $x$ between the two intervals, the mutual information decays\footnote{ For conformally invariant theories in general dimensions, the mutual information falls off as a power-law  and exponent set by the minimum sum of scaling dimensions of two operators whose operator product contains the vacuum \cite{Cardy:2013nua}. It is natural to expect that for gapped systems one would see an exponential decay.} but it never vanishes exactly because it is lower bounded by correlation functions of operators supported on the two intervals \cite{Wolf:2007aa}. The situation is vastly improved for geometric states in holographic theories, to which we turn next.

\subsubsection{Case 3: Holographic field theories}
\label{subsec:overview2b}

Thus far our discussion has not used any information about the existence of gravitational duals of the field theory. Here we have an additional parameter at hand to dial, viz., the central charge set by $N$. We will now recall the special features that occur when $N\to \infty$ and motivate therefrom a set of quantities that will be explicitly regulator-independent.  While the logic for choosing the information quantities we focus on is predicated on holography,  as apparent from the above discussion, the quantities themselves will be cut-off independent and therefore should be of interest in quantum field theories more generally. 

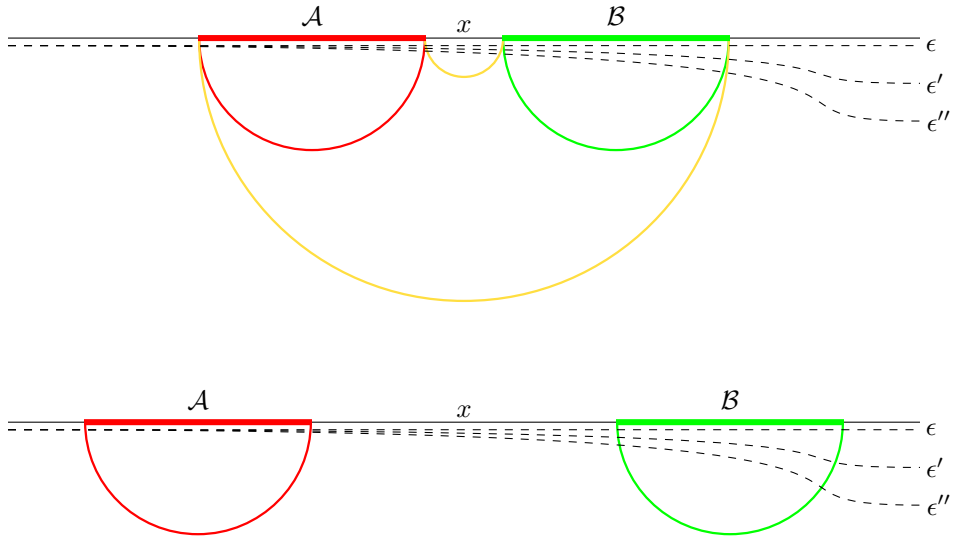
\begin{figure}[t!]
\centering
\begin{subfigure}{1\textwidth}
\centering
\begin{tikzpicture}
\draw (-6,0) -- (6,0);
\draw[line width=0.3mm, red] (-3.485,0) arc (-180:0:1.485);
\draw[line width=0.3mm, green] (0.515,0) arc (-180:0:1.485);
\draw[line width=0.3mm, Goldenrod] (-0.515,0) arc (-180:0:0.515);
\draw[line width=0.3mm, Goldenrod] (-3.485,0) arc (-180:0:3.485);
\draw [line width=0.8mm, red] (-3.5,0) -- (-0.5,0); 
\draw [line width=0.8mm, green] (0.5,0) -- (3.5,0); 
\draw[dashed] (-6,-0.1) -- (6,-0.1);
\draw[dashed] (-6,-0.1) .. controls (8,-0.1) and (3,-0.6) ..  (6,-0.6);
\draw[dashed] (-6,-0.1) .. controls (8,-0.1) and (3,-1.1) .. (6,-1.1);
\node at (-2,0.3) {\footnotesize{$\mathcal{A}$}};
\node at (2,0.3) {\footnotesize{$\mathcal{B}$}};
\node at (0,0.15) {\footnotesize{$x$}};
\node at (6.15,-0.1) {\footnotesize{$\epsilon$}};
\node at (6.2,-0.6) {\footnotesize{$\epsilon'$}};
\node at (6.25,-1.1) {\footnotesize{$\epsilon''$}};
\end{tikzpicture}
\end{subfigure}

\vspace{1cm}

\begin{subfigure}{1\textwidth}
\centering
\begin{tikzpicture}
\draw (-6,0) -- (6,0);
\draw[line width=0.3mm, red] (-4.985,0) arc (-180:0:1.485);
\draw[line width=0.3mm, green] (2.015,0) arc (-180:0:1.485);
\draw [line width=0.8mm, red] (-5,0) -- (-2,0); 
\draw [line width=0.8mm, green] (2,0) -- (5,0); 
\draw[dashed] (-6,-0.1) -- (6,-0.1);
\draw[dashed] (-6,-0.1) .. controls (8,-0.1) and (3,-0.6) ..  (6,-0.6);
\draw[dashed] (-6,-0.1) .. controls (8,-0.1) and (3,-1.1) .. (6,-1.1);
\node at (-3.5,0.3) {\footnotesize{$\mathcal{A}$}};
\node at (3.5,0.3) {\footnotesize{$\mathcal{B}$}};
\node at (0,0.15) {\footnotesize{$x$}};
\node at (6.15,-0.1) {\footnotesize{$\epsilon$}};
\node at (6.2,-0.6) {\footnotesize{$\epsilon'$}};
\node at (6.25,-1.1) {\footnotesize{$\epsilon''$}};
\end{tikzpicture}
\end{subfigure}
\caption{A configuration (top) which does not saturate SA. Since the entropies are computed by three different surfaces, the three entropy vectors ${\bf S}_\epsilon(\c)$, ${\bf S}_{\epsilon'}(\c)$ and ${\bf S}_{\epsilon''}(\c)$ can all be made independent from each other by an appropriate choice of regulators, and therefore span the whole $\mathbb{R}_+^3$. When the configuration saturates SA (bottom), the relation ${\bf I}_2(\mathcal{A}:\mathcal{B})=0$ holds for any choice of cut-off and the resulting entropy vectors only span a two-dimensional plane.}
\label{fig:cut-off_independence}
\end{figure}

In situations where the quantum field theory is holographic and the state under consideration is dual to a classical geometry, the Ryu-Takayanagi formula implies that at leading order in $N$, the mutual information  can vanish exactly and subadditivity is \textit{saturated} \cite{Headrick:2010zt}. This occurs when the bulk minimal surface whose area computes the entropy $S_\mathcal{AB}$ is the union of the surfaces which compute the entropy of $S_\mathcal{A}$ and $S_\mathcal{B}$ individually. Therefore it is clear that while the values of the entropies depend on the cut-off, the saturation of subadditivity is achieved independently from the choice of regulator, see Fig.~\ref{fig:cut-off_independence}. In the following, the crucial fact for us will be that in this particular case, the (infinite) collection of entropy vectors associated to the configuration only spans the plane associated to the saturation of subadditivity, and not the whole entropy space.

Similarly, it is possible to find configurations that saturate the inequalities corresponding to the two other facets of the bipartite quantum entropy cone, which are just the two permutations of the AL inequality. An example is shown in Fig.~\ref{fig:cut-off_independence2}, again for a (1+1)-dimensional CFT. 

Finally, following the quantum mechanical construction described above, we can also identify in field theory the configurations that realize the extremal rays of the cone. It is sufficient to consider the state $\ket{0}_{\text{CFT}_1}\otimes\ket{0}_{\text{CFT}_2}$, where $\ket{0}$ is the vacuum in the two CFTs, and consider an arbitrary bipartition of one of the two factors. As in the quantum mechanical case, one gets all the extremal rays by different choices of labels for the subsystems. Furthermore, notice that since in this case there is just a single bulk surface which computes the entropies, the collection of finite entropy vectors obtained by different choices of the regulator now only spans a one dimensional subspace, i.e., the extremal ray.

The main lesson we wish to draw is that while entropy vectors are in general ambiguous in quantum field theory, and a generic configuration of subsystems is associated to an infinite collection of them, there exists specific entropy relations which holographically  hold exactly (at leading order in $N$), independently from the choice of a regulator. The most explicit manifestation of this fact is that the collection of regulated entropy vectors only span a lower dimensional subspace, instead of the whole entropy space. As will be more clear later, this is possible only because of the particular structure of the information quantities that we considered. Our strategy in what follows will be to turn this argument around, and use regulator independence as a constraint in searching for new multipartite information theoretic quantities which are natural from a holographic perspective (and thereby potentially more generally).

\begin{figure}[t!]
\centering
\begin{tikzpicture}
\draw (-6,0) -- (6,0);
\draw[line width=0.3mm, red] (-2.985,0) arc (-180:0:2.985);
\draw[line width=0.3mm, green] (-0.486,0) arc (-180:0:0.486);
\draw[line width=0.3mm, red] (-0.5165,0) arc (-180:0:0.5165);
\draw [line width=0.8mm, red] (-3,-0) -- (-0.5,0); 
\draw [line width=0.8mm, red] (0.5,0) -- (3,0); 
\draw [line width=0.8mm, green] (-0.5,0) -- (0.5,0);
\draw[dashed] (-6,-0.1) -- (6,-0.1);
\draw[dashed] (-6,-0.1) .. controls (4,-0.1) and (0.6,-0.6) ..  (6,-0.6);
\draw[dashed] (-6,-0.1) .. controls (4,-0.1) and (0.6,-1.1) .. (6,-1.1);
\node at (-1.75,0.3) {\footnotesize{$\mathcal{A}$}};
\node at (1.75,0.3) {\footnotesize{$\mathcal{A}$}};
\node at (0,0.3) {\footnotesize{$\mathcal{B}$}};
\node at (6.15,-0.1) {\footnotesize{$\epsilon$}};
\node at (6.2,-0.6) {\footnotesize{$\epsilon'$}};
\node at (6.25,-1.1) {\footnotesize{$\epsilon''$}};
\end{tikzpicture}
\caption{A choice of subsystems which saturates the AL inequality 
(namely $S_\mathcal{B}+S_\mathcal{AB}= S_{\mathcal{A}}$) 
independently from the regulator.}
\label{fig:cut-off_independence2}
\end{figure}
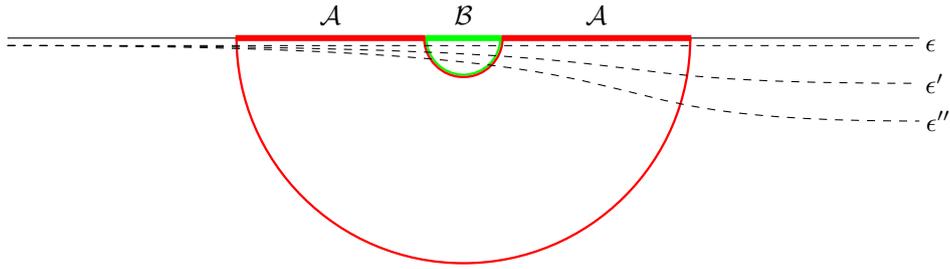

\subsection{Conditions for fundamental entropy relations}
\label{subsec:overview3}

The quantum mechanical definitions of entropy vectors and entropy space, introduced in the previous sections for two parties, naturally generalize to the multipartite setting, where the Hilbert space has $\sf{N}$ factors $\mathcal{H}_{\regA_1}\otimes \mathcal{H}_{\regA_2}\otimes ...\otimes\mathcal{H}_{\regA_{\sf N}}$. Entropy vectors are now defined as
\begin{equation}
{\bf S}(\rho_{\regA_1\regA_2...\regA_{\sf N}})=\{S_{\regA_1},S_{\regA_2},...,S_{\regA_{\sf N}},S_{\regA_1\regA_2},...,S_{\regA_1\regA_2...\regA_{\sf N}}\}\in\mathbb{R}_+^{\sf D} \ ,
\end{equation}
so the entropy space is $\mathbb{R}_+^{\sf D}$, with ${\sf D}=2^{\sf N}-1$. To indicate the subsystems of interest, in the rest of this paper we will use the notation $\{\regA_1,\regA_2,...,\regA_{\sf N}\}$ when ${\sf N}$ is unspecified, and switch to $\{\mathcal{A},\mathcal{B},\mathcal{C},...\}$ when we work with fixed small values of ${\sf N}$. 

For finite dimensional Hilbert spaces, one can again consider the collection of all vectors realized by all density matrices. It can be proved 
\citep{Pippenger:2003aa} that the topological closure of this set is a convex cone for any ${\sf N}$. Very little is known about this cone for arbitrary ${\sf N}$ \cite{Cadney:2011aa}. However, it was proven in \cite{Hayden:2016cfa} that the holographic entropy cone of \citep{Bao:2015bfa} is a proper subset of the quantum entropy cone for any ${\sf N}\geq 3$. In the following, by \textit{${\sf N}$-partite entropy cone} we will mean the region of entropy space bounded by all the ${\sf N}$-party inequalities 
(yet to be determined)
which are satisfied by entropies computed via the HRT formula.\footnote{ For further comments about the relation between the quantum and holographic entropy cone see also \cite{Marolf:2017shp,Rota:2017ubr}.}

We will be interested in \textit{information quantities} $\bQ$ of the general form
\begin{equation}
\bQ({\bf S})=\sum_{\si=1}^{\sf D}\; \qcf{\si} \, S_{\si},\qquad \qcf{\si} \in\mathbb{R} \ ,
\label{eq:info_quantity}
\end{equation}
for different values of ${\sf N}$, where the summation index $\si$ invokes all combinations of subsystems $\regA_\ell$ (see \S\ref{sec:formalization} for a precise definition).
It will again be convenient to have a geometric representation of these quantities. If we think of the components $S_\si$ of an entropy vector as variables, an \textit{entropy relation} of the form $\bQ({\bf S})=0$ represents a codimension-one hyperplane in entropy space, specified by the coefficients $\{\qcf{\si}\}$.
 An entropy vector ${\bf S}$ (being a finite vector ${\bf S}(\rho_{\regA_1\regA_2...\regA_{\sf N}})$ in quantum mechanics or a regulated vector ${\bf S}_\epsilon(\c_{\sf N})$ in quantum field theory) belongs to this hyperplane if it satisfies the equation $\bQ({\bf S})=0$. We will henceforth think of any information quantity $\bQ$ as being associated to the corresponding hyperplane.\footnote{ Following the discussion in \S\ref{subsec:overview2b}, one could more generally associate to an information quantity ${\bf Q}$, an entire family of parallel hyperplanes. However, the fact that such a quantity \textit{can} vanish, will be crucial in our construction and therefore motivates our choice. We will comment again on the more general identification of information quantities and families of (rather than single) hyperplanes in \S\ref{sec:discuss}.}

In the preceding discussion we have seen that for geometric states in holographic theories (at leading order in $N$), it is possible to make sense of this relation independently of the cut-off, at least for some specific quantities $\bQ$. This motivates our first definition of the information quantities of interest.
\begin{definition}
An entropic information quantity of the form \eqref{eq:info_quantity} will be said to be \emph{faithful} if there exists at least one geometric  state, and at least one (sufficiently generic\footnote{ 
\label{fn:generiticity}
We define a configuration $\c_{\sf N}$ to be considered sufficiently generic if the bulk extremal surface that computes the entropy of any subsystem varies continuously under continuous deformations of $\c_{\sf N}$, or equivalently if to each entangling surface there exists a unique (globally minimal) extremal surface (which in particular disallows configurations fine-tuned to phase transitions of minimal surfaces).  Moreover, we require that it has at least one connected component anchored to the boundary. The special configurations of \citep{Bao:2015bfa}, where all bulk extremal surfaces are compact, are therefore excluded.}) configuration of subsystems $\c_{\sf N}$, such that to leading order in $N$, 
\\
$\bQ({\bf S}_\epsilon(\c_{\sf N}))=0$ independently from the cut-off $\epsilon$.
\label{def:faithful}
\end{definition}

This definition is also motivated by a second, independent, argument. In the following we will be mostly interested in finding a list of information quantities which are good candidates for new holographic entropy inequalities. However, it is straightforward to generate infinitely many trivial inequalities which are necessarily satisfied. In fact, one can associate such a trivial inequality to any information quantity associated to a hyperplane that intersects the cone only at the origin; for a pictorial representation see $\bQ^{(1)}$ in Fig.~\ref{fig:redundant_inequalities}. 
Requiring that our information quantities be faithful then manifestly removes all such inequalities from our search.  

However, by itself, this requirement is still very weak, as we argue momentarily. We will refer to a combination of entropies as \textit{balanced} if for each of the ${\sf N}$ subsystems $\regA_\ell$ we have
\begin{equation}
 \sum_{\si\;\text{s.t.}\;\ell\in\si} \qcf{\si} =0 \,,
\end{equation}
where the sum is over all collections of subsystems which include $\regA_\ell$.  
In other words, the occurrence of each $\regA_\ell$ by itself in \eqref{eq:info_quantity}   (ignoring all the others) would cancel out. 
 According to this definition, it then follows that any 
\textit{balanced} $\bQ$ is \textit{faithful}. As an explicit example of a configuration $\c_{\sf N}$ which implements a balanced information quantity, consider ${\sf N}$ intervals of the same length on a time slice of a geometric state  in a (1+1)-dimensional holographic CFT, where all intervals are sufficiently separated form each other, such that the mutual information between any of them and the union of all the others vanish.

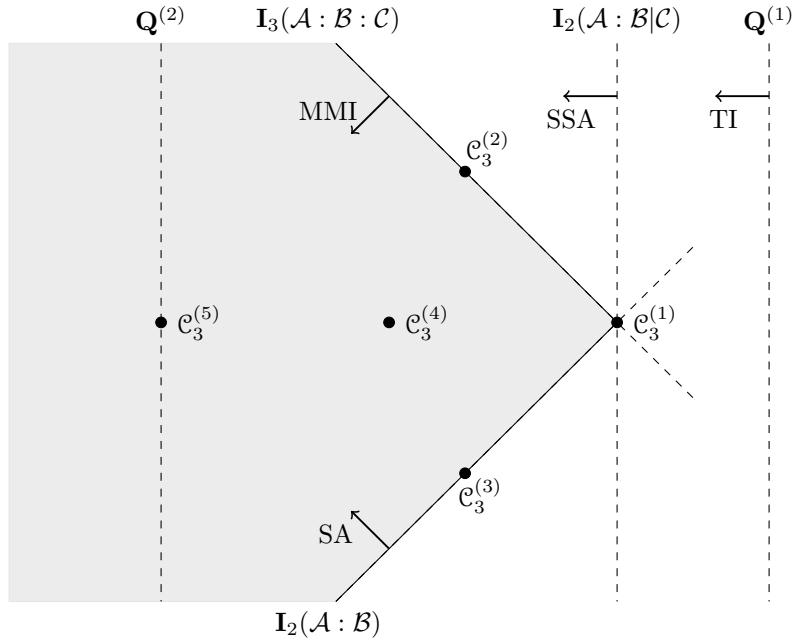
\begin{figure}[t!]
\centering
\begin{tikzpicture}
\filldraw[fill=gray!15,draw=gray!20] (-3.7,3.7) -- (0,0) -- (-3.7,-3.7) -- (-8,-3.7) -- (-8,3.7) -- cycle;
\draw[dashed] (-3.7,3.7) -- (1,-1);
\draw[dashed] (-3.7,-3.7) -- (1,1);
\draw[dashed] (0,3.7) -- (0,-3.7);
\draw[dashed] (2,3.7) -- (2,-3.7);
\draw[dashed] (-6,3.7) -- (-6,-3.7);
\draw (-3.7,3.7) -- (0,0);
\draw (-3.7,-3.7) -- (0,0);
\draw[->, line width=0.25mm] (-3,3) -- (-3.5,2.5);
\draw[->, line width=0.25mm] (-3,-3) -- (-3.5,-2.5);
\draw[->, line width=0.25mm] (0,3) -- (-0.71,3);
\draw[->, line width=0.25mm] (2,3) -- (1.29,3);
\node at (-3.8,4) {\footnotesize{${\bf I}_3(\mathcal{A}:\mathcal{B}:\mathcal{C})$}};
\node at (-3.8,-4) {\footnotesize{${\bf I}_2(\mathcal{A}:\mathcal{B})$}};
\node at (0,4) {\footnotesize{${\bf I}_2(\mathcal{A}:\mathcal{B}|\mathcal{C})$}};
\node at (2,4) {\footnotesize{$\bQ^{(1)}$}};
\node at (-6,4) {\footnotesize{$\bQ^{(2)}$}};
\filldraw [black] (0,0) circle (2pt);
\filldraw [black] (-2,2) circle (2pt);
\filldraw [black] (-2,-2) circle (2pt);
\filldraw [black] (-3,0) circle (2pt);
\filldraw [black] (-6,0) circle (2pt);
\node at (0.5,0) {\footnotesize{$\c_3^{(1)}$}};
\node at (-1.7,2.3) {\footnotesize{$\c_3^{(2)}$}};
\node at (-1.8,-2.3) {\footnotesize{$\c_3^{(3)}$}};
\node at (-2.5,0) {\footnotesize{$\c_3^{(4)}$}};
\node at (-5.5,0) {\footnotesize{$\c_3^{(5)}$}};
\node at (-3.8,2.8) {\footnotesize{MMI}};
\node at (-3.7,-2.8) {\footnotesize{SA}};
\node at (-0.6,2.7) {\footnotesize{SSA}};
\node at (1.4,2.7) {\footnotesize{TI}};
\end{tikzpicture}
\caption{A schematic representation of the transverse section of the three parties entropy cone in $\mathbb{R}_+^7$. The interior  of the cone (shaded) is bounded by the hyperplanes (represented by solid lines) associated to ${\bf I}_2(\mathcal{A}:\mathcal{B})$ and ${\bf I}_3(\mathcal{A}:\mathcal{B}:\mathcal{C})$ (the arrows show the corresponding inequalities, respectively SA and MMI). $\bQ^{(1)}$ is an information quantity which is not faithful (i.e., it does not satisfy Definition~\ref{def:faithful}) and therefore corresponds to a trivial inequality (TI). 
SSA is redundant and can only be saturated by a configuration ($\c^{(1)}_3$) that simultaneously saturates SA and MMI. $\c^{(2)}_3$ and $\c^{(3)}_3$ are configurations that only individually saturate MMI and SA, but none of the other inequalities corresponding to the facets of the cone, and therefore generate ${\bf I}_2(\mathcal{A}:\mathcal{B})$ and ${\bf I}_3(\mathcal{A}:\mathcal{B}:\mathcal{C})$ respectively. $\c^{(4)}_3$ does not saturate any inequality and its entropy vectors span the whole cone. $\bQ^{(2)}$ is a hypothetical fundamental information quantity, generated by the configuration $\c^{(5)}_3$, which does not correspond to a new inequality.
}
\label{fig:redundant_inequalities}
\end{figure}

To introduce the second and more stringent condition on the information quantities of interest, it is useful to look in more detail at the particular case of three subsystems. In this case, as for bipartite systems, the quantum entropy cone is again polyhedral. Some of the inequalities that specify this cone are inherited from the bipartite case (see \S\ref{sec:three_parties} for more details), while among the new ones are the possible permutations of \textit{strong subadditivity} (SSA) and \textit{weak monotonicity} (WM)
\begin{align}
\text{SSA:}\qquad &S_\mathcal{AC}+S_\mathcal{BC}\geq S_{\mathcal{C}}+S_{\mathcal{ABC}}  \\
\text{WM:}\qquad &S_\mathcal{AC}+S_\mathcal{BC}\geq S_{\mathcal{A}}+S_{\mathcal{B}} 
\end{align}
As for SA and AL, WM and SSA are equivalent to each other under the purification symmetry.

Furthermore, for holographic states, to leading order in $N$, an additional inequality, proven in \cite{Hayden:2011ag},\footnote{ While the proof of \cite{Hayden:2011ag} was limited to the time-reversal symmetric states, the extension to dynamical setting was established in  \cite{Wall:2012uf}.} is the \textit{monogamy of mutual information} (MMI) mentioned in the introduction,
\begin{equation}
S_\mathcal{AB}+S_\mathcal{BC}+S_\mathcal{AC}\geq S_\mathcal{A}+S_\mathcal{B}+S_\mathcal{C}+S_\mathcal{ABC} \,.
\label{eq:mmidef}
\end{equation}
These two inequalities, namely SSA and MMI, are associated to two information quantities known as the \textit{conditional mutual information} and the \textit{tripartite information}, respectively, 
\begin{align}
{\bf I}_2(\mathcal{A}:\mathcal{B}|\mathcal{C}) &\equiv S_\mathcal{AC}+S_\mathcal{BC}-S_{\mathcal{C}}-S_{\mathcal{ABC}}\\
{\bf I}_3(\mathcal{A}:\mathcal{B}:\mathcal{C}) &\equiv S_{\mathcal{A}}+S_{\mathcal{B}}+S_{\mathcal{C}}-S_\mathcal{AB}-S_\mathcal{AC}-S_\mathcal{BC}+S_\mathcal{ABC}
\end{align}
In terms of these quantities, SSA  can be rephrased as the statement that the conditional mutual information is always non-negative, and similarly MMI says that the tripartite information is non-positive.\footnote{ The notation for conditional mutual information is chosen to emphasize the similarity to the tripartite information and other generalizations which we will encounter in the course of our discussion (although one might argue that $-{\bf I}_3$ is a more natural object than ${\bf I}_3$, and more analogous to ${\bf I}_2$).
The subscripts in ${\bf I}_2,{\bf I}_3$ should be understood as being part of the `name' of a particular information quantity and should not be conflated with the total number of parties ${\sf N}$  (in particular, the arguments of  ${\bf I}_2$ and ${\bf I}_3$ can consist of composite subsystems; we will further comment on the relation between ${\sf N}$ and the number of parties appearing in a specific quantity $\bQ$ in \S\ref{sec:multipartite_information} and \S\ref{sec:discuss}).}

An important fact, already noticed in \citep{Bao:2015bfa}, is that SSA does not correspond to one of the facets of the holographic cone since it is a redundant inequality. A \textit{redundant} inequality is, by definition, one which is implied by other more fundamental inequalities since it can be obtained as a conical combination of them. For  SSA, the generating inequalities are MMI and an appropriate instance of SA, as illustrated in Fig.~\ref{fig:redundant_inequalities}. 
In particular, the redundancy of SSA implies that it can be saturated only by configurations, like $\c^{(1)}_3$ in Fig.~\ref{fig:redundant_inequalities}, which \textit{simultaneously} saturate both MMI and a particular instance of SA. These configurations are characterized by the fact that the corresponding entropy vectors, obtained as before by varying the regulator, only span a codimension-two subspace of entropy space which is the intersection of the two hyperplanes associated to the tripartite and conditional mutual information. This observation motivates our second definition:
\begin{definition}
A faithful information quantity $\bQ$ will be said to be \emph{primitive} if there exists at least one geometric state and one (sufficiently generic\footnote{ See footnote \ref{fn:generiticity}.}) configuration $\c_{\sf N}$ such that 
\begin{itemize}
\item $\bQ({\bf S}_\epsilon(\c_{\sf N}))=0$ independently from the cut-off $\epsilon$, and 
\item for any other information quantity $\,\bQ' \neq k\, \bQ$, with $k\in\mathbb{R}$, the equation $\bQ'({\bf S}_\epsilon(\c_{\sf N}))=0$ cannot hold generically, for an arbitrary choice of cut-off $\epsilon$.
\end{itemize}
\label{def:fund}
\end{definition}
\noindent
We will say that the configuration $\c_{\sf N}$ that satisfies these requirements \emph{generates} the primitive quantity $\bQ$.  

We are now in a position to state the full set of conditions we wish to impose to aid in our search for new information quantities. Namely, we require that for any number of parties, \emph{the information quantities of relevance are precisely the primitive ones.}

Our ultimate goal is to find the set of all primitive information quantities, for any value of ${\sf N}$, and study its properties. Although in this paper we will not answer this hard problem in full generality, we will explain in \S\ref{sec:formalization} how this can be done, at least in principle. In \S\ref{sec:three_parties} we will show that for three parties the primitive information quantities are precisely those that correspond to the facets of the holographic entropy cone. In particular, we will show that ${\bf I}_2(\mathcal{A}:\mathcal{B})$ and ${\bf I}_3(\mathcal{A}:\mathcal{B}:\mathcal{C})$ are primitive according to the previous definition by explicitly constructing the generating configurations ($\c^{(2)}_3$ and $\c^{(3)}_3$ in Fig.~\ref{fig:redundant_inequalities}). It is important to notice that a primitive information quantity does not necessarily correspond to a true holographic inequality, since it can be associated to a hyperplane that `cuts through' the cone (like $\bQ^{(2)}$ in Fig.~\ref{fig:redundant_inequalities}). Although the results of \S\ref{sec:three_parties} will show that for three parties this is not possible, we will see in \S\ref{sec:multipartite_information} that this can happen if ${\sf N}\geq 4$, and we will derive an infinite family of fundamental quantities which generalize ${\bf I}_3(\mathcal{A}:\mathcal{B}:\mathcal{C})$ to ${\sf N}\geq 3$.

\section{Formalization of the construction}
\label{sec:formalization}

We will now develop the formalism that allows us to derive the primitive information quantities defined in \S\ref{sec:overview}. First we explain in \S\ref{subsec:rephrasingRT} how the definitions of faithfulness and primitivity can be more efficiently reformulated by abstracting away from the issue of cut-off dependence. We will then see how the problem of finding the primitive quantities can be reformulated in terms of \textit{combinatorics of connected components of extremal surfaces}, requiring us to perform a scan over all possible geometric states and choices of configurations. Following that, in \S\ref{subsec:redundancy}, we will explain why such a scan is overly redundant, and how an appropriate `gauge fixing' can drastically simplify the problem. Finally, in \S\ref{subsec:organizing}, we will introduce a classification of configurations into families characterized by certain topological properties, which will turn out to be convenient for organizing the scan into various steps, at an increasing level of complexity.

\subsection{Proto-entropy and cut-off independence}
\label{subsec:rephrasingRT}

%
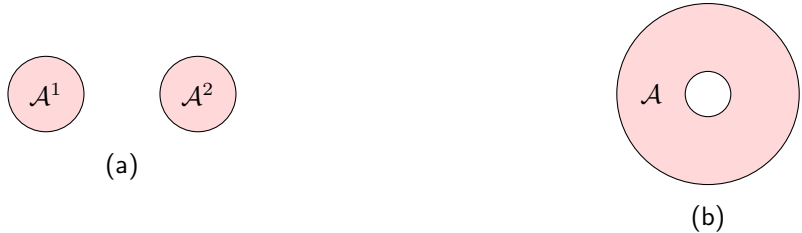
\begin{figure}
\centering
\begin{subfigure}{0.49\textwidth}
\centering
\begin{tikzpicture}
\draw[fill=red!15] (-1,0) circle (0.5cm);
\draw[fill=red!15] (1,0) circle (0.5cm);
\node at (-1,0) {\footnotesize{$\regA^1$}};
\node at (1,0) {\footnotesize{$\regA^2$}};
\end{tikzpicture}
\caption{}
\label{fig:domains_a}
\end{subfigure}
\hfill
\begin{subfigure}{0.49\textwidth}
\centering
\begin{tikzpicture}
\draw[fill=red!15] (0,0) circle (1.2cm);
\draw[fill=white!15] (0,0) circle (0.3cm);
\node at (-0.75,0) {\footnotesize{$\regA$}};
\end{tikzpicture}
\caption{}
\label{fig:domains_b}
\end{subfigure}
\caption{Examples of subsystems whose entropy is computed by a disconnected extremal surface $\extr{\regA}$. In (a) the two disks are sufficiently separated to be uncorrelated. In (b) the entangling surface $\partial\regA$ is disconnected and the ``inner'' circle is taken to be sufficiently small.}
\label{fig:domains}
\end{figure}

In the previous section, to motivate the definition of faithful and primitive information quantities, we have, for simplicity, considered examples where the bulk geometry was static, so that the entropies were computed by minimal surfaces via the RT formula. However, an important feature of our construction is that it does \textit{not} prefer in any way static geometries over dynamical ones. It will be equally valid for time-dependent states where the HRT prescription must be used to compute the holographic entanglement entropy.

Given this, we can consider a general set-up where the bulk is an asymptotically AdS manifold $\bulk$, of arbitrary dimension, with ${\sf M}$ disjoint causally disconnected  boundaries 
$\partial\bulk=\bigcup_{m=1}^{{\sf M}}\partial\bulk_m$. 
The bulk dynamics is dual to the time evolution of the tensor product $\text{CFT}^{\otimes {\sf M}}$ of multiple copies of a holographic CFT living on $\partial\bulk$. The state of the field theories on a Cauchy slice\footnote{
To generalize the notion of Cauchy slice to multiple disconnected (boundary) spacetime components $\partial\bulk_i$, we simply take a collection of Cauchy slices (one for each component), such that initial data on the full collection determines the evolution throughout the entire $\partial\bulk$.
} $\Sigma$ of $\partial\bulk$ is a pure state $\ket{\psi_\Sigma}$. 

In previous examples, we have moreover considered a given subsystem, say $\regA$, to be delineated by a spatial region (e.g., a single interval in the 1+1 dimensional CFT).  We now generalize this notion to allow the subsystem to consist of multiple regions.  To this end, on $\Sigma$  consider a subsystem $\regA=\bigcup_i\regA^i$ defined as the union of an arbitrary number of disjoint\footnote{
We use  the standard definition  of disjoint to disallow any intersection, including those of higher co-dimension, i.e., $\regA^i\cap\regA^j=\emptyset\quad\forall\;i,j.$} \textit{regions} $\regA^i$ distributed across the various boundaries. A  \textit{region} $\regA^i$ (denoted by an \textit{upper index} to distinguish it from subsystem identification) is defined as a connected subset of $\Sigma$, which is naturally associated with a bulk spacetime codimension-$2$ region.\footnote{ When we refer to boundary surfaces as having a particular bulk-codimension we are viewing the boundary as part of the bulk spacetime (at least topologically).} The state of the field theory on the subsystem $\regA$ is described by a reduced density matrix 
\begin{equation}
\rho_\regA=\Tr_{\regA^c}\ket{\psi_\Sigma}\bra{\psi_\Sigma}
\end{equation}
where the complement $\regA^c$ of $\regA$ is taken on $\Sigma$. To compute the entropy of $\regA$ via HRT, we proceed in two steps.  First we use the area functional to determine the bulk extremal surface $\extr{\regA}$ homologous to $\regA$ (and therefore anchored to the \textit{entangling surface} $\partial\regA=\bigcup_j \partial\regA^j$.)\footnote{
Note that the number of regions $\regA^i$, the number of entangling surfaces $\partial\regA^j$, and the number of connected components of the corresponding extremal surface $\extr{\regA}$ can all be distinct.  Furthermore, these numbers are likewise completely unrelated to the number of partitions ${\sf N}$ and the number of spacetime boundaries ${\sf M}$.} Second, we evaluate this area functional to determine the entropy
\begin{equation}
S_\epsilon(\rho_\regA)=\frac{\text{Area}_\epsilon(\extr{\regA})}{4\,G_N}\,.
\label{eq:HRT_original}
\end{equation}
Since the area of $\extr{\regA}$ is infinite, to obtain a finite value one has to introduce a cut-off surface which truncates the geometry $\bulk$. This corresponds to introducing a regulator $\epsilon$ in the field theory and we can think of \eqref{eq:HRT_original} as associating to $\regA$ a real function of $\epsilon$, $S_\epsilon(\rho_\regA)$, as described in \S\ref{sec:overview}. 

In general the bulk extremal surface found via the HRT prescription is not necessarily connected. A simple example is when the subsystem $\regA$ is a union of multiple disjoint regions and the mutual information between some of the regions vanishes (see Fig.~\ref{fig:domains_a} and the bottom panel of Fig.~\ref{fig:cut-off_independence}). It is however also possible for $\extr{\regA}$ to be disconnected even when $\regA$ is a single connected region; this happens for example if the entangling surface is disconnected (see Fig.~\ref{fig:domains_b}). 

In what follows it will be crucial to keep track of the connectivity of $\extr{\regA}$. We will therefore write $\extr{\regA}=\bigcup_\mu \esf^\mu$, where all the $\esf^\mu$ are connected codimension-2 bulk surfaces. We can then rewrite the HRT formula as
\begin{equation}
S_\epsilon(\rho_\regA)=\frac{1}{4G_N}\sum_\mu\,\text{Area}_\epsilon(\esf^\mu)
\; \eqdef\; 
\mathfrak{Area}_\epsilon\left[\sum_\mu\, \esf^\mu\right]\,.
\label{eq:HRT_new}
\end{equation}
In the above formula the sum $\sum_\mu\, \esf^\mu$ on the RHS is now a \textit{formal linear combination} of connected bulk extremal surfaces and we have defined a new operator $\mathfrak{Area}_\epsilon$ which acts linearly on this formal linear combination of surfaces and reduces to the usual area functional when it acts on a connected surface.\footnote{ We have absorbed the normalization factor $\tfrac{1}{4\,G_N}$ into the definition of $\mathfrak{Area}_\epsilon$ for convenience.} It is  important to note here that we are thinking of the area operator as a geometric object that takes a smooth codimension-2 bulk surface as input and gives back a number, its area, as output. In particular it is purely classical in the bulk and such is conceptually different from other notions of  area operators discussed in the holographic context cf., \cite{Papadodimas:2015jra,Almheiri:2016blp}.
 
Since we can think of the entropy $S_\epsilon(\rho_\regA)$ as obtained from a set of surfaces $\esf^\mu$ via the area operator, 
it is  convenient to introduce a new map $\breve{S}$, which one can think of as a sort of ``proto-entropy'', that associates to the subsystem $\regA$ the formal linear combination $\sum_\mu \,\esf^\mu$ which appears in \eqref{eq:HRT_new}:
\begin{equation}
\breve{S}(\rho_\regA) : \regA \; \mapsto \; \sum_\mu \esf^\mu\,.
\label{eq:protoS}
\end{equation}	
With this definition we can then write the holographic entanglement entropy in terms of the action of the area operator  $\mathfrak{Area}_\epsilon$ acting on this proto-entropy functional, viz.,
\begin{equation}
S_\epsilon(\rho_\regA)=\mathfrak{Area}_\epsilon\left[\breve{S}(\rho_\regA)\right]\,.
\end{equation}
In practice, for a state $\ket{\psi_\Sigma}$ and choice of subsystem $\regA$, one can evaluate $\breve{S}(\rho_\regA)$ by following the usual HRT prescription, but stopping short of choosing a cut-off surface and evaluating the area. Therefore, to efficiently implement the cut-off independence required by the two  definitions introduced in 
\S\ref{sec:overview}, we should abstract away from the usual entropy and rephrase these definitions in terms of the new map $\breve{S}$.

On $\Sigma$, we now consider a collection $\c_{\sf N}$ of ${\sf N}$ subsystems, labeled by $\regA_\ell$:  
\begin{equation}
\c_{\sf N} = \big\{\regA_\ell = \bigcup_i \regA_\ell^i\big\} \,, \qquad \ell \in \{1,2,...,{\sf N}\} \eqdef  \left[{\sf N}\right] \,.
\label{eq:}
\end{equation}	
We do not impose any restriction on the choice of subsystems, although by convention, and without loss of generality, we will take them to be non-overlapping.\footnote{ For any pair of subsystems $\regA_{\ell_1}$ and $\regA_{\ell_2}$ we have $\regA_{\ell_1}\cap\regA_{\ell_2}\subseteq\partial\regA_{\ell_1}\cap\partial\regA_{\ell_2}$, i.e.,  we only allow the subsystems to intersect on a higher co-dimension subset contained within their boundaries.} We will refer to the  \textit{lower index} as the \textit{color label/index} and the complement of the union of all subsystems $\univ$ the purifier.\footnote{ In our terminology the purifier is uncolored; $\univ$ is \textit{not} a ``color''.} The entropy vector associated to the state $\ket{\psi_\Sigma}$ and the configuration $\c_{\sf N}$ is then defined as
\begin{equation}
{\bf S}_\epsilon(\c_{\sf N},\psi_\Sigma)=\{S_\epsilon(\rho_{\regA_{\si}}),\; \si \subseteq [{\sf N}]\;\text{and}\;\si \neq\emptyset\}\,,
\qquad \regA_{\si}=\bigcup_{\ell \in \si} \regA_{\ell}
\label{eq:CFT_entropy_vector}
\end{equation} 
where the $\sf{D}$ components of the vector ${\bf S}_\epsilon$ are labeled by the new index $\si$ which represents a collection of colors, specified by the corresponding subset of $[{\sf N}]$, as in  \eqref{eq:info_quantity}. More precisely, $\si$ is a non-empty element of the power set of $[{\sf N}]$, i.e.,
\begin{equation}
\si \in \{\raisebox{0.25em}{$_{\{1\}}$},\raisebox{0.25em}{$_{\{2\}}$}, \ldots, \raisebox{0.25em}{$_{\{1,2\}}$}, \raisebox{0.25em}{$_{\{1,3\}}$}, \ldots, \raisebox{0.25em}{$_{\{1,2,3,\ldots, {\sf N}\}}$}\}\,.
\end{equation}	
Altogether there are three sets of labels associated with the subsystems we consider: 
\begin{itemize}
\item A lower index $\ell$ that specifies a color; we will call a collection of regions with fixed $\ell$ a \textit{monochromatic subsystem}.
\item An upper index $i,j,k$ that specifies the connected components (regions) of a particular color.
\item An index $\si, \sj, \sk$ which refers to a collection of monochromatic subsystems;  since such a collection invokes multiple colors, we will call it a \textit{polychromatic subsystem}.
\end{itemize}
The configuration $\c_{\sf N}$ is an amalgamation of all such possibilities. Note that in addition to the labels of the subsystems, we also have the index $\mu$ which labels the connected components of bulk extremal surfaces evoked in the computation of ${\bf S}_\epsilon(\c_{\sf N},\psi_\Sigma)$. 

We want to introduce a generalization of the entropy vector  ${\bf S}_\epsilon$ using the abstract proto-entropy map $\breve{S}$ defined above. For each of the subsystems $\regA_{\si}$, we build the list  $\om_{\si} = \bigcup_{\mu[\si]} \, \omega^{\mu[\si]}$ of all the connected bulk surfaces $\omega^{\mu[\si]}$ which enter in the computation of the entropy $S_\epsilon(\rho_{\regA_{\si}})$. We are using a  shorthand $\mu[\si]$ to denote the set of bulk surfaces which are associated with a particular polychromatic subsystem $\mathcal{A}_{\si}$.  
The union of all the sets $\om_{\si}$, for all $\si$, is a finite set $\om(\c_{\sf N},\psi_\Sigma)$, completely determined by the state and the choice of configuration. We then use $\om(\c_{\sf N},\psi_\Sigma)$ as a basis for the construction of an abelian free group 
$\boldsymbol{\mathscr{E}}(\c_{\sf N},\psi_\Sigma)$, which is the space of formal integer linear combinations of the elements of $\om(\c_{\sf N},\psi_\Sigma)$. The map $\breve{S}$ then associates an element of $\boldsymbol{\mathscr{E}}(\c_{\sf N},\psi_\Sigma)$ to any subsystem $\regA_{\si}$ and we can introduce the abstract vector
\begin{equation}
{\breve{\bf S}}(\c_{\sf N},\psi_\Sigma)=\{\breve{S}(\rho_{\regA_{\si}}),\;\si \subseteq [{\sf N}]\;\text{and}\;\si \neq\emptyset\}\,,\qquad 
\regA_{\si}=\bigcup_{\ell \in \si} \regA_{\ell}
\end{equation}
which is simply related to \eqref{eq:CFT_entropy_vector} by
\begin{equation}
{\bf S}_\epsilon(\c_{\sf N},\psi_\Sigma)=\big\{ \mathfrak{Area}_\epsilon\left[\breve{S}(\rho_{\regA_{\si}}) \right],\;\si \subseteq [{\sf N}]\;\text{and}\;\si \neq\emptyset\big\}
\eqdef
\mathfrak{Area}_\epsilon\left[{\breve{\bf S}}(\c_{\sf N},\psi_\Sigma)\right]\,.
\end{equation}

We are interested in information theoretic quantities $\bQ$ which are linear combinations of entropies, as in \eqref{eq:info_quantity}. If we replace the entropy vector ${\bf S}_\epsilon(\c_{\sf N},\psi_\Sigma)$ with the abstract form ${\breve{\bf S}}(\c_{\sf N},\psi_\Sigma)$, an expression like \eqref{eq:info_quantity} is an element of $\boldsymbol{\mathscr{E}}(\c_{\sf N},\psi_\Sigma)$ provided  the each coefficient $\qcf{\si}$ of the entropy $S_{\si}$ is an integer. 
We therefore define an abstract entropic quantity
\begin{equation}
\breve{\bQ}({\breve{\bf S}})=\sum_\si \qcf{\si}\, \breve{S}_{\si}\,,\qquad  \qcf{\si}\in\mathbb{Z} \,.
\label{eq:info_quantity_abstract}
\end{equation}
We can now think of the formal expression $\breve{\bQ}({\breve{\bf S}})=0_{\boldsymbol{\mathscr{E}}}$, where $0_{\boldsymbol{\mathscr{E}}}$ is the identity element in $\boldsymbol{\mathscr{E}}$, as an abstract version of an entropy relation. For a chosen pair $(\c_{\sf N},\psi_\Sigma)$ of state and configuration, we then have the important implication
\begin{equation}
\breve{\bQ}({\breve{\bf S}}(\c_{\sf N},\psi_\Sigma))=0_{\boldsymbol{\mathscr{E}}}\quad \Longrightarrow\quad \bQ({\bf S}_\epsilon(\c_{\sf N},\psi_\Sigma))=0
\label{eq:abstract_relation}
\end{equation}
independent from the choice of any UV regulator 
$\epsilon$.

Using this formalism we can now rephrase the definitions introduced in \S\ref{sec:overview} in a manifestly cut-off independent manner. For chosen $(\c_{\sf N},\psi_\Sigma)$, the evaluation of an abstract information quantity $\breve{\bQ}$ on ${\breve{\bf S}}(\c_{\sf N},\psi_\Sigma)$ takes the form
\begin{equation}
\begin{split}
\breve{\bQ}({\breve{\bf S}}(\c_{\sf N},\psi_\Sigma)) &= 
	\sum_\si \,  \qcf{\si}\, \breve{S}_{\si} (\c_{\sf N},\psi_\Sigma) =	
	 \sum_\si \qcf{\si}  \left(\sum_{\mu[\si]} \, \omega^{\mu[\si]}\right) \\
&	  \equiv   \sum_\si \qcf{\si}  \left(\sum_{\mu} \, M_{\si \mu} \, \omega^{\mu}\right)   	 
 =
	 \sum_{\mu} \left(\sum_\si\,  M_{\si \mu} \, \qcf{\si} \right) \omega^{\mu}\,.
	 \end{split}
\label{eq:surface_decomposition}
\end{equation}
The above formula plays a central role in our construction and deserves a careful explanation. The index $\mu[\si]$, as noted above, runs over the elements of the set $\om_\si$, i.e., all connected components of the extremal surface which computes the proto-entropy $\breve{S}_{\si} (\c_{\sf N},\psi_\Sigma)$. We want to extend this sum to all elements of $\om$ so that we can swap the order of the summation; we can implement this by introducing a $(0,1)$-matrix $M_{\si \mu}$ which for every polychromatic subsystem $\si$ takes into account which surfaces in $\om$ enter in the computation. The index $\mu$ in the last expression now runs over all elements of $\om$. Since all the surfaces $\omega^\mu$ are different (or equivalently, as they are linearly independent in the abstract vector space we conjured), the requirement that $\breve{\bQ}$ is \textit{faithful} translates into a system of linear equations
\begin{equation}
\left\{\sum_\si\,  M_{\si \mu} \, \qcf{\si}=0, \; \; \forall \mu\right\}
\label{eq:constaints_definition}
\end{equation}
which we will call \textit{constraints}. For a pair $(\c_{\sf N},\psi_\Sigma)$ we will indicate the list of corresponding constraints as $\{\f(\c_{\sf N},\psi_\Sigma)\}$. 
With this abstraction, the faithfulness requirement described in \S\ref{subsec:overview3} can then be rephrased as follows:

\begin{definition}
In an ${\sf N}$-partite setting, an entropic information quantity $\breve{\bQ}$ is \emph{faithful} if there exists at least one pair $(\c_{\sf N},\psi_\Sigma)$ such that the coefficients $\{ \qcf{\si}\}$ are a solution\footnote{ Obviously, we ignore the trivial solution $\bQ\equiv{\bf 0}$.} to the constraints $\{\f(\c_{\sf N},\psi_\Sigma)\}$.
\label{con:faith}
\end{definition}

It is clear that if an information quantity $\breve{\bQ}$ satisfies Definition~\ref{con:faith} it also satisfies Definition~\ref{def:faithful}, this is guaranteed by the implication \eqref{eq:abstract_relation}. To see that the opposite implication is also true, suppose that an information quantity $\breve{\bQ}$ does not satisfy Definition~\ref{con:faith}. This means that for \textit{any} pair $(\c_{\sf N},\psi_\Sigma)$ of a state and configuration, $\breve{\bQ}(\breve{{\bf S}}(\c_{\sf N},\psi_\Sigma))\neq 0_{\boldsymbol{\mathscr{E}}}$, i.e., it is a formal linear combination of \textit{some} surfaces. As explained in \S\ref{sec:overview}, this means that $\bQ({\bf S}_\epsilon(\c_{\sf N},\psi_\Sigma))$ is necessarily cut-off dependent. Specifically, even if this quantity could still vanish, it would vanish only for specific choices of the regulator.

We now would like to recast in this abstract language the notion of primitive information quantities (Definition~\ref{def:fund} in \S\ref{sec:overview}) which will play a central role in our analysis. Suppose that for a given faithful quantity $\breve{\bQ}$ we could find a pair $(\c_{\sf N},\psi_\Sigma)$ of a state and configuration such that the space of solutions to the constraints $\{\f(\c_{\sf N},\psi_\Sigma)\}$ has dimension greater than one, and includes $\breve{\bQ}$. This means that such space contains infinitely many other distinct information quantities $\breve{\bQ}'\neq k\breve{\bQ}$ which solve the same set of constraints. This in turn implies that for this particular pair $(\c_{\sf N},\psi_\Sigma)$ there are different faithful quantities that vanish independently from the cut-off, violating the second requirement of Definition~\ref{def:fund}. The primitivity requirement can therefore be rephrased as follows:

\begin{definition}
In an ${\sf N}$-partite setting, an entropic information quantity $\breve{\bQ}$ is \emph{primitive} if there exists at least one pair $(\c_{\sf N},\psi_\Sigma)$ such that the coefficients $\{Q_\si\}$ are the only solution (up to a constant factor) to the system of constraints $\{\f(\c_{\sf N},\psi_\Sigma)\}$.
\label{con:fun}
\end{definition}

Given an information quantity $\breve{\bQ}$, one could in principle scan over all possible pairs $(\c_{\sf N},\psi_\Sigma)$ of states and configurations, to determine if such a quantity is faithful, and eventually also primitive, according to the above definitions. However, this is not what we want to do. The whole purpose of constructing the present framework is instead to \textit{find} the primitive information quantities pertaining to geometric states. To this end, we will proceed in the opposite direction. 

For a fixed choice of a pair $(\c_{\sf N},\psi_\Sigma)$, we will think of the coefficients $\{ \qcf{\si}\}$ as variables and solve the set of constraints $\{\f(\c_{\sf N},\psi_\Sigma)\}$. Any solution will correspond to a faithful quantity, making again evident the weakness of such property. On the other hand, when the constraints $\{\f(\c_{\sf N},\psi_\Sigma)\}$ for a chosen pair $(\c_{\sf N},\psi_\Sigma)$ have a one parameter family of solutions, they will generate a primitive quantity $\breve{\bQ}$. Therefore, to find all primitive information quantities for any given number of parties ${\sf N}$, we will have to scan over all possible pairs of states and configurations to find all possible combinations that satisfy the above requirements. We will explain how to organize this scan in the next section. Although this problem seems a-priori overwhelmingly complex, we will see below  that there is a huge amount of redundancy and that efficiently removing such redundancy allows for a vast simplification.

Since our construction crucially depends on the usage of the proto-entropy defined above, as opposed to the usual entropy, from now on we will always implicitly assume this abstraction, and to simplify the notation we will write ${\bf S}$ and $\bQ$, instead of $\breve{{\bf S}}$ and $\breve{\bQ}$.

\subsection{Gauge-fixing for geometric states and configurations}
\label{subsec:redundancy}

Now that we have the basic framework in place, it is useful to first analyze how it can aid us in our search for primitive information quantities. A-priori we would want to make sure that the procedure is not overly redundant and identify the aspects that allow it to transcend some of the limitations of the previous explorations (such as those of \cite{Bao:2015bfa}). We now give a brief account of various features, though the discussion here will perhaps be more illuminating at a second reading, after that of \S\ref{sec:three_parties}, where we exemplify the procedure by deriving the $3$-party information quantities.  

Let us first see how much redundancy is built into the formalism. Consider a pair $(\c_{\sf N},\psi_\Sigma)$, comprising of a ${\sf N}$-party configuration and a state of the full system on $\Sigma$, which together generate a primitive information quantity $\bQ$ via the set of constraints $\{\f(\c_{\sf N},\psi_\Sigma)\}$.
Leaving the state $\psi_\Sigma$ fixed, we can deform the regions which compose the subsystems in $\c_{\sf N}$. This will entail a change in the geometry of the bulk extremal surfaces, which enter into the derivation of the constraints. However, as long as the change in the bulk surfaces is smooth there will be no effect on the constraints. On the other hand, the nature of the extremal surfaces would change under a phase transition, for instance where a connected and disconnected extremal surface exchange dominance connected  extremal surface exchanges dominance with a set of disconnected ones (within the same homology class). Similarly, keeping the configuration $\c_{\sf N}$ fixed, we can change the state $\psi_\Sigma$ to modify the bulk geometry. 
By our genericity assumption (see footnote \ref{fn:generiticity}),
small deformations will not affect the extremal surfaces overmuch, but we can certainly again  obtain a qualitative change in the extremal surfaces as the state becomes sufficiently different. In both cases, though, we would need to change the  connectedness of the extremal surfaces before we see a realignment of the constraints. Thus, a-priori, we have a large degree of redundancy in how the fundamental constraints  are manifested in the scan over pairs $(\c_{\sf N},\psi_\Sigma)$.  

However, this large redundancy  within the formalism can be converted into a virtue, once we identify the essential features that delineate a particular set of constraints over others. The essence of the previous paragraph is that the precise nature of the extremal surfaces is immaterial; all one cares about is how the different components $\regA_{\ell}^{i}$ making up the configuration are represented in the bulk via the surfaces $\omega^\mu$. 
A moment's thought will convince the reader that what we are describing here amounts to saying that the structure of constraints associated to a pair $(\c_{\sf N},\psi_\Sigma)$ only depends on the pattern of mutual information between the various regions $\regA_{\ell}^i$ which compose the various subsystems $\regA_{\ell}$.\footnote{ While it is easy to understand the construction in terms of the  mutual information, we will see later that the actual implementation is done in a slightly different manner in our algorithm for the search.} More specifically, what we care about is whether the mutual information between different parts of the configuration, say $\regA_{\ell_1}^{i_1}$ and $\regA_{\ell_2}^{i_2}$, is vanishing or non-vanishing. As was the case for the actual areas, the precise value of the mutual information is immaterial to our construction.

This feature allows us to truncate the redundancy by focusing on equivalence classes of pairs  $(\c_{\sf N},\psi_\Sigma)$ characterized by the constraints they produce (more on this below). We now make a set of (a-priori naive) observations,  which will allow for a vast simplification: 
\begin{itemize} 
\item Since we only care about the pattern of (vanishing vs.\ non-vanishing) mutual information between the regions $\regA_{\ell}^i$, we do not have to undertake a scan over all geometries. Given that our relations ultimately are tied to the divergence structure of individual entanglement entropies, and this is the same in all states, it in fact suffices that we focus on  the vacuum state of the theory! 
\item Since the freedom to deform the regions allows us to realize the requisite patterns of mutual information even if we limit ourselves to work in the vacuum, we need not even consider more general bulk geometries involving multiple boundary components and a tensor product of CFTs. Multi-boundary wormhole geometries are still nevertheless useful to construct the extremal rays, which was partly the reason why they were used extensively in the holographic entropy cone analysis of \cite{Bao:2015bfa} (see \S\ref{sec:discuss} for further comments on this point).
\item None of our arguments single out a particular dimension, so we can for convenience of visualization focus on the case of $(2+1)$-dimensional field theories where spatial regions $\regA_{\ell}^i$ are just two-surfaces embedded in ${\mathbb R}^2$ (and correspondingly the individual entangling surfaces which compose $\partial \regA_{\ell}^i$ are closed curves in ${\mathbb R}^2$). While passing to higher dimensions will of course allow for more complicated topology for $\regA_{\ell}^i$, this is again not relevant for our program since it  only adds to the aforementioned redundancy. Conversely, the situation in $(1+1)$-dimensional field theories is a bit too non-generic to extract useful lessons; it is not a-priori guaranteed that our technology can be applied in that setting effectively (see \S\ref{sec:discuss} for further comments on this case).
\end{itemize}

The fact that, for the purpose of finding the primitive information quantities, we can limit ourselves to work in the vacuum of a single CFT, will be perhaps more evident a-posteriori, by looking more carefully at the details of the derivation. Nonetheless, we can already provide an heuristic explanation for why this should be the case. As we mentioned above, the essential point is that since the actual value of the mutual information is immaterial, we can achieve any pattern of (vanishing vs.\ non-vanishing) mutual information between various regions already in the vacuum state of a 2+1 CFT on $\Sigma = {\mathbb R}^2$. The detailed structure of the configurations that generate the primitive information quantities is in general quite complicated, and as we said it is only 
in retrospect that one can prove that all the necessary patterns of correlations can be realized in this restricted setting.
For now we present two particular examples, which should however be sufficiently suggestive. 

The first example is a ${\sf N}$-party configuration made of ${\sf N}$ disjoint regions (one per color), each of which is topologically a disk. By an appropriate deformation of the individual regions it is possible to guarantee that the mutual information is non-vanishing for any pair of monochromatic subsystems, i.e.,   
\begin{equation}
{\bf I}_2(\regA_{\ell_1}:\regA_{\ell_2})\neq 0\quad\forall\;\ell_1,\ell_2\,,
\end{equation}
which in turn implies that the same relation also holds for any pair of polychromatic subsystems (by monotonicity of mutual information). Note that this particular pattern of correlations can not be achieved in the $(1+1)$-dimensional case if each color is represented by a single interval.

In the second example we again consider a ${\sf N}$-party configuration made of ${\sf N}$ disjoint disks, but now we hold their geometry fixed and only allow to change their size and location. Even under such restriction it is possible, still working only in the vacuum state, to achieve the following pattern of mutual information. For any two collections of subsystems $\regA_\si$ and $\regA_\sk$ such that $\regA_\si\cap\regA_\sk=\emptyset$ we have
\begin{equation}
\begin{cases}
{\bf I}_2(\regA_\si:\regA_\sk)\neq 0\quad \text{if}\;\regA_\si\cup\regA_\sk=\bigcup_{\ell=1}^{\sf N}\regA_\ell\\
{\bf I}_2(\regA_\si:\regA_\sk)= 0\quad \text{otherwise}
\end{cases}
\end{equation}
To see that this is the case, suppose for simplicity that ${\sf N}=3$ and that the triplet of disks is arranged on the vertices of an equilateral triangle. We can vary the distance between the vertices such that they are near enough to ensure a three-legged `octopus'\footnote{ Since we are considering topologically disk shaped regions in ${\mathbb R}^2$ the bulk surfaces are either (i) `domes' over a single region, or an (ii) `arch' straddling two regions, or more generally, (iii) an `octopus' homologous to multiple disks (cephalopod is more linguistically appropriate, but we will stick with octopus for sake of imagery).}  surface for $\regA_1\cup\regA_2\cup\regA_3$, but simultaneously far enough to disallow the `arch' like surfaces over any pair of disks.\footnote{
Geometrically, the fact that an octopus is possible without any arches (whereas two arches involving all three disks guarantee an octopus) follows from nesting of minimal surfaces \cite{Hayden:2011ag,Headrick:2013zda} (or more generally entanglement wedges \cite{Headrick:2014cta,Wall:2012uf}). Hence a pair of arches guarantees surface which lies outside both, which is the octopus, and so cannot for example be composed of the individual domes.
}   This particular structure of correlations will play an important role in the following, {especially in the derivation of the main theorem of \S\ref{sec:multipartite_information}. 

Let us take stock and summarize the above discussion in a manner that will enable us outline the overall strategy we wish to pursue.
The redundancy inherent in the scan over the choice of state and configurations can be phrased in terms of an equivalence relation: 
\begin{equation}
(\c_{\sf N},\psi_\Sigma) \simeq (\c'_{\sf N},\psi'_\Sigma) \;\; \Longleftrightarrow \;\; \{\f(\c_{\sf N},\psi_\Sigma)\} 
= \{\f(\c'_{\sf N},\psi'_\Sigma)\}\,.
\label{eq:equivalence}
\end{equation}	
This redundancy can be viewed as a form of `gauge invariance' and our gauge fixing procedure involves
\begin{itemize}
\item Restricting $\psi_\Sigma$ to be vacuum state of a CFT$_3$ on ${\mathbb R}^{2,1}$. 
\item Scanning over all possible configurations $\c_{\sf N}$, with an arbitrary number of regions with arbitrary topology and geometry.
\end{itemize}
To simplify the notation, since we have restricted to the vacuum of the theory, from now on we will always drop the state dependence from the constraints and only write $\{\f(\c_{\sf N})\}$.

As we explained above, even if we restrict to the vacuum state, when we slightly deform the regions such that there is no phase transition for the bulk surfaces, the constraints will not change. This means that even after our choice of gauge fixing there is still a residual redundancy. As above, this can be phrased in terms of an equivalence relation: 
\begin{equation}
\c_{\sf N} \simeq \c'_{\sf N} \;\; \Longleftrightarrow \;\; \{\f(\c_{\sf N}\} = \{\f(\c'_{\sf N})\}\,.
\label{eq:equivalence2}
\end{equation}	
Furthermore, it is clear from the above Definition~\ref{con:faith} and Definition~\ref{con:fun} that the actual form of the constraints is immaterial: all that really matters is the space of solutions. Therefore, by defining two (possibly different) sets of constraints $\{\f(\c_{\sf N}\}$ and $\{\f(\c'_{\sf N})\}$ to be equivalent if they have the same space of solutions, the above equivalence relation for configurations can be relaxed to the following
\begin{equation}
\c_{\sf N} \simeq \c'_{\sf N} \;\; \Longleftrightarrow \;\; \{\f(\c_{\sf N}\} \simeq \{\f(\c'_{\sf N})\}\,.
\label{eq:equivalence3}
\end{equation}	
As we proceed, it will become clear that this equivalence relation extends far beyond small continuous deformations of the configurations. Namely, there are configurations which are equivalent even if the topology of the bulk extremal surfaces, as well as of the configurations themselves, is very different. 

To summarize, we have reduced the problem of finding the primitive information quantities for ${\sf N}$ parties to the problem of classifying all the equivalence classes of configurations under the relation \eqref{eq:equivalence3}, and identifying among them all those which are associated to a set of constraints which has a one-dimensional space of solutions. However this is still a complicated problem, we will explain the next section how we plan to address it in the rest of the paper and future work \cite{Hubeny:2018aa}.

\subsection{The search strategy}
\label{subsec:organizing}

To classify the equivalence classes of configurations, we will find it convenient to organize the possible configurations into various families according to some topological properties. The main distinction will be between two scenarios:
\begin{itemize}
\item A \textit{disjoint scenario}, where all the regions are disjoint, i.e., 
\begin{equation}
\regA_{\ell_1}^{i_1}\cap\regA_{\ell_2}^{i_2}=\emptyset\qquad\forall \ell_1,\ell_2,i_1,i_2
\end{equation}
and the mutual information between any pair of subsystems is finite.
\item An \textit{adjoining scenario}, where regions of different colors can share portions of their boundaries although they never overlap, i.e., 
\begin{equation}
\regA_{\ell_1}^{i_1}\cap\regA_{\ell_2}^{i_2}\subseteq\partial\regA_{\ell_1}^{i_1}\cap\partial\regA_{\ell_2}^{i_2}\qquad\forall \ell_1,\ell_2,i_1,i_2
\end{equation}
and the mutual information is divergent for some pair of subsystems. 
\end{itemize}
 
As we will exemplify in \S\ref{sec:three_parties}, the nature of the constraints is more transparent in the disjoint scenario. However, as the number of parties grows, it is still far from obvious how to obtain the full classification. To tackle the problem, it will be convenient to further characterize the configurations according to an additional property that we will call \textit{enveloping}. Since we are working on $\mathbb{R}^2$, and all the regions composing the various subsystems are compact, the complement $\univ$ of any configuration $\c_{\sf N}$ (the purifier) is a union of a finite number of compact regions and a remaining part which is non-compact and extends to infinity. We will refer to this latter component of the purifier as the \textit{universe}. We will then say that the region $\regA_{\ell_1}^{i_1}$ 
\textit{is enveloping} (or envelops)
the region $\regA_{\ell_2}^{i_2}$ if for every pair of points $P,P'$, respectively in the universe and the region $\regA_{\ell_2}^{i_2}$, any connected path from $P$ to $P'$ has to cross the region $\regA_{\ell_1}^{i_1}$.\footnote{ This notion of enveloping can be generalized to the case where one has \textit{multiple enveloping} (for example the enveloped region $\regA_{\ell_2}^{i_2}$ is itself enveloping a third region $\regA_{\ell_3}^{i_3}$).} In the special case where none of regions is enveloping any other region, we will show in \S\ref{sec:multipartite_information} how it is possible to derive the full spectrum of primitive information quantities for any number of parties.\footnote{ More precisely, to simplify the proof, in \S\ref{sec:multipartite_information} we will make a slightly stronger assumption, namely that each of the regions is simply connected.} 
In \S\ref{sec:discuss} we will comment on the generalization to the case where enveloping is allowed, which we will explore in future work \cite{Hubeny:2018aa}.

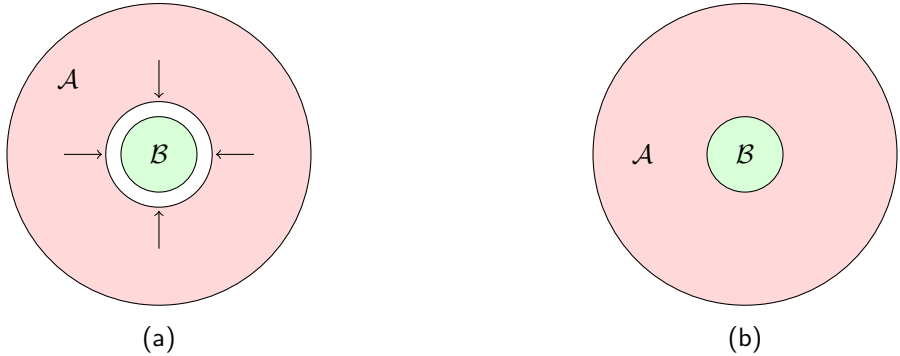
\begin{figure}[tb]
\centering
\begin{subfigure}{0.49\textwidth}
\centering
\begin{tikzpicture}
\draw[fill=red!15] (0,0) circle (2cm);
\draw[fill=white!15] (0,0) circle (0.7cm);
\draw[fill=green!15] (0,0) circle (0.5cm);
\draw[->] (0,1.25) -- (0,0.75);
\draw[->] (0,-1.25) -- (0,-0.75);
\draw[->] (-1.25,0) -- (-0.75,0);
\draw[->] (1.25,0) -- (0.75,0);
\node at (-1.2,1) {\footnotesize{$\mathcal{A}$}};
\node at (0,0) {\footnotesize{$\mathcal{B}$}};
\end{tikzpicture}
\caption{}
\label{}
\end{subfigure}
\hfill
\begin{subfigure}{0.49\textwidth}
\centering
\begin{tikzpicture}
\draw[fill=red!15] (0,0) circle (2cm);
\draw[fill=green!15] (0,0) circle (0.5cm);
\node at (-1.35,0) {\footnotesize{$\mathcal{A}$}};
\node at (0,0) {\footnotesize{$\mathcal{B}$}};
\end{tikzpicture}
\caption{}
\label{}
\end{subfigure}
\caption{The configuration (b) can be thought to be obtained from (a) by a deformation of $\mathcal{A}$ (indicated by the arrows).}
\label{fig:limit}
\end{figure}

If, for some value of ${\sf N}$, we can derive the full list of primitive quantities in the disjoint scenario, the hope is that one can then generalize the construction to the adjoining scenario. In this latter case, since the mutual information can be divergent, one would like to understand the configurations as a limiting case, where some regions become adjacent under a continuum deformation 
 (see Fig.~\ref{fig:limit}).\footnote{ The issues we encounter are similar to the discussion in \cite{Casini:2015woa}, where such a regulating scheme was employed to carefully tackle the proof of the F-theorem in three dimensions.} Even more complicated situation, where there are multiple intersections of entangling surfaces (see Fig.~\ref{fig:AL_alternatives2} of \S\ref{sec:three_parties} for an example) might require further consideration, however it is not a-priori guaranteed that these degenerate cases can in fact generate new primitive information quantities.\footnote{ On the other hand, using intuition from bit threads and multi-commodity flows \cite{Cui:2018dyq}, one might suspect that these multi-color junctions do implement new entanglement structures.  We thank Matt Headrick for sharing this perspective.}

\section{Holographic information quantities for three parties}
\label{sec:three_parties}

In this section we will use the formalism introduced in \S\ref{sec:formalization} to derive the primitive information quantities for the case of three parties. We will start by briefly reviewing the structure of the ${\sf N}=3$ holographic entropy cone, completing the discussion initiated in \S\ref{sec:overview}. Then we will show how all the information quantities associated to the facets of the cone, and in particular the tripartite information, can be generated by an appropriate configuration. Finally, we will argue that this list exhausts all the possibilities, and there exists no primitive information quantity which would \textit{not} correspond to one of the facets of the cone. 

The ${\sf N}=3$ holographic entropy cone is determined by MMI together with some instances of the bipartite inequalities (uplifted to the context of three subsystems). More specifically, for three subsystems $\mathcal{A},\mathcal{B},\mathcal{C}$ one can consider two different versions of SA (up to permutations of the labels),
\begin{equation}
S_{\mathcal{A}}+S_{\mathcal{B}}\geq S_{\mathcal{AB}}\qquad \text{and}\qquad S_{\mathcal{AB}}+S_{\mathcal{C}}\geq S_{\mathcal{ABC}} 
\,,
\label{eq:SA_instances}
\end{equation}
however only the former corresponds to a facet of the cone. The reason why the latter is redundant is that it can be obtained by summing MMI and the two other permutations of the former inequality. Note that, similarly to what we discussed in \S\ref{sec:overview} regarding the saturation of SSA, the fact that the second inequality in \eqref{eq:SA_instances} is redundant means that even if the corresponding information quantity is faithful, it cannot be primitive, since} the relation ${\bf I_2}(\mathcal{AB}:\mathcal{C})=0$ can only be satisfied if simultaneously ${\bf I_2}(\mathcal{A}:\mathcal{C})=0$ and ${\bf I_2}(\mathcal{B}:\mathcal{C})=0$ (as well as ${\bf I_3}(\mathcal{A}:\mathcal{B}:\mathcal{C})=0$).\footnote{ This is true also for arbitrary quantum systems, as a consequence of monotonicity of mutual information (which is SSA), here replaced by a stronger statement (MMI).}

In the case of the AL inequality, there are instead three formally different instances (again up to permutations of labels),
\begin{equation}
S_{\mathcal{A}}+S_{\mathcal{AB}}\geq S_{\mathcal{B}}\ ,\qquad S_{\mathcal{AB}}+S_{\mathcal{ABC}}\geq S_{\mathcal{C}}\ , \qquad\text{and}\qquad
S_{\mathcal{A}}+S_{\mathcal{ABC}}\geq S_{\mathcal{BC}}\,, 
\label{eq:AL_instances}
\end{equation}
and one can verify that only the last one is a facet inequality. This is consistent with the fact that it is the one which can be obtained from the first inequality of \eqref{eq:SA_instances} via the usual purification procedure.\footnote{ In applying this symmetry transformation we hold the total number of subsystems fixed. Specifically, to derive the last inequality of \eqref{eq:AL_instances} from the first one of \eqref{eq:SA_instances}, the purification of $\mathcal{AB}$ is $\mathcal{CO}$ and not $\mathcal{O}$ alone.}

The primitive information quantities which we want to derive are the ones which are associated to these facet inequalities. To sum up, they are the tripartite information $\bQ^{\text{MMI}} \equiv {\bf I_3}(\mathcal{A}:\mathcal{B}:\mathcal{C})$, the three permutations of the mutual information $\bQ^{\text{SA}} \equiv {\bf I_2}(\mathcal{A}:\mathcal{B})$, and the three permutations of 
\begin{equation}
\bQ^{\text{AL}} =S_\mathcal{A}+S_\mathcal{ABC}-S_\mathcal{BC} \,.
\end{equation}
Before we construct the configurations that generate these information quantities, let us first see how to attain the tripartite information ${\bf I_3}$ using our formalism. 

\begin{figure}[]
\centering
\begin{subfigure}{1\textwidth}
\centering
\begin{tikzpicture}
\draw[fill=red!15] (-5,0) circle (2cm);
\draw[fill=green!15] (-5,0) circle (0.5cm);
\draw[fill=green!15] (0,0) circle (2cm);
\draw[fill=blue!15] (0,0) circle (0.5cm);
\draw[fill=blue!15] (5,0) circle (2cm);
\draw[fill=red!15] (5,0) circle (0.5cm);
\draw[dotted,->] (-5,-1) -- (-5,-0.55);
\draw[dotted,->] (0,-1) -- (0,-0.55);
\draw[dotted,->] (5,-1) -- (5,-0.55);
\node at (-5.5,1) {\small{$\mathcal{A}_1$}};
\node at (-5,0) {\small{$\mathcal{B}_2$}};
\node at (-0.5,1) {\small{$\mathcal{B}_1$}};
\node at (0,0) {\small{$\mathcal{C}_2$}};
\node at (4.5,1) {\small{$\mathcal{C}_1$}};
\node at (5,0) {\small{$\mathcal{A}_2$}};
\node at (-5,-1.2) {\small{$b_2$}};
\node at (0,-1.2) {\small{$c_2$}};
\node at (5,-1.2) {\small{$a_2$}};
\node at (-3.4,1.6) {\small{$a_1$}};
\node at (1.6,1.6) {\small{$b_1$}};
\node at (6.6,1.6) {\small{$c_1$}};
\end{tikzpicture}
\end{subfigure}

\vspace{1cm}

\begin{subfigure}{1\textwidth}
\centering
\small
\begin{tabular}{|| c || c | c | c | c | c | c | c || }
\hline
\hline
{\shadeB } &  {\shadeB }  & {\shadeB  } &
	{ \shadeB } & { \shadeB } &
	{ \shadeB } & { \shadeB } & { \shadeB } \\
{\shadeB Surfaces} &  {\shadeB $S_{\mathcal{A}}$}      & {\shadeB $S_{\mathcal{B}}$ } &
	{ \shadeB $S_{\mathcal{C}}$} & { \shadeB $S_{\mathcal{AB}}$} &
	{ \shadeB $S_{\mathcal{AC}}$} & { \shadeB $S_{\mathcal{BC}}$} & { \shadeB $S_{\mathcal{ABC}}$}  \\
{\shadeB } &  {\shadeB }  & {\shadeB } &
	{ \shadeB } & { \shadeB } &
	{ \shadeB } & { \shadeB } &{ \shadeB } \\
\hline
\hline
{\shadeR $a_1$} & {\shadeR \checkmark} & {\shadeR } &
	{\shadeR } & {\shadeR \checkmark} & {\shadeR \checkmark} &  {\shadeR } &
	{\shadeR \checkmark}  \\
\hline 
{\shadeR $b_1$} & {\shadeR } & {\shadeR \checkmark} &
	{\shadeR } & {\shadeR \checkmark} & {\shadeR } &  {\shadeR \checkmark} &
	{\shadeR \checkmark}  \\
\hline 
 {\shadeR $c_1$} & {\shadeR } & {\shadeR } &
	{\shadeR \checkmark} & {\shadeR } & {\shadeR \checkmark} &  {\shadeR \checkmark} &
	{\shadeR \checkmark}  \\
\hline 
 {\shadeR $a_2$} & {\shadeR \checkmark} & {\shadeR } &
	{\shadeR \checkmark} & {\shadeR \checkmark} & {\shadeR } &  {\shadeR \checkmark} &
	{\shadeR }  \\
\hline
 {\shadeR $b_2$} & {\shadeR \checkmark} & {\shadeR \checkmark} &
	{\shadeR } & {\shadeR } & {\shadeR \checkmark} &  {\shadeR \checkmark} &
	{\shadeR }  \\
\hline 
 {\shadeR $c_2$} & {\shadeR } & {\shadeR \checkmark} &
	{\shadeR \checkmark} & {\shadeR \checkmark} & {\shadeR \checkmark} &  {\shadeR } &
	{\shadeR }  \\
\hline
\hline
\end{tabular}
\end{subfigure}
\caption{The simplest (minimal number of connected components of bulk extremal surfaces) configuration that generates the tripartite information ${\bf I_3}(\mathcal{A}:\mathcal{B}:\mathcal{C})$. 
Each column in the table is an entry of ${\bf S}(\c)$, while the rows correspond to the elements of $\om(\c)$. For each component $S_\mathscr{I}(\c)$ the check marks show which are the surfaces that enter in the linear combination.}
\label{fig:tripartite_information1}
\end{figure}

An example of a configuration $\c$ which generates the tripartite information is shown in Fig.~\ref{fig:tripartite_information1} for a $(2+1)$-dimensional CFT. Each subsystem in the configuration is the union of two regions,\footnote{ For small values of ${\sf N}$ we adopt a different notation for the subsystems, calling them $\mathcal{A},\mathcal{B},\mathcal{C},...$ (like in \S\ref{sec:overview}) instead of $\mathcal{A}_1,\mathcal{A}_2,\mathcal{A}_3,...$ (like in \S\ref{sec:formalization}). Consequently, the connected regions within the various subsystems are now labeled by a lower index.
} $\mathcal{A}=\mathcal{A}_1\mathcal{A}_2$, $\mathcal{B}=\mathcal{B}_1\mathcal{B}_2$ and $\mathcal{C}=\mathcal{C}_1\mathcal{C}_2$, and the labels $a_1,b_1,c_1,a_2,b_2,c_2$ indicate the  corresponding entangling surfaces.
We will restrict to configurations satisfying the following criteria:
\begin{itemize}
\item The distance between the disks $\mathcal{B}_2\mathcal{A}_1$, $\mathcal{C}_2\mathcal{B}_1$ and $\mathcal{A}_2\mathcal{C}_1$ are chosen such that they are all uncorrelated among each other. Specifically, we have ${\bf I_2}(\mathcal{B}_2\mathcal{A}_1:\mathcal{C}_2\mathcal{B}_1\mathcal{A}_2\mathcal{C}_1)=0$ and likewise for the other two cases.
\item  Furthermore, the disks $\mathcal{A}_2, \mathcal{B}_2, \mathcal{C}_2$ are taken to be sufficiently small such that they are uncorrelated with the purifier, i.e., ${\bf I_2}(\mathcal{A}_2:\univ\mathcal{C}_2\mathcal{B}_1\mathcal{B}_2\mathcal{A}_1)=0$ and likewise for  the other two cases. With this choice the surface which computes the entropy of each annular regions (for example $\mathcal{A}_1$) is the union of two surfaces, one homologous to the ``internal'' disk ($\mathcal{B}_2$), and the other to the union of the disk and the annulus ($\mathcal{B}_2\mathcal{A}_1$). 
\end{itemize}
In this particular case, each connected component of the bulk extremal surfaces $\omega^\mu$ is  specified by the single entangling surface on which it is anchored. With a little abuse of notation we will  give these bulk surfaces the same labels that we used for the entangling surfaces. 

Under these assumptions the set $\om(\c)$ is then built out of the six surfaces, viz., 
\begin{equation}
\om(\c)=\{a_1,b_1,c_1,a_2,b_2,c_2\} \,.
\end{equation}
 $\boldsymbol{\mathscr{E}}(\c)$ is then the set of formal integer linear combinations of these surfaces. We can now use the formalism introduced in the previous section and compute the entropy for each entry of the entropy vector ${\bf S}(\c)$ as a formal linear combination of the above surfaces. The results are displayed in the table of Fig.~\ref{fig:tripartite_information1}. Each column in the table is an entry of ${\bf S}(\c)$, while the rows correspond to the elements of $\om(\c)$. For each component $S_\mathscr{I}(\c)$ the check marks show which are the surfaces that enter in the linear combination. 

 The constraints \eqref{eq:constaints_definition} associated to $\c$ are then immediately readable from the rows of the table, explicitly they are
\begin{align}
\begin{cases}
\;\qcf{\mathcal{A}}+\qcf{\mathcal{AB}}+\qcf{\mathcal{AC}}+\qcf{\mathcal{ABC}}=0 \\
\;\qcf{\mathcal{B}}+\qcf{\mathcal{AB}}+\qcf{\mathcal{BC}}+\qcf{\mathcal{ABC}}=0 \\
\;\qcf{\mathcal{C}}+\qcf{\mathcal{AC}}+\qcf{\mathcal{BC}}+\qcf{\mathcal{ABC}}=0 \\
\;\qcf{\mathcal{A}}+\qcf{\mathcal{C}}+\qcf{\mathcal{AB}}+\qcf{\mathcal{BC}}=0 \\
\;\qcf{\mathcal{A}}+\qcf{\mathcal{B}}+\qcf{\mathcal{AC}}+\qcf{\mathcal{BC}}=0 \\
\;\qcf{\mathcal{B}}+\qcf{\mathcal{C}}+\qcf{\mathcal{AB}}+\qcf{\mathcal{AC}}=0 
\end{cases}
\end{align}
Plugging the one-parameter family of solutions to this system of equations back into the definition \eqref{eq:info_quantity} one gets 
$\bQ({\bf S})=\lambda \, {\bf I_3}(\mathcal{A}:\mathcal{B}:\mathcal{C})$, for some constant $\lambda$, which in entropy space is the hyperplane associated to the tripartite information.\footnote{ We stress that while this argument allows us to derive the tripartite information from purely holographic considerations, a-priori it does not have any implication for its sign definiteness. At this stage, to prove MMI, one still needs to rely on the common arguments of \cite{Hayden:2011ag}\cite{Wall:2012uf} (see \S\ref{sec:discuss} for further comments about the connection between primitive information quantities and holographic entropy inequalities).}

The example just described is particularly nice because the configuration $\c$ contains a minimal number of bulk surfaces. Moreover, the corresponding constraints are linearly independent. This is in fact the cleanest configuration that generates ${\bf I_3}$. However, to be able to systematize the search, we need to understand what is the origin of the constraints. Indeed, as we explained in \S\ref{sec:formalization}, it is precisely the possible structures of constraints that we need to classify, rather than the configurations themselves. 

As we mentioned in \S\ref{subsec:organizing}, for purposes of organizing the search in a way that can be generalized to more ($>3$) subsystems, it is useful to consider a restricted class of configurations where all the regions which compose the various subsystems are not adjacent to each other, i.e., they do not share any portion of their boundaries. This restriction is also preferable form a field theory perspective, since in this case the mutual information between all component subsystems is finite. The strategy will then be to first scan over this restricted class of configurations and only subsequently ask whether there is any new information quantity that can be generated by lifting this restriction.

Let us start by considering the simplest possible configuration, three disjoint disks $\mathcal{A}, \mathcal{B}, \mathcal{C}$ which are sufficiently separated from each other to be completely uncorrelated, i.e., ${\bf I_2}(\mathcal{A}:\mathcal{BC})=0$, ${\bf I_2}(\mathcal{B}:\mathcal{AC})=0$ and ${\bf I_2}(\mathcal{C}:\mathcal{AB})=0$. This trivial configuration cannot generate any primitive information quantity because the dimension of the space of solutions is too large. It is nevertheless useful to look at the structure of the corresponding constraints to build intuition for what follows. One  finds 
\begin{equation}
\begin{split}
\alpha:&\quad \qcf{\mathcal{A}}+\qcf{\mathcal{AB}}+\qcf{\mathcal{AC}}+\qcf{\mathcal{ABC}}=0\\
\beta:&\quad \qcf{\mathcal{B}}+\qcf{\mathcal{AB}}+\qcf{\mathcal{BC}}+\qcf{\mathcal{ABC}}=0 \\
\gamma:&\quad \qcf{\mathcal{C}}+\qcf{\mathcal{AC}}+\qcf{\mathcal{BC}}+\qcf{\mathcal{ABC}}=0
\end{split}
\label{eq:abconstraints} 
\end{equation}
Notice that the first constraint, which we call $\alpha$, is the sum of all the variables $ \qcf{\sj}$ where the index $\mathscr{I}$ contains the label $\mathcal{A}$, and similarly for $\beta$ and $\gamma$. 

If we move the disk $\mathcal{B}$ closer to $\mathcal{A}$, such that ${\bf I_2}(\mathcal{A}:\mathcal{B})\neq 0$ while we still have ${\bf I_2}(\mathcal{C}:\mathcal{A}\mathcal{B})=0$, the constraints change and we get
\begin{equation}
\begin{split}
\alpha\bar{\beta}:&\quad \qcf{\mathcal{A}}+\qcf{\mathcal{AC}}=0 \\
\beta\bar{\alpha}:&\quad \qcf{\mathcal{B}}+\qcf{\mathcal{BC}}=0 \\
\gamma:&\quad \qcf{\mathcal{C}}+\qcf{\mathcal{AC}}+\qcf{\mathcal{BC}}+\qcf{\mathcal{ABC}}=0 \\
\alpha\beta:&\quad \qcf{\mathcal{AB}}+\qcf{\mathcal{ABC}}=0
\end{split}
\end{equation}
where the bars now indicate which labels \textit{do not} appear in the sum. For example, $\alpha\bar{\beta}$ is the sum over all $ \qcf{\sj}$ where the index $\mathscr{I}$ contains the label $\mathcal{A}$ but not the label $\mathcal{B}$, while $\alpha\beta$ is the sum over all $ \qcf{\sj}$ where the index contains \textit{both} $\mathcal{A}$ and $\mathcal{B}$. Notice now that these constraints satisfy the simple relations
\begin{equation}
\begin{split}
\alpha\bar{\beta}+\alpha\beta=\alpha \\
\beta\bar{\alpha}+\alpha\beta=\beta
\end{split}
\end{equation}
Therefore if we replace the constraints $\alpha\bar{\beta}$ and $\beta\bar{\alpha}$ with $\alpha$ and $\beta$ while keeping $\alpha \beta$, obviously the solution is unchanged. We will say that the constraints $\{\alpha,\beta,\gamma,\alpha\beta\}^{\text{can}}$ are the \textit{canonical form} of the original constraints $\{\alpha\bar{\beta},\beta\bar{\alpha},\gamma,\alpha\beta\}$ derived from the configuration. The canonical form is characterized by the fact that there are ``no bars''.\footnote{ A precise definition will be given in \S\ref{sec:multipartite_information} for arbitrary ${\sf N}$.}

\begin{figure}[]
\centering
\begin{subfigure}{1\textwidth}
\centering
\begin{tikzpicture}
\draw[fill=red!15] (0,0) circle (1.2cm);
\draw[fill=blue!15] (-3,0) circle (1.2cm);
\draw[fill=green!15] (3,0) circle (1.2cm);
\draw[dashed] (-1.8,0) -- (-1.2,0);
\draw[dashed] (1.2,0) -- (1.8,0);
\node at (-3,0) {\small{$\mathcal{A}$}};
\node at (0,0) {\small{$\mathcal{B}$}};
\node at (3,0) {\small{$\mathcal{C}$}};
\node at (-4,1) {\small{$a$}};
\node at (-1,1) {\small{$b$}};
\node at (2,1) {\small{$c$}};
\end{tikzpicture}
\end{subfigure}

\vspace{1cm}

\begin{subfigure}{1\textwidth}
\centering
\small
\begin{tabular}{|| c || c | c | c | c | c | c | c || c ||}
\hline
\hline
{\shadeB } &  {\shadeB }  & {\shadeB  } &
	{ \shadeB } & { \shadeB } &
	{ \shadeB } & { \shadeB } & { \shadeB } & { \shadeB } \\
{\shadeB Surfaces} &  {\shadeB $S_{\mathcal{A}}$}      & {\shadeB $S_{\mathcal{B}}$ } &
	{ \shadeB $S_{\mathcal{C}}$} & { \shadeB $S_{\mathcal{AB}}$} &
	{ \shadeB $S_{\mathcal{AC}}$} & { \shadeB $S_{\mathcal{BC}}$} & { \shadeB $S_{\mathcal{ABC}}$} & { \shadeB Relations} \\
{\shadeB } &  {\shadeB }  & {\shadeB } &
	{ \shadeB } & { \shadeB } &
	{ \shadeB } & { \shadeB } &{ \shadeB } & { \shadeB } \\
\hline
\hline
{\shadeR $a$} & {\shadeR \checkmark} & {\shadeR } &
	{\shadeR } & {\shadeR } & {\shadeR \checkmark} &  {\shadeR } &
	{\shadeR } &  {\shadeR $\alpha\bar{\beta}$} \\
\hline 
{\shadeR $b$} & {\shadeR } & {\shadeR \checkmark} &
	{\shadeR } & {\shadeR } & {\shadeR } &  {\shadeR } &
	{\shadeR } &  {\shadeR $\beta\bar{\alpha}\bar{\gamma}$} \\
\hline 
 {\shadeR $c$} & {\shadeR } & {\shadeR } &
	{\shadeR \checkmark} & {\shadeR } & {\shadeR \checkmark} &  {\shadeR } &
	{\shadeR } &  {\shadeR $\gamma\bar{\beta}$} \\
\hline 
 {\shadeR $ab$} & {\shadeR } & {\shadeR } &
	{\shadeR } & {\shadeR \checkmark} & {\shadeR } &  {\shadeR } &
	{\shadeR } &  {\shadeR $\alpha\beta\bar{\gamma}$} \\
\hline 
 {\shadeR $bc$} & {\shadeR } & {\shadeR } &
	{\shadeR } & {\shadeR } & {\shadeR } &  {\shadeR \checkmark} &
	{\shadeR } &  {\shadeR $\beta\gamma\bar{\alpha}$} \\
\hline 
 {\shadeR $abc$} & {\shadeR } & {\shadeR } &
	{\shadeR } & {\shadeR } & {\shadeR } &  {\shadeR } &
	{\shadeR \checkmark} &  {\shadeR $\alpha\beta\gamma$} \\
\hline
\hline
\end{tabular}
\end{subfigure}
\caption{The configuration that generates the mutual information ${\bf I_2}(\mathcal{A}:\mathcal{C})$. The dashed lines indicate the pattern of mutual information between the disks, ${\bf I_2}(\mathcal{A}:\mathcal{B})\neq 0$ and ${\bf I_2}(\mathcal{B}:\mathcal{C})\neq 0$ while ${\bf I_2}(\mathcal{A}:\mathcal{C})=0$.}
\label{fig:mutual_information}
\end{figure}

If we now also bring the disk $\mathcal{C}$ closer to $\mathcal{B}$ we get a configuration that generates the mutual information, see Fig.~\ref{fig:mutual_information}. The table shows the constraints as obtained directly from the configuration, without any manipulations. Using relations between the constraints like the ones above, one can check that
\begin{equation}
\{\alpha\bar{\beta},\beta\bar{\alpha}\bar{\gamma},\gamma\bar{\beta},\alpha\beta\bar{\gamma},\beta\gamma\bar{\alpha},\alpha\beta\gamma\}\equiv\{\alpha, \beta, \gamma,  \alpha \beta, \beta\gamma, \alpha\beta\gamma\}^{\text{can}}\,.
\end{equation}
\begin{figure}[]
\centering
\begin{subfigure}{1\textwidth}
\centering
\begin{tikzpicture}
\draw[fill=red!15] (-5,-1.5) circle (1.2cm);
\draw[fill=green!15] (-5,1.5) circle (1.2cm);
\draw[fill=green!15] (0,-1.5) circle (1.2cm);
\draw[fill=blue!15] (0,1.5) circle (1.2cm);
\draw[fill=blue!15] (5,-1.5) circle (1.2cm);
\draw[fill=red!15] (5,1.5) circle (1.2cm);
\draw[dashed] (-5,-0.3) -- (-5,0.3);
\draw[dashed] (0,-0.3) -- (0,0.3);
\draw[dashed] (5,-0.3) -- (5,0.3);
\node at (-5,-1.5) {\small{$\mathcal{A}_1$}};
\node at (-5,1.5) {\small{$\mathcal{B}_1$}};
\node at (0,-1.5) {\small{$\mathcal{B}_2$}};
\node at (0,1.5) {\small{$\mathcal{C}_1$}};
\node at (5,-1.5) {\small{$\mathcal{C}_2$}};
\node at (5,1.5) {\small{$\mathcal{A}_2$}};
\node at (-6.1,-0.4) {\small{$a_1$}};
\node at (-6.1,2.6) {\small{$b_1$}};
\node at (-1.1,2.6) {\small{$c_1$}};
\node at (-1.1,-0.4) {\small{$b_2$}};
\node at (3.9,2.6) {\small{$a_2$}};
\node at (3.9,-0.4) {\small{$c_2$}};
\end{tikzpicture}
\end{subfigure}

\vspace{1cm}

\begin{subfigure}{1\textwidth}
\centering
\small
\begin{tabular}{|| c || c | c | c | c | c | c | c || c ||}
\hline
\hline
{\shadeB } &  {\shadeB }  & {\shadeB  } &
	{ \shadeB } & { \shadeB } &
	{ \shadeB } & { \shadeB } & { \shadeB } & { \shadeB } \\
{\shadeB Surfaces} &  {\shadeB $S_{\mathcal{A}}$}      & {\shadeB $S_{\mathcal{B}}$ } &
	{ \shadeB $S_{\mathcal{C}}$} & { \shadeB $S_{\mathcal{AB}}$} &
	{ \shadeB $S_{\mathcal{AC}}$} & { \shadeB $S_{\mathcal{BC}}$} & { \shadeB $S_{\mathcal{ABC}}$} & { \shadeB Relations} \\
{\shadeB } &  {\shadeB }  & {\shadeB } &
	{ \shadeB } & { \shadeB } &
	{ \shadeB } & { \shadeB } &{ \shadeB } & { \shadeB } \\
\hline
\hline
{\shadeR $a_1$} & {\shadeR \checkmark} & {\shadeR } &
	{\shadeR } & {\shadeR } & {\shadeR \checkmark} &  {\shadeR } &
	{\shadeR } &  {\shadeR $\alpha\bar{\beta}$} \\
\hline 
{\shadeR $b_1$} & {\shadeR } & {\shadeR \checkmark} &
	{\shadeR } & {\shadeR } & {\shadeR } &  {\shadeR \checkmark} &
	{\shadeR } &  {\shadeR $\beta\bar{\alpha}$} \\
\hline 
 {\shadeR $a_1b_1$} & {\shadeR } & {\shadeR } &
	{\shadeR } & {\shadeR \checkmark} & {\shadeR } &  {\shadeR } &
	{\shadeR \checkmark} &  {\shadeR $\alpha\beta$} \\
\hline 
 {\shadeR $b_2$} & {\shadeR } & {\shadeR \checkmark} &
	{\shadeR } & {\shadeR \checkmark} & {\shadeR } &  {\shadeR } &
	{\shadeR } &  {\shadeR $\beta\bar{\gamma}$} \\
\hline 
 {\shadeR $c_1$} & {\shadeR } & {\shadeR } &
	{\shadeR \checkmark} & {\shadeR } & {\shadeR \checkmark} &  {\shadeR } &
	{\shadeR } &  {\shadeR $\gamma\bar{\beta}$} \\
\hline 
 {\shadeR $b_2c_1$} & {\shadeR } & {\shadeR } &
	{\shadeR } & {\shadeR } & {\shadeR } &  {\shadeR \checkmark} &
	{\shadeR \checkmark} &  {\shadeR $\beta\gamma$} \\
\hline	
	{\shadeR $c_2$} & {\shadeR } & {\shadeR } &
	{\shadeR \checkmark} & {\shadeR } & {\shadeR } &  {\shadeR \checkmark} &
	{\shadeR } &  {\shadeR $\gamma\bar{\alpha}$} \\
\hline 
 {\shadeR $a_2$} & {\shadeR \checkmark} & {\shadeR } &
	{\shadeR } & {\shadeR \checkmark} & {\shadeR } &  {\shadeR } &
	{\shadeR} &  {\shadeR $\alpha\bar{\gamma}$} \\
\hline 
 {\shadeR $a_2c_2$} & {\shadeR } & {\shadeR } &
	{\shadeR } & {\shadeR } & {\shadeR \checkmark} &  {\shadeR } &
	{\shadeR \checkmark} &  {\shadeR $\alpha\gamma$} \\
\hline
\hline
\end{tabular}
\end{subfigure}
\caption{Alternative derivation of ${\bf I_3}(\mathcal{A}:\mathcal{B}:\mathcal{C})$, the configuration now is composed of correlated pairs of disjoint regions.}
\label{fig:alternative_tripartite_info}
\end{figure}

If the configuration is composed of only three disks one can easily convince oneself that the mutual information is the only information quantity that can be generated. If there are fewer correlations than in the previous case, the system of constraints has infinitely many solutions. If $\mathcal{A}$ and $\mathcal{C}$ are also correlated there are too many linearly independent constraints and the solution trivializes. Finally, if we permute the pattern of correlations we simply get other permutations of the mutual information for three parties.

To allow for the possibility of new information quantities we need to consider more complicated subsystems. The obvious generalization is the case where each of the subsystems $\mathcal{A}$, $\mathcal{B}$, $\mathcal{C}$ is composed of an arbitrary number of disks. This in fact allows us  to construct an alternative configuration that gives the tripartite information, see Fig.~\ref{fig:alternative_tripartite_info}. Again, the constraints can be rewritten as
\begin{equation}
\{\alpha, \beta, \gamma, \alpha\beta, \alpha\gamma, \beta\gamma\}^{\text{can}}\,.
\end{equation}
Notice that so far all the information quantities have been obtained from elements of the following list
\begin{equation}
\{\alpha, \beta, \gamma, \alpha\beta, \alpha\gamma, \beta\gamma, \alpha\beta\gamma\}^{\text{can}}\,.
\end{equation}
This fact will play a central role in \S\ref{sec:multipartite_information}.

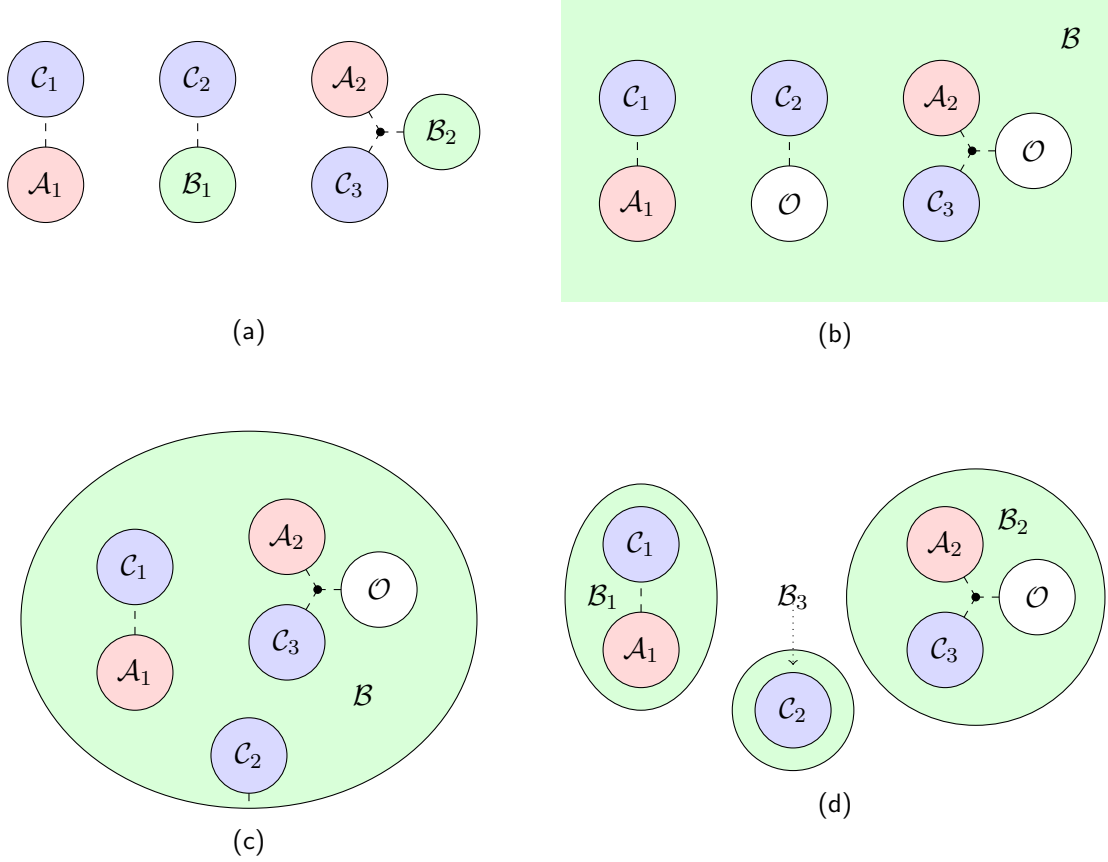
\begin{figure}[t]
\centering
\begin{subfigure}{0.49\textwidth}
\centering
\vspace{0.5cm}
\begin{tikzpicture}
\draw[fill=red!15] (-2,-0.7) circle (0.5cm);
\draw[fill=blue!15] (-2,0.7) circle (0.5cm);
\draw[fill=green!15] (0,-0.7) circle (0.5cm);
\draw[fill=blue!15] (0,0.7) circle (0.5cm);
\draw[fill=blue!15] (2,-0.7) circle (0.5cm);
\draw[fill=red!15] (2,0.7) circle (0.5cm);
\draw[fill=green!15] (3.212,0) circle (0.5cm);
\draw[dashed] (-2,-0.2) -- (-2,0.2);
\draw[dashed] (0,-0.2) -- (0,0.2);
\draw[dashed] (2.404,0) -- (2.25,-0.267);
\draw[dashed] (2.404,0) -- (2.712,0);
\draw[dashed] (2.404,0) -- (2.25,0.267);
\draw[fill=black] (2.404,0) circle (0.05cm);
\node at (-2,-0.7) {\small{$\mathcal{A}_1$}};
\node at (-2,0.7) {\small{$\mathcal{C}_1$}};
\node at (0,-0.7) {\small{$\mathcal{B}_1$}};
\node at (0,0.7) {\small{$\mathcal{C}_2$}};
\node at (2,-0.7) {\small{$\mathcal{C}_3$}};
\node at (2,0.7) {\small{$\mathcal{A}_2$}};
\node at (3.212,0) {\small{$\mathcal{B}_2$}};
\end{tikzpicture}
\vspace{1cm}
\caption{}
\end{subfigure}
\hfill
\begin{subfigure}{0.49\textwidth}
\centering
\begin{tikzpicture}
\fill[green!15] (-3,-2) -- (-3,2) --  (4.2,2) -- (4.2,-2) ;
\draw[fill=red!15] (-2,-0.7) circle (0.5cm);
\draw[fill=blue!15] (-2,0.7) circle (0.5cm);
\draw[fill=white] (0,-0.7) circle (0.5cm);
\draw[fill=blue!15] (0,0.7) circle (0.5cm);
\draw[fill=blue!15] (2,-0.7) circle (0.5cm);
\draw[fill=red!15] (2,0.7) circle (0.5cm);
\draw[fill=white] (3.212,0) circle (0.5cm);
\draw[dashed] (-2,-0.2) -- (-2,0.2);
\draw[dashed] (0,-0.2) -- (0,0.2);
\draw[dashed] (2.404,0) -- (2.25,-0.267);
\draw[dashed] (2.404,0) -- (2.712,0);
\draw[dashed] (2.404,0) -- (2.25,0.267);
\draw[fill=black] (2.404,0) circle (0.05cm);
\node at (3.7,1.5) {\small{$\mathcal{B}$}};
\node at (-2,-0.7) {\small{$\mathcal{A}_1$}};
\node at (-2,0.7) {\small{$\mathcal{C}_1$}};
\node at (0,-0.7) {\small{$\mathcal{O}$}};
\node at (0,0.7) {\small{$\mathcal{C}_2$}};
\node at (2,-0.7) {\small{$\mathcal{C}_3$}};
\node at (2,0.7) {\small{$\mathcal{A}_2$}};
\node at (3.212,0) {\small{$\mathcal{O}$}};
\end{tikzpicture}
\caption{}
\end{subfigure}

\vspace{1cm}

\begin{subfigure}{0.49\textwidth}
\centering
\begin{tikzpicture}
\draw[fill=green!15] (0,0) circle [x radius=3cm, y radius=2.5cm, rotate=0];
\draw[fill=red!15] (-1.5,-0.7) circle (0.5cm);
\draw[fill=blue!15] (-1.5,0.7) circle (0.5cm);
\draw[fill=blue!15] (0,-1.8) circle (0.5cm);
\draw[fill=blue!15] (0.5,-0.3) circle (0.5cm);
\draw[fill=red!15] (0.5,1.1) circle (0.5cm);
\draw[fill=white] (1.712,0.4) circle (0.5cm);
\draw[dashed] (-1.5,-0.2) -- (-1.5,0.2);
\draw[dashed] (0,-2.3) -- (0,-2.5);
\draw[dashed] (0.904,0.4) -- (0.75,0.133);
\draw[dashed] (0.904,0.4) -- (1.212,0.4);
\draw[dashed] (0.904,0.4) -- (0.75,0.667);
\draw[fill=black] (0.904,0.4) circle (0.05cm);
\node at (1.5,-1) {\small{$\mathcal{B}$}};
\node at (-1.5,-0.7) {\small{$\mathcal{A}_1$}};
\node at (-1.5,0.7) {\small{$\mathcal{C}_1$}};
\node at (0,-1.8) {\small{$\mathcal{C}_2$}};
\node at (0.5,-0.3) {\small{$\mathcal{C}_3$}};
\node at (0.5,1.1) {\small{$\mathcal{A}_2$}};
\node at (1.712,0.4) {\small{$\mathcal{O}$}};
\end{tikzpicture}
\caption{}
\end{subfigure}
\hfill
\begin{subfigure}{0.49\textwidth}
\centering
\begin{tikzpicture}
\draw[fill=green!15] (-2,0) circle [x radius=1cm, y radius=1.5cm, rotate=0];
\draw[fill=green!15] (2.404,0) circle (1.7cm);
\draw[fill=green!15] (0,-1.5) circle (0.8cm);
\draw[fill=red!15] (-2,-0.7) circle (0.5cm);
\draw[fill=blue!15] (-2,0.7) circle (0.5cm);
\draw[fill=blue!15] (0,-1.5) circle (0.5cm);
\draw[fill=blue!15] (2,-0.7) circle (0.5cm);
\draw[fill=red!15] (2,0.7) circle (0.5cm);
\draw[fill=white] (3.212,0) circle (0.5cm);
\draw[dashed] (-2,-0.2) -- (-2,0.2);
\draw[dashed] (2.404,0) -- (2.25,-0.267);
\draw[dashed] (2.404,0) -- (2.712,0);
\draw[dashed] (2.404,0) -- (2.25,0.267);
\draw[dotted,->] (0,0) -- (0,-0.9);
\draw[fill=black] (2.404,0) circle (0.05cm);
\node at (-2.5,0) {\small{$\mathcal{B}_1$}};
\node at (2.9,1) {\small{$\mathcal{B}_2$}};
\node at (0,0) {\small{$\mathcal{B}_3$}};
\node at (-2,-0.7) {\small{$\mathcal{A}_1$}};
\node at (-2,0.7) {\small{$\mathcal{C}_1$}};
\node at (0,-1.5) {\small{$\mathcal{C}_2$}};
\node at (2,-0.7) {\small{$\mathcal{C}_3$}};
\node at (2,0.7) {\small{$\mathcal{A}_2$}};
\node at (3.212,0) {\small{$\mathcal{O}$}};
\end{tikzpicture}
\caption{}
\end{subfigure}
\caption{Deriving Araki-Lieb from subadditivity via purification. Starting from a configuration (a) which generates subadditivity (we leave as an exercise for the reader to verify that this is the case, cf., the building blocks construction of \S\ref{sec:multipartite_information}), swap the subsystem $\mathcal{B}$ with the purifier $\mathcal{O}$ (b). We then choose one of the components of the purifier as a the new ``universe'' (c). The resulting configuration can be decomposed into simpler building blocks (d).}
\label{fig:AL_via_purification}
\end{figure}
\begin{figure}[tb]
\centering
\begin{subfigure}{0.49\textwidth}
\centering
\begin{tikzpicture}
\draw[fill=blue!15] (-2,0) circle (1.5cm);
\draw[fill=green!15] (-2,0) circle (1cm);
\draw[fill=red!15] (-2,0) circle (0.5cm);
\draw[fill=green!15] (2,0) circle (1.5cm);
\draw[fill=blue!15] (2,0) circle (1cm);
\draw[fill=red!15] (2,0) circle (0.5cm);
\node at (-2,0) {\footnotesize{$\mathcal{A}_1$}};
\node at (-2.7,0) {\footnotesize{$\mathcal{B}_1$}};
\node at (-3.2,0) {\footnotesize{$\mathcal{C}_1$}};
\node at (2,0) {\footnotesize{$\mathcal{A}_2$}};
\node at (1.3,0) {\footnotesize{$\mathcal{C}_2$}};
\node at (0.8,0) {\footnotesize{$\mathcal{B}_2$}};
\end{tikzpicture}
\caption{}
\end{subfigure}
\hfill
\begin{subfigure}{0.49\textwidth}
\centering
\begin{tikzpicture}
\draw[fill=green!15] (0,0) circle (1.5cm);
\fill[blue!15] (0,0.5) -- (0,1.5) --  (0,1.5) arc [start angle=90, end angle=-90, radius=1.5cm] -- (0,-0.5)
(0,-0.5) arc [start angle=-90, end angle=90, radius=0.5cm];
\draw (0,0) circle (1.5cm);
\draw[fill=red!15] (0,0) circle (0.5cm);
\draw (0,0.5) -- (0,1.5);
\draw (0,-0.5) -- (0,-1.5);
\node at (0,0) {\small{$\mathcal{A}$}};
\node at (-1,0) {\small{$\mathcal{B}$}};
\node at (1,0) {\small{$\mathcal{C}$}};
\end{tikzpicture}
\caption{}
\label{fig:AL_alternatives2}
\end{subfigure}
\caption{Two other (simpler) configurations that give Araki-Lieb. One requires multiple enveloping (a), while the other requires multiple intersection of entangling surfaces (b).}
\label{fig:AL_alternatives}
\end{figure}
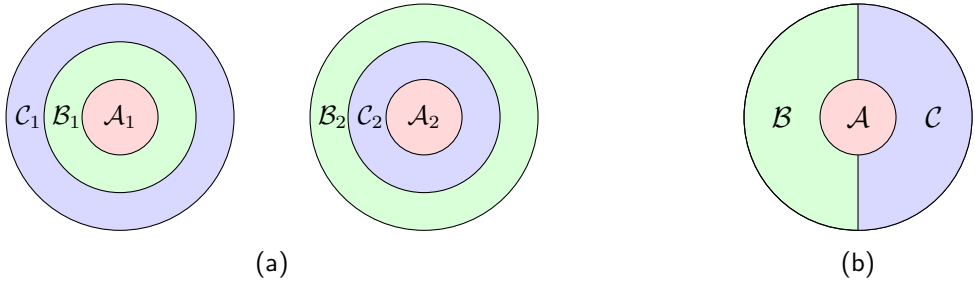

If the purifier is connected, the mutual and tripartite information are the only fundamental information quantities that can be generated for three parties. We will prove a more general version of this fact in \S\ref{sec:multipartite_information}, for arbitrary ${\sf N}$. Therefore to find new information quantities we have to consider even more complicated configurations. 

The last facet of the three parties holographic entropy cone that we have to generate is $\bQ^{\text{AL}}$. Since this is in fact equivalent to the mutual information, it should be possible to directly apply the purification symmetry to any configuration which generates the mutual information to derive a new configuration that generates $\bQ^{\text{AL}}$. This is in fact the case, and the procedure is shown in Fig.~\ref{fig:AL_via_purification}. There exist also simpler configurations which generate AL and some of them are shown in Fig.~\ref{fig:AL_alternatives}, however notice that they require either multiple enveloping or multiple intersections of entangling surfaces.  

We have shown how all the information quantities corresponding to the various facets of the ${\sf N}=3$ holographic entropy cone can be derived within our framework. In principle there could be additional primitive information quantities and they would correspond to hyperplanes that cut through the holographic entropy cone. We will defer a rigorous proof that this is not the case to a forthcoming work \cite{Hubeny:2018aa}, but intuitively we can already see why this is not possible by a simple counting argument. 
For ${\sf N}=3$ we have ${\sf D}=7$; therefore a configuration that generates a new fundamental quantity should be associated to six linearly independent constraints. However, we note that seven different constraints are necessary to avoid saturating the information  quantities already at hand. As this is an over-constrained system now, we do not expect to find further primitive information quantities.

\section{Holographic derivation of the multipartite information}
\label{sec:multipartite_information}

In this section we generalize the 3-party construction presented in \S\ref{sec:three_parties}, and we prove a general result about a particular family of primitive information quantities which are generated, for arbitrary ${\sf N}$, from a certain type of configurations. Although in this work we will not complete the scan over all possible configurations for arbitrary values of ${\sf N}$, the derivation presented here is the first step towards the solution of the more general problem \cite{Hubeny:2018aa}.

We will work in the vacuum of a single $(2+1)$-dimensional CFT on $\mathbb{R}^{2,1}$. 
As explained in \S\ref{sec:formalization} this is not a restriction, since changing the state (or the number of dimensions) does not generate any new information quantity. We consider ${\sf N}$ monochromatic subsystems $\regA_\ell$ ($\ell\in[{\sf N}]$), generally defined as unions of an arbitrary number of connected regions $\regA_\ell=\bigcup_i\regA_\ell^i$. A polychromatic subsystem $\regA_\si$ is the union of a collection of monochromatic ones (see \S\ref{sec:formalization} for more details on the definitions and the notation). It will often be convenient to keep track of the number of colors that characterize various objects; we will call it the \textit{degree} and we will denote it by an index $\kappa$. For example, a polychromatic index $\si$ of degree $\kappa$ will be denoted by $\si_\kappa$ and could characterize a $\kappa$-degree polychromatic subsystems $\regA_{\si_\kappa}$ or a $\kappa$-degree component of the entropy vector $S_{\si_\kappa}$.

Instead of considering completely general regions, in this section we will make two restrictions. First, we assume that each region has a single connected boundary (unlike, for instance, the example presented in Fig.~\ref{fig:restrictions_a}) 
\begin{equation}
\partial\regA_{\ell}^i\quad\text{is connected}\quad\forall\;\ell,i \,.
\label{eq:restriction2}
\end{equation}
Note that in $(2+1)$-dimensional field theory \eqref{eq:restriction2} is equivalent to imposing that each region is simply connected.\footnote{ However this is special to a $(2+1)$-dimensional setting and such an assumption would be unnecessarily restrictive in higher dimensions.  For example in $3+1$ dimensions a region with the topology of a solid torus would be allowed because even if not simply connected it has a connected boundary.} The second assumption we make is that all regions are completely disjoint, so that the boundaries of any two regions do not have any common point, i.e.,
\begin{equation}
\regA_{\ell_1}^{i_1}\cap\regA_{\ell_2}^{i_2}=\emptyset, \qquad \forall\;\ell_1,\ell_2,i_1,i_2\,.
\label{eq:restriction1}
\end{equation}
As we already mentioned in \S\ref{sec:formalization}, working in this disjoint scenario is a natural choice from a quantum field theory perspective, since it implies that the mutual information between any pair of subsystems is finite. The set of all possible configurations, for fixed ${\sf N}$, which fulfill the above requirements will be denoted by $\mathfrak{C}_{\sf N}$. 
In the present case of $(2+1)$-dimensional CFT, all configurations in $\mathfrak{C}_{\sf N}$ then have the topology of collections of disjoint disks.

It will be useful to consider a subset of ${\sf n}$ (out of ${\sf N}$) colors, which we can implement as follows. For a given value of ${\sf N}$, consider a permutation $\sigma$ (an element of the symmetric group of ${\sf N}$) defined as
\begin{equation}
\sigma: [{\sf N}]\longrightarrow [{\sf N}],\qquad\ell\longmapsto\sigma(\ell)\,.
\end{equation}
For any integer ${\sf n}$,  with $2\leq {\sf n}\leq {\sf N}$, we define a $(\sigma,{\sf n})$-\textit{reduction} of the entropy vector to be the entropy vector for ${\sf n}$-color subsystem obtained by  keeping the first ${\sf n}$ entries of a $\sigma$-permutation of the ${\sf N}$ colors.\footnote{ One can simply think of this as the collection of colors $\{\sigma(1), \sigma(2), \cdots, \sigma({\sf n})\}$.}  
Specifically, we consider the set $[{\sf n}]\subseteq[{\sf N}]$ and the restriction of $\sigma$ to this set 
\begin{equation}
\sigma |_{[{\sf n}]}: [{\sf n}]\longrightarrow [{\sf N}],\qquad\ell\longmapsto\sigma(\ell)\,.
\label{eq:reduction}
\end{equation}
The new monochromatic indices run over the image of $[{\sf n}]$ under the permutation, i.e., $\sigma(\ell)\in\sigma([{\sf n}])$. The polychromatic indices are collection of the new monochromatic ones as usual
\begin{equation}
\si^{(\sigma,{\sf n})}\subseteq \sigma([{\sf n}])\quad \text{and}\quad \si^{(\sigma,{\sf n})}\neq\emptyset\,.
\end{equation}
If a polychromatic index has degree $\kappa$, we will write $\si_\kappa^{(\sigma,{\sf n})}$ and we can think of the entropies $S_{\si^{(\sigma,{\sf n})}}$ as being the components of a ``reduced entropy vector''. Under these assumptions the following theorem holds
\begin{theorem}
{\bf (``${\bf I}_{\sf n}$-Theorem")}
For a given ${\sf N}$, the set of all the primitive information quantities generated by all the configurations in $\mathfrak{C}_{\sf N}$ is
\begin{equation}
\{{\bf I}_{\sf n}^{(\sigma)},\;\forall\sigma,\;2\leq {\sf n}\leq {\sf N}\}
\end{equation}
where ${\bf I}_{\sf n}^{(\sigma)}$ is the ${\sf n}$-partite information
\begin{equation}
{\bf I}_{\sf n}^{(\sigma)}(\regA_{\sigma(\ell_1)}:\regA_{\sigma(\ell_2)}:...:\regA_{\sigma(\ell_{\sf n})})=\sum_{\kappa=1}^{\sf n}\sum_{\si_\kappa^{(\sigma,{\sf n})}}(-1)^{\kappa+1}S_{\si_\kappa^{(\sigma,{\sf n})}}\,.
\end{equation}
\label{thm:In}
\end{theorem}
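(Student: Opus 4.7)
The plan is to prove the theorem in three stages, pivoting on the notion of \emph{canonical constraints}. For every non-empty $\sk \subseteq [{\sf N}]$, set
\begin{equation}
C_\sk(\bQ) \eqdef \sum_{\si\supseteq\sk} Q_\si \,.
\end{equation}
These $2^{\sf N}-1$ linear functionals form a basis of the dual of the entropy-coefficient space, and an elementary M\"obius-type check shows that the ${\sf n}$-partite information ${\bf I}_{\sf n}^{(\sigma)}$ is annihilated by $C_\sk$ precisely when $\sk\neq\sigma([{\sf n}])$. Thus ${\bf I}_{\sf n}^{(\sigma)}$ is the unique quantity (up to scale) in the codimension-one kernel determined by removing a single canonical constraint of degree at least two. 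The task therefore splits into (i) realising each such codimension-one kernel as the raw-constraint span of some $\c_{\sf N}\in\mathfrak{C}_{\sf N}$, and (ii) showing that no other codimension-one span can arise.

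For existence I generalise the cyclic construction of Fig.~\ref{fig:alternative_tripartite_info} in \S\ref{sec:three_parties}. Fix $\sk\subseteq[{\sf N}]$ with $|\sk|={\sf n}\geq 2$. I build $\c_{\sf N}$ by placing on $\mathbb{R}^2$, for every non-empty subset $\sk'\subseteq[{\sf N}]$ with $|\sk'|\geq 2$ and $\sk'\neq\sk$, a compact cluster of $|\sk'|$ disks (one of each colour in $\sk'$), tuned so their joint extremal surface is a single multi-dome but no pairwise arches are favoured; different clusters are placed far enough apart to remain mutually uncorrelated, and an isolated small disk is added for each remaining colour. Inside each cluster the multi-dome raw constraint is exactly $C_{\sk'}$, while the individual-disk dome constraints, after combination via the M\"obius identity below, contribute the single-colour canonicals $C_{\{\ell\}}$ for $\ell\in\sk'$. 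Collecting over all clusters yields $\mathrm{span}\{C_{\sk'} : \sk'\neq\sk\}$, whose unique annihilator is $\pm{\bf I}_{\sf n}^{(\sigma)}$ with $\sigma([{\sf n}])=\sk$. Crucially no cluster carries colours exactly $\sk$, so $C_\sk$ is not generated (the canonical basis being linearly independent).

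For uniqueness, the technical engine is a reduction lemma: for any $\c_{\sf N}\in\mathfrak{C}_{\sf N}$, the raw constraint associated with an extremal component $\omega^\mu$ has the barred form $\sum_{\si\supseteq\sk_+,\,\si\cap\sk_-=\emptyset} Q_\si=0$ (where $\sk_+$ are the colours of disks covered by $\omega^\mu$ and $\sk_-$ records the competing colour clusters), which by inclusion-exclusion equals
\begin{equation}
\sum_{\sk_0\subseteq\sk_-} (-1)^{|\sk_0|}\, C_{\sk_+\cup\sk_0} \,.
\end{equation}
Thus the raw span is always a subspace of the canonical span. A simple but essential observation is that every single-colour canonical $C_{\{\ell\}}$ is automatically in the raw span: pick any disk $D$ of colour $\ell$ and sum the raw constraints of all extremal components that cover $D$. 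Since exactly one such component enters the HRT surface for $\regA_\si$ whenever $\ell\in\si$ (and none otherwise), the sum is literally $C_{\{\ell\}}$. Hence the missing canonical direction of a primitive configuration has degree at least two; combined with axis-alignment of the raw span in the canonical basis (addressed next), this identifies the primitive quantity as ${\bf I}_{|\sk^*|}^{(\sigma^*)}$ for the unique $\sk^*$ absent from the span.

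The main obstacle is precisely this last axis-alignment step. A priori the codimension-one raw span of a primitive configuration could lie obliquely with respect to the canonical basis, producing a primitive quantity outside the ${\bf I}_{\sf n}$ family. Ruling this out requires a structural statement about which integer combinations of canonical constraints can be realised by multi-dome decompositions of extremal surfaces in the CFT$_3$ vacuum for disjoint simply-connected regions, and ultimately rests on entanglement-wedge nesting together with a careful combinatorial accounting of homology classes of collections of disjoint disks.
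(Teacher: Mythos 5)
Your overall architecture tracks the paper's quite closely: your functionals $C_\sk$ are exactly the canonical-form constraints $\f^{\text{can}}_{\sk}$, the M\"obius-type check that ${\bf I}_{\sf n}^{(\sigma)}$ is annihilated by every $C_\sk$ with $\sk\neq\sigma([{\sf n}])$ reproduces the content of the paper's final lemma (solving $\mathfrak{F}^{\text{can}}\setminus\f^{\text{can}}_{\widehat{\si}_{\sf n}}$), your cluster construction with far-separated pieces is the paper's building blocks assembled by uncorrelated union, and your summation argument showing that every single-colour canonical $C_{\{\ell\}}$ lies in the raw span is a clean, direct version of the statement that $\mathfrak{F}_{[{\sf N}]}$ is obligatory. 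The existence half of the theorem is therefore essentially in place.

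The genuine gap is in the uniqueness half, and you have flagged it yourself: the ``axis-alignment'' claim — that for \emph{every} $\c_{\sf N}\in\mathfrak{C}_{\sf N}$ the span of the raw constraints coincides with the span of a \emph{subset} of the canonical basis — is precisely the paper's central technical result (Lemma~\ref{lemma:canonical_form}), and you do not prove it; you only gesture at entanglement-wedge nesting and a combinatorial accounting of homology classes. Without it, a configuration could a priori yield a codimension-one constraint span oblique to the canonical basis and hence a primitive quantity outside the ${\bf I}_{\sf n}$ family, so the classification is not established. The paper closes this by first showing (Lemma~\ref{lemma:topology}) that each connected component computes the entropy of a domain whose every region has non-vanishing mutual information with the rest of that domain, and then inducting downward in the degree of the surfaces, using monotonicity of mutual information and the \emph{uniqueness} of the higher-degree surface whose domain contains a given domain, to convert each raw constraint to canonical form by adding already-canonicalized higher-degree constraints. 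Moreover, your stated ``reduction lemma'' is false as written: raw constraints need not have the barred form $\sum_{\si\supseteq\sk_+,\,\si\cap\sk_-=\emptyset}\qcf{\si}=0$. Already for the three-colour building block whose only non-trivial surface is the three-legged octopus, the dome over the colour-$1$ disk appears in $S_{1}$, $S_{12}$, $S_{13}$ but not in $S_{123}$, giving $\qcf{1}+\qcf{12}+\qcf{13}=0$, which is not of barred form for any choice of $\sk_-$; the correct statement (a surface drops out exactly when $\si$ contains the \emph{entire} colour set of the unique enveloping higher-degree surface) is what the paper's induction provides. Note also that the consequence you draw from it — ``raw span $\subseteq$ canonical span'' — is vacuous, since the $C_\sk$ form a basis of the full dual space; the non-trivial content is the axis-alignment itself, which remains unproven in your proposal.
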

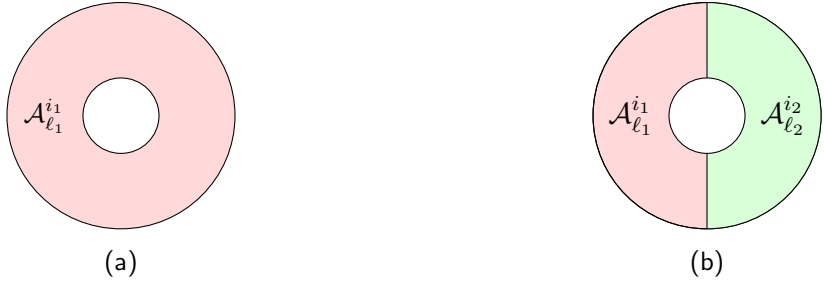
\begin{figure}[tb]
\centering
\begin{subfigure}{0.49\textwidth}
\centering
\begin{tikzpicture}
\draw[fill=red!15] (0,0) circle (1.5cm);
\draw[fill=white!15] (0,0) circle (0.5cm);
\node at (-1,0) {\footnotesize{$\regA_{\ell_1}^{i_1}$}};
\end{tikzpicture}
\caption{}
\label{fig:restrictions_a}
\end{subfigure}
\hfill
\begin{subfigure}{0.49\textwidth}
\centering
\begin{tikzpicture}
\draw[fill=red!15] (0,0) circle (1.5cm);
\fill[green!15] (0,0.5) -- (0,1.5) --  (0,1.5) arc [start angle=90, end angle=-90, radius=1.5cm] -- (0,-0.5)
(0,-0.5) arc [start angle=-90, end angle=90, radius=0.5cm];
\draw (0,0) circle (1.5cm);
\draw[fill=white!15] (0,0) circle (0.5cm);
\draw (0,0.5) -- (0,1.5);
\draw (0,-0.5) -- (0,-1.5);
\node at (-1,0) {\small{$\regA_{\ell_1}^{i_1}$}};
\node at (1,0) {\small{$\regA_{\ell_2}^{i_2}$}};
\end{tikzpicture}
\caption{}
\label{fig:restrictions_b}
\end{subfigure}
\caption{Examples of regions that violate the assumptions \eqref{eq:restriction2} and \eqref{eq:restriction1}. The region in (a) is disallowed because it has a disconnected boundary. The configuration in (b) is disallowed because the two regions share part of their boundaries. Note that the domain $\regA_{\ell_1}^{i_1}\cup\regA_{\ell_2}^{i_2}$ in (b) is connected but it has a disconnected boundary.}
\label{fig:restrictions}
\end{figure}

To prove the ${\bf I}_{\sf n}$-Theorem, we have to perform the scan over all possible configurations in $\mathfrak{C}_{\sf N}$. This will be done in the rest of this section through various steps. As explained in \S\ref{sec:formalization}, we can reorganize the scan over all possible configurations into a scan over equivalence classes. The first step then is to classify all equivalence classes in $\mathfrak{C}_{\sf N}$; this is achieved through Lemma~\ref{lemma:canonical_form} and Lemma~\ref{lemma:building_blocks} below. Next we have to identify which classes have a corresponding set of constraints with a one-parameter family of solutions (Lemma~\ref{lemma:permutations}). Finally, we solve the corresponding systems of equations to find the primitive information quantities (Lemma~\ref{lemma:solution}).

The restrictions \eqref{eq:restriction2}-\eqref{eq:restriction1} that we are imposing on the topology of the configurations in $\mathfrak{C}_{\sf N}$ have important implications for the structure of the extremal surfaces computing the various entropies; this is the content of Lemma~\ref{lemma:topology} below. To formulate the Lemma, it will be convenient to introduce a new terminology. We will call a \textit{domain} an arbitrary collection of regions, of arbitrary colors. By definition, a domain is a subset of a polychromatic subsystem and will be denoted by $\regA_\si^{\{i\}}\subseteq\regA_\si$, where the lower index is the usual polychromatic index indicating the color of the various regions of the domain, and the upper ``collective'' index $\{i\}$ is just a label for the collection of the various regions belonging to the domain.
\begin{lemma}
For any configuration $\c_{\sf N}\in\mathfrak{C}_{\sf N}$, each surface $\omega\in\om(\c_{\sf N})$ computes the entropy of a domain $\regA_\si^{\{i\}}$. Furthermore, any region within this domain ($\regA_\si^j\in\regA_\si^{\{i\}}$) has non-vanishing mutual information with the union of all other regions in the same domain, i.e., 
\begin{equation}
{\bf I}_2(\regA_\si^j:\regA_\si^{\{i\neq j\}})\neq 0,\qquad\forall\regA_\si^j\in\regA_\si^{\{i\}}\,.
\label{eq:topology}
\end{equation}
\label{lemma:topology}
\end{lemma}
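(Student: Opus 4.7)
My plan is to establish the two assertions separately.

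For the first assertion, I take an arbitrary connected component $\omega$ of an extremal surface $\extr{\regA_\sj}$ for some polychromatic subsystem $\regA_\sj$ realized by $\c_{\sf N}$, and identify the relevant domain directly from its asymptotic anchor. By restriction \eqref{eq:restriction2} each entangling surface $\partial\regA_\ell^i$ is a single connected closed curve, and by \eqref{eq:restriction1} distinct regions have pairwise disjoint entangling surfaces. Since a smooth codimension-$2$ bulk surface cannot terminate in the interior of an entangling curve, the anchor locus of $\omega$ on the conformal boundary must be a disjoint union of \emph{complete} entangling curves. I then define $\regA_\si^{\{i\}}$ to be the domain comprising precisely those regions whose entangling curves appear in $\partial\omega$, with $\si\subseteq\sj$ collecting their colors. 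By construction $\omega$ is anchored to $\partial\regA_\si^{\{i\}}$ and inherits the correct homology class from the ambient $\extr{\regA_\sj}$, so it qualifies as a candidate for $\extr{\regA_\si^{\{i\}}}$. To conclude that it is in fact the minimizer I argue by contradiction: a strictly smaller homologous candidate $\omega'$ could be surgically substituted for $\omega$ inside $\extr{\regA_\sj}$, yielding a surface with the correct ambient homology but smaller total area, in contradiction with the extremality of $\extr{\regA_\sj}$.

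For the second assertion, suppose for contradiction that some $\regA_\si^j\in\regA_\si^{\{i\}}$ satisfies ${\bf I}_2(\regA_\si^j:\regA_\si^{\{i\neq j\}})=0$. Saturation of subadditivity then means that the disjoint union $\extr{\regA_\si^j}\cup\extr{\regA_\si^{\{i\neq j\}}}$ is homologous to $\regA_\si^{\{i\}}$ and has total area equal to $\text{Area}(\omega)$. By the first assertion $\omega$ is itself a globally minimal surface for $\regA_\si^{\{i\}}$, so we would have two distinct minimizers of the same area --- one connected (namely $\omega$) and the other manifestly disconnected. This contradicts the genericity hypothesis of footnote \ref{fn:generiticity}, which mandates a unique (globally minimal) extremal surface for every entangling surface. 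Hence ${\bf I}_2(\regA_\si^j:\regA_\si^{\{i\neq j\}})>0$ strictly.

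The main technical subtlety lies in the surgery step of the first assertion: one must check that replacing a single connected component of an extremal surface by a different candidate with identical conformal-boundary anchor actually preserves the global homology class with respect to the original subsystem $\regA_\sj$. In the time-reflection-symmetric setting this is the familiar minimal-surface cut-and-paste manipulation, where the bulk homology region for $\regA_\sj$ is locally deformed across the region bounded by $\omega\cup\omega'$. In the fully covariant (HRT) case the same conclusion is recovered via Wall's maximin reformulation, which realizes HRT surfaces as minimal surfaces on a distinguished bulk Cauchy slice and thereby reduces the argument back to the static one.
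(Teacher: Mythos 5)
Your proposal is correct and follows essentially the same route as the paper: use the connected-boundary and disjointness restrictions to conclude each connected component is anchored to complete entangling curves and hence homologous to a domain, invoke minimality (the exchange argument) to conclude it computes that domain's entropy, and then note that vanishing mutual information would force the domain's extremal surface to split, contradicting connectedness together with the genericity/uniqueness assumption. The only differences are presentational — the paper treats monochromatic subsystems first and then extends via the disjointness assumption, while you work directly with polychromatic subsystems and spell out the surgery/maximin subtleties more explicitly.
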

\begin{proof} 
Consider a configuration $\c_{\sf N}\in\mathfrak{C}_{\sf N}$ and one of the monochromatic subsystems $\regA_\ell=\bigcup_i\regA_\ell^i$ in $\c_{\sf N}$. In general the extremal surface $\mathcal{E}_{\regA_\ell}$ is not connected and we decompose it as $\mathcal{E}_{\regA_\ell}=\sum_{\mu[\ell]}\omega^{\mu[\ell]}$, where all the $\omega^{\mu[\ell]}$ are connected. Since we are working in the vacuum, each surface $\omega^{\mu[\ell]}$ is anchored to the boundary.
 
In particular, an arbitrary surface $\omega^{\mu[\ell]}$ is anchored to a collection of entangling surfaces. Since we are assuming that each region $\regA_\ell^i\in\regA_\ell$ has a connected boundary \eqref{eq:restriction2}, by the homology constraint $\omega^{\mu[\ell]}$ has to be homologous to a domain $\regA_\ell^{\{i\}}$. Furthermore, the minimization involved in determining $\mathcal{E}_{\regA_\ell}$ implies that $\omega^{\mu[\ell]}$ is the minimal surface within the class of surfaces homologous to this domain and therefore computes its entropy.
 
To extend the argument to polychromatic subsystems we also need to invoke the other assumption \eqref{eq:restriction1}. As a consequence of \eqref{eq:restriction2}-\eqref{eq:restriction1}, any connected component of a domain is a region (i.e.,\ a connected component of a monochromatic subsystem), unlike the example indicated in Fig.~\ref{fig:restrictions_b}. Therefore we can run the same argument as before, completing the proof of the first part of the lemma.

To prove the second part, consider an arbitrary surface $\omega\in\om(\c_{\sf N})$. Again we denote by $\regA_\si^{\{i\}}$ the domain for which $\omega$ computes the entropy and we consider an arbitrary region $\regA_\si^j$ within this domain. The mutual information ${\bf I}_2(\regA_\si^j:\regA_\si^{\{i\neq j\}})$ vanishes if and only if\footnote{ See footnote \ref{fn:generiticity}.}
the surface computing the entropy of $\regA_\si^{\{i\}}$ splits into two surfaces computing the entropies of $\regA_\si^j$ and $\regA_\si^{\{i\neq j\}}$, which is not the case by assumption, because $\omega$ is connected.
\end{proof}

Two comments regarding the proof are in order when we consider non-vacuum states. First, the statement that every $\omega\in\om(\c_{\sf N})$ computes an entropy would not necessarily hold in certain excited states. Second, not all surfaces contributing to entropies need be anchored to the boundary.  For example, in a thermal state corresponding to an eternal Schwarzschild-AdS black hole, if the configuration includes regions in the `plateaux' regime \cite{Hubeny:2013gta}, then the corresponding extremal surface computing the entropy of such large regions consists of an extremal surface homologous to the complement along with a compact extremal surface wrapping the black hole horizon (bifurcation surface).  The latter by itself then does not compute the entropy of any of the given regions, unless we also include the purifier. However, these nuances are immaterial for our considerations, and these additional surfaces cannot change the result of this section, see \S\ref{subsec:redundancy} for further explanations.

As explained in \S\ref{sec:formalization}, two configurations $\c,\c'\in\mathfrak{C}_{\sf N}$ are equivalent if the corresponding sets of constraints $\{\f(\c)\}$ and $\{\f(\c')\}$ have the same subspace of solutions. Therefore, for the purpose of organizing all the configurations of $\mathfrak{C}_{\sf N}$ into equivalence classes, it is convenient to introduce a \textit{canonical form} for the constraints. The set $\mathfrak{F}^\text{can}$ of canonical form constraints, for fixed value of ${\sf N}$, is a set of ${\sf D=2^{\sf N}-1}$ linearly independent equations defined as follows
\begin{equation}
\mathfrak{F}^\text{can}=\{\f_{\si}^{\text{can}},\forall\,\si\subseteq[{\sf N}]\;\text{and}\;\si\neq\emptyset\},\qquad\f_{\si}^{\text{can}}:\;\sum_{\mathscr{K}\supseteq\mathscr{I}}Q_{\mathscr{K}}=0\,.
\label{eq:canonical_form_constaints}
\end{equation}
In other words, the set of coefficients $Q_{\mathscr{K}}$ which add up to zero are given by all the instances of ${\mathscr{K}}$ containing a given $\si$. Notice that this is the straightforward ${\sf N}$-party generalization of the canonical constraints $\{\alpha,\beta,\cdots,\alpha\beta\gamma\}^{\text{can}}$ introduced in \S\ref{sec:three_parties} for the $3$-party case.
We say that a given set of constraints $\mathfrak{F}$ is of the canonical form if it belongs to this set, $\mathfrak{F} \subseteq \mathfrak{F}^\text{can}$.
Using Lemma~\ref{lemma:topology} and the notion of canonical form constraints, we can now prove the main result which allows us to classify all the equivalence classes in $\mathfrak{C}_{\sf N}$.
\begin{lemma}
For any configuration $\c_{\sf N}\in\mathfrak{C}_{\sf N}$, all the corresponding constraints $\{\mathscr{F}(\c_{\sf N})\}$ can be converted to the canonical form \eqref{eq:canonical_form_constaints}.
\label{lemma:canonical_form}
\end{lemma}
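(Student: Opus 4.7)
The plan is to show that each raw constraint $\f_\omega$ associated to a connected surface $\omega \in \om(\c_{\sf N})$ can be reduced, via linear combinations with other raw constraints, to the canonical constraint $\f_{\sj_\omega}^{\text{can}}$, where $\sj_\omega$ denotes the color index of the domain whose entropy $\omega$ computes, as furnished by Lemma~\ref{lemma:topology}.

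The first step is the observation that $\omega$ can appear as a component of $\extr{\regA_\si}$ only when $\si \supseteq \sj_\omega$. This follows from the homology constraint: $\omega$ is homologous to the domain $\regA_{\sj_\omega}^{\{i\}}$, so for $\omega$ to be part of $\extr{\regA_\si}$ this domain must be contained in $\regA_\si$, forcing $\sj_\omega \subseteq \si$. Consequently the raw constraint $\f_\omega:\, \sum_{\si \in X_\omega} Q_\si = 0$, with $X_\omega = \{\si : \omega\in\extr{\regA_\si}\}$, already has support contained in $\{\si : \si \supseteq \sj_\omega\}$, which is precisely the support of the target canonical constraint $\f_{\sj_\omega}^{\text{can}}$; moreover $\sj_\omega \in X_\omega$ by construction.

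The proof then proceeds by induction on the degree of the color index, from the maximum $|\sj_\omega| = {\sf N}$ downward. The base case $\sj_\omega = [{\sf N}]$ is immediate, since both $\f_\omega$ and $\f_{\sj_\omega}^{\text{can}}$ collapse to $Q_{[{\sf N}]} = 0$. For the inductive step, fix $\omega$ with $|\sj_\omega| = \kappa$ and assume that, for every $\omega'$ with $|\sj_{\omega'}| > \kappa$, the raw constraint $\f_{\omega'}$ has already been converted to $\f_{\sj_{\omega'}}^{\text{can}}$ within the span of the raw constraints. The difference $\f_{\sj_\omega}^{\text{can}} - \f_\omega$ equals $\sum_{\si \supsetneq \sj_\omega,\, \si \notin X_\omega} Q_\si$, which involves only $Q_\si$'s of degree strictly greater than $\kappa$. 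M\"obius inversion on the subset lattice rewrites this as an integer combination of canonical constraints $\f_\mathscr{K}^{\text{can}}$ with $\mathscr{K} \supsetneq \sj_\omega$; by the inductive hypothesis these are already in the span of the raw constraints, completing the conversion of $\f_\omega$ to $\f_{\sj_\omega}^{\text{can}}$.

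The main obstacle is guaranteeing that every higher-degree canonical constraint required by the M\"obius expansion is actually in the span of the raw constraints, i.e., that for each relevant $\mathscr{K}$ some $\omega' \in \om(\c_{\sf N})$ has $\sj_{\omega'} = \mathscr{K}$. This requires a geometric input from the nesting and minimality properties of extremal surfaces: whenever $\omega \notin \extr{\regA_\si}$ despite $\sj_\omega \subseteq \si$, the minimal surface computing $S_\si$ must include a replacement component $\omega'$ with $\sj_{\omega'} \supsetneq \sj_\omega$ and $\sj_{\omega'} \subseteq \si$, taking over the homology that $\omega$ would otherwise carry. Iterating this replacement argument, together with the disjointness hypothesis \eqref{eq:restriction1} and connected-boundary hypothesis \eqref{eq:restriction2}, ensures an adequate supply of raw constraints at every degree, closing the induction.
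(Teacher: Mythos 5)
Your overall strategy mirrors the paper's: you use the homology/connectivity input of Lemma~\ref{lemma:topology} to confine the support of the raw constraint of each $\omega$ to the up-set of $\sj_\omega$, and you induct downward in degree. The genuine gap sits exactly at the point you yourself flag as ``the main obstacle,'' and the replacement argument you offer does not close it. The M\"obius expansion of $\f^{\text{can}}_{\sj_\omega}-\f_\omega$ over the Boolean lattice generically involves canonical constraints $\f^{\text{can}}_{\mathscr{K}}$ for every $\mathscr{K}$ entering the inclusion--exclusion of the disappearance set $\{\si\supsetneq\sj_\omega:\ \omega\notin\extr{\regA_\si}\}$. Your replacement lemma only shows that each individual $\si$ in this set contains the color set of \emph{some} realized higher-degree surface, i.e.\ that the set is contained in (in fact equals, granting minimality) a union of up-sets of realized color sets. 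But inclusion--exclusion of such a union requires the up-sets of \emph{unions} $\mathscr{K}_1\cup\mathscr{K}_2\cup\cdots$ of those color sets, and nothing in your argument guarantees that these unions are themselves realized by surfaces in $\om(\c_{\sf N})$. So the assertion ``for each relevant $\mathscr{K}$ some $\omega'\in\om(\c_{\sf N})$ has $\sj_{\omega'}=\mathscr{K}$'' is precisely what remains unproven, and it is not implied by the existence of replacements; as a result the inductive step does not close.

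What is actually needed (and what the paper establishes from Lemma~\ref{lemma:topology}, monotonicity of mutual information, and minimality of the competing surfaces) is an exact characterization of the disappearance pattern of each surface: (i) for a surface just below the highest realized degree, the enveloping higher-degree surface, when it exists, is \emph{unique}, so the surface drops out of $S_\si$ exactly for $\si$ containing that one realized color set and the correction is a single realized canonical constraint; (ii) for lower-degree surfaces, whenever two realized surfaces envelop the same domain there must exist a realized surface of the highest degree enveloping both, and the conversion proceeds via the telescoping identity $\f^{\text{can}}_{\widehat{\si}_{\kappa''}}=\f(\widehat{\omega}_{\kappa''})+\f(\widehat{\omega}^{(1)}_{\kappa'})+\f(\widehat{\omega}^{(2)}_{\kappa'})+\f^{\text{can}}_{\widehat{\si}_{\kappa^*}}$, which deliberately uses the \emph{raw} constraints of the intermediate surfaces instead of canonical constraints at possibly unrealized unions of color sets. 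Your proposal would be complete if you proved these two structural facts (the uniqueness statement and the existence of a common realized enveloper), or otherwise pinned down the disappearance sets exactly; without them, ``an adequate supply of raw constraints at every degree'' is an assertion rather than a proof.
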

\begin{proof}
For a configuration $\c_{\sf N}$, consider the set $\om=\om(\c_{\sf N})$ and a surface $\widehat{\omega}\in\om$, where the ``hat'' is stressing the fact that we are fixing a choice of one particular surface. We call $\widehat{\si}$ the set of colors of the domain $\regA_{\widehat{\si}}^{\{i\}}$ of which $\widehat{\omega}$ computes the entropy (Lemma~\ref{lemma:topology}). To find the constraint associated to $\widehat{\omega}$ we should examine various entries $S_\si$ of the entropy vector and determine for which of them $\widehat{\omega}$ enters in the combination of surfaces computing the entropy. For any surface $\widehat{\omega}\in\om$ we have the following possibilities depending on $\si$:
\begin{enumerate}[(i)]
\item When $\si\not\supseteq\widehat{\si}$, $S_\si$ does not include $\widehat{\omega}$.
\item $S_{\widehat{\si}}$ includes $\widehat{\omega}$.
\item When $\si\supset\widehat{\si}$, $S_\si$ may or may not include $\widehat{\omega}$ depending on the details of the configuration.
\end{enumerate}
(i) follows because in this case the domain $\regA_{\widehat{\si}}^{\{i\}}$ is not a subset of $\regA_\si$. (ii) follows from the definition of $\om$. Therefore the structure of the constraint depends entirely on (iii) and we have to look at the various possibilities for all the surfaces in $\om$. 

It will be convenient to organize the investigation of the structure of the constraints according to the number of colors associated to a surface. 
Mimicking the earlier terminology  for the number of colors of a polychromatic subsystem, we will say that a surface $\omega_\kappa\in\om$ has \textit{degree} $\kappa$ if the corresponding domain has $\kappa$ different colors. We decompose $\om$ as 
\begin{equation}
\om=\bigcup_\kappa\om_\kappa
\end{equation}
where $\om_\kappa$ is the set containing all surfaces $\omega_\kappa$ of degree $\kappa$. Note that $\om_1$ cannot be empty, while in general $\om_\kappa$ can be empty for any $\kappa\geq 2$.\footnote{ For example, if each $\regA_\ell$ is made of a single region and ${\bf I}_2(\regA_\ell:\c_{\sf N}\setminus\regA_\ell)=0\;\forall \ell$, then $\om_\kappa=\emptyset$ for all $\kappa\geq 2$.}

We denote by $\om_{\kappa^*}$ the set of surfaces of \textit{highest degree}, i.e., the set corresponding to the largest value of $\kappa$ such that $\om_{\kappa^*}\neq\emptyset$ and $\om_{\kappa}=\emptyset$ for all $\kappa^*<\kappa\leq {\sf N}$. Consider a surface $\widehat{\omega}_{\kappa^*}\in\om_{\kappa^*}$.  By Lemma~\ref{lemma:topology}, the surface $\widehat{\omega}_{\kappa^*}$ computes the entropy of a domain $\regA_{\widehat{\si}_{\kappa^*}}^{\{i\}}$. Each region within such domain has non-vanishing mutual information with the union of all the other regions in the same domain, Eq.~\eqref{eq:topology}. Since the mutual information cannot decrease by including additional subsystems, the surface $\widehat{\omega}_{\kappa^*}$ can `disappear' from entropies of higher degree only if it is replaced by a new surface which connects additional regions. However this would contradict the assumption that $\widehat{\omega}_{\kappa^*}$ has the highest degree. 

Therefore the surface $\widehat{\omega}_{\kappa^*}$ appears precisely in all the entropies $S_\si$ with $\si\supseteq\widehat{\si}_{\kappa^*}$ and the corresponding constraint is by definition in its canonical form. By the same argument, all surfaces in $\om_{\kappa^*}$ are associated to constraints which are automatically in their canonical form.

Consider now the set $\om_{\kappa'}$, where $\kappa'$ is the largest value of $\kappa$ such that $\kappa'<\kappa^*$ and $\om_{\kappa'}\neq\emptyset$, and choose a surface $\widehat{\omega}_{\kappa'}\in\om_{\kappa'}$. When $\si\supset\widehat{\si}_{\kappa'}$, we have the following possibilities:
\begin{enumerate}[(a)]
\item When $\mathscr{I}_\kappa\supset\widehat{\mathscr{I}}_{\kappa'}$ with $\kappa'<\kappa<\kappa^*$, $S_{\si_\kappa}$ includes $\widehat{\omega}_{\kappa'}$.
\item When $\mathscr{I}_\kappa\supset\widehat{\si}_{\kappa'}$ with $\kappa\geq\kappa^*$, there are two alternatives, depending on whether there exists a surface $\widehat{\omega}_{\kappa^*}\in\om_{\kappa^*}$ such that $\regA_{\widehat{\si}_{\kappa'}}^{\{i\}}\subset\regA_{\widehat{\si}_{\kappa^*}}^{\{j\}}$,
\begin{enumerate}[(b1)]
\item if such surface does not exist, then all the $S_{\si_\kappa}$ include $\widehat{\omega}_{\kappa'}$, for all $\mathscr{I}_\kappa\supset\widehat{\si}_{\kappa'}$ and $\kappa\geq\kappa^*$,
\item if such surface exists, it is \textit{unique}, and for all $\kappa\geq\kappa^*$,
\begin{itemize}
\item when $\si_\kappa\not\supseteq\widehat{\si}_{\kappa^*}$, $S_{\si_\kappa}$ includes $\widehat{\omega}_{\kappa'}$.
\item when $\si_\kappa\supseteq\widehat{\si}_{\kappa^*}$, $S_{\si_\kappa}$ does not include $\widehat{\omega}_{\kappa'}$.
\end{itemize} 
\end{enumerate}
\end{enumerate}
(a) and (b1) follow using the same argument that we used for the case $\kappa^*$, and the fact that all $\om_\kappa$ are empty (with $\kappa'<\kappa<\kappa^*$). In this case the constraint associated to $\widehat{\omega}_{\kappa'}$ is again automatically in canonical form. In the case (b2), the existence of this particular surface $\widehat{\omega}_{\kappa^*}$ implies (again from Lemma~\ref{lemma:topology})
\begin{equation}
{\bf I}_2\left(\regA_{\widehat{\si}_{\kappa^*}}^{\{j\}}\setminus\regA_{\widehat{\si}_{\kappa'}}^{\{i\}}:\regA_{\widehat{\si}_{\kappa'}}^{\{i\}}\right)\neq 0\,.
\end{equation}
To prove the uniqueness of the surface $\widehat{\omega}_{\kappa^*}$, suppose that there exist two different surfaces $\widehat{\omega}_{\kappa^*}^{(1)}$ and $\widehat{\omega}_{\kappa^*}^{(2)}$ such that 
\begin{equation}
\regA_{\widehat{\si}_{\kappa'}}^{\{i\}}\subset\regA_{\widehat{\si}_{\kappa^*}^{(1)}}^{\{j\}}\quad\text{and}\quad\regA_{\widehat{\si}_{\kappa'}}^{\{i\}}\subset\regA_{\widehat{\si}_{\kappa^*}^{(2)}}^{\{k\}}\,.
\end{equation}
Then Lemma~\ref{lemma:topology} and monotonicity of mutual information\footnote{ Specifically we use the fact that ${\bf I}_2(X:Y)\neq 0 \ \Rightarrow \ {\bf I}_2(X:YZ)\neq 0,\;\forall X,Y,Z$.} would imply the existence of a surface which connects all the regions in the three domains. The existence of such surface would either contradict the existence of the two previous surfaces or the hypothesis that $\kappa^*$ is the highest degree in $\om$. 

Now, if the unique surface $\widehat{\omega}_{\kappa^*}$ exists, then again by the same argument that we used for the surfaces of highest degree, $\widehat{\omega}_{\kappa'}$ must appear in all $S_{\si_\kappa}$ with $\si_\kappa\not\supseteq\widehat{\si}_{\kappa^*}$ (as for the cases (a) and (b1)). The reason why $\widehat{\omega}_{\kappa'}$ cannot appear in any $S_{\si_\kappa}$ with $\si_\kappa\supseteq\widehat{\si}_{\kappa^*}$ is that the surface $\widehat{\omega}_{\kappa^*}$ by definition has smaller area than the union of $\widehat{\omega}_{\kappa'}$ and another minimal surface homologous to $\regA_{\widehat{\si}_{\kappa^*}}^{\{j\}}\setminus\regA_{\widehat{\si}_{\kappa'}}^{\{i\}}$. It follows then that in this case we have 
\begin{equation}
\f^{\text{can}}_{\widehat{\si}_{\kappa'}}=\f(\widehat{\omega}_{\kappa'})+\f^{\text{can}}_{\widehat{\si}_{\kappa^*}}\,,
\end{equation}
meaning that we can use the canonical constraint for a surface $\widehat{\omega}_{\kappa^*}$ to convert to canonical form the one for $\widehat{\omega}_{\kappa'}$. Proceeding in this fashion for all surfaces in $\om_{\kappa'}$ we can convert all the corresponding constraints to canonical form.

Finally, consider the set $\om_{\kappa''}$, where $\kappa''$ is the largest values of $\kappa$ such that $\kappa''<\kappa'$ and $\om_{\kappa''}\neq\emptyset$, and choose a surface $\widehat{\omega}_{\kappa''}\in\om_{\kappa''}$. The same logic that we used before, for $\kappa'$ and $\kappa^*$, can be applied also to this case. The important difference however is that now, if there is a surface $\widehat{\omega}_{\kappa'}\in\om_{\kappa'}$ such that $\regA_{\widehat{\si}_{\kappa''}}^{\{i\}}\subset\regA_{\widehat{\si}_{\kappa'}}^{\{j\}}$, it does not have to be unique any more. On the other hand, if there are two surfaces $\widehat{\omega}_{\kappa'}^{(1)}$ and $\widehat{\omega}_{\kappa'}^{(2)}$, the same logic that we used above to prove uniqueness can now be used to show that there must exist a higher degree surface $\widehat{\omega}_{\kappa^*}\in\om_{\kappa^*}$  such that the constraint for $\widehat{\omega}_{\kappa''}$ can be written as
\begin{equation}
\f^{\text{can}}_{\widehat{\si}_{\kappa''}}=\f(\widehat{\omega}_{\kappa''})+\f(\widehat{\omega}^{^(1)}_{\kappa'})+\f(\widehat{\omega}^{^(2)}_{\kappa'})+\f^{\text{can}}_{\widehat{\si}_{\kappa^*}}\,.
\end{equation}
The logic naturally generalizes to the case where there are more than two surfaces. By iterating this procedure for all non-empty sets $\om_k$, all the way down to $\om_1$, we can then convert into canonical form all the constraints for the entire configuration.
\end{proof}

The previous lemma shows that the constraints associated to any configuration are equivalent to a subset $\mathfrak{F}$ of the set $\mathfrak{F}^\text{can}$ of canonical form constraints, but it does not automatically imply that an \textit{arbitrary} subset can be realized by some configuration. To prove which combinations of canonical form constraints are consistent with the possible configurations in $\mathfrak{C}_{\sf N}$, it will be convenient to define a particular way of combining configurations to build new ones. 

Given two configurations $\c'_{\sf N}$ and $\c''_{\sf N}$ we want to construct a new configuration $\c'_{\sf N}\sqcup\c''_{\sf N}$, which we call the \textit{uncorrelated union}, defined as follows. We consider the two configurations in the same copy of the vacuum state, and we take them sufficiently far apart from each other,\footnote{ 
Had the CFT lived on ${\bf S}^2$ rather than ${\mathbb{R}}^2$, we would also potentially need to first shrink each configuration appropriately in order to fully decorrelate them; however, as mentioned in \S\ref{subsec:redundancy} and verified below, the case of ${\mathbb{R}}^2$ is sufficient for our considerations.} such that ${\bf I}_2(\c'_{\sf N}:\c''_{\sf N})=0$. The following property then holds:
\begin{lemma}
For a configuration $\c_{\sf N}=\c'_{\sf N}\sqcup\c''_{\sf N}$, the list of constraints $\{\f(\c_{\sf N})\}$ is the union of the two lists of constraints $\{\f(\c'_{\sf N})\}$ and $\{\f(\c''_{\sf N})\}$ for $\c'_{\sf N}$ and $\c''_{\sf N}$, respectively.
\label{lemma:uncorrelated_union}
\end{lemma}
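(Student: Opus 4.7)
The plan is to show that the uncorrelated union operation decomposes the bulk extremal surface computation cleanly: $\om(\c_{\sf N})$ is the disjoint union of $\om(\c'_{\sf N})$ and $\om(\c''_{\sf N})$, and each such surface contributes to the same entropies as in its original configuration. Once this is established, the equality of constraint lists follows immediately from the definition \eqref{eq:constaints_definition}: each row of the matrix $M_{\si\mu}$ for $\c_{\sf N}$ coincides with the corresponding row of the matrix for either $\c'_{\sf N}$ or $\c''_{\sf N}$, so the associated linear equations are literally the same.

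First I would use the vanishing mutual information hypothesis to establish the decomposition of extremal surfaces. For each polychromatic subsystem $\regA_\si$ in $\c_{\sf N}$, we have $\regA_\si=\regA'_\si\cup\regA''_\si$ with $\regA'_\si\subseteq\c'_{\sf N}$ and $\regA''_\si\subseteq\c''_{\sf N}$. By monotonicity of mutual information, the assumption ${\bf I}_2(\c'_{\sf N}:\c''_{\sf N})=0$ implies ${\bf I}_2(\regA'_\si:\regA''_\si)=0$ for every choice of $\si$. Invoking the standard holographic characterization of vanishing mutual information (saturation of subadditivity, as reviewed in \S\ref{subsec:overview2b}), this forces $\extr{\regA_\si}=\extr{\regA'_\si}\cup\extr{\regA''_\si}$ as a disjoint union of extremal surfaces, each localized in the bulk region dual to its respective configuration.

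Iterating over all $\si$, I obtain $\om(\c_{\sf N})=\om(\c'_{\sf N})\sqcup\om(\c''_{\sf N})$. Moreover each $\omega\in\om(\c'_{\sf N})$ appears in $S_\si(\c_{\sf N})$ precisely when it appears in $S_\si(\c'_{\sf N})$, and symmetrically for $\omega\in\om(\c''_{\sf N})$, because the $\c'$-side and $\c''$-side contributions to $\extr{\regA_\si}$ are independent of each other. Hence the $(0,1)$-matrix $M_{\si\mu}$ governing $\c_{\sf N}$ is the column-concatenation of the matrices for $\c'_{\sf N}$ and $\c''_{\sf N}$, so its rows, i.e., the constraints $\f(\c_{\sf N})$, are exactly the union of $\f(\c'_{\sf N})$ and $\f(\c''_{\sf N})$.

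The only delicate point, which I expect to be the main obstacle, is justifying rigorously the clean splitting $\extr{\regA_\si}=\extr{\regA'_\si}\cup\extr{\regA''_\si}$ with no ``bridging'' extremal surface connecting the two halves of $\c_{\sf N}$. This is ensured by the combination of (i) sufficient spatial separation, which makes any candidate connected surface homologous to $\regA_\si$ strictly longer than the disconnected one, and (ii) the genericity assumption of footnote~\ref{fn:generiticity}, which avoids fine-tuned phase transitions at the splitting. A minor bookkeeping subtlety is that one should formally distinguish labels of surfaces in $\om(\c'_{\sf N})$ from those in $\om(\c''_{\sf N})$, so that no accidental identification collapses the disjoint union; this is harmless since the two sets of surfaces are geographically disjoint in the bulk.
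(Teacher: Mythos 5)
Your proposal is correct and follows essentially the same route as the paper: the hypothesis ${\bf I}_2(\c'_{\sf N}:\c''_{\sf N})=0$ is used (via monotonicity) to split every extremal surface, which the paper phrases as additivity of the proto-entropy, $S_\si(\c_{\sf N})=S_\si(\c'_{\sf N})+S_\si(\c''_{\sf N})$ for all $\si$, after which linearity of $\bQ$ and the independence of the surfaces $\{\omega^{\mu'}\}$, $\{\omega^{\mu''}\}$ give the union of constraints, exactly as in your block decomposition of $M_{\si\mu}$. The only nitpick is a bookkeeping slip at the end (constraints are indexed by the surface label $\mu$, so adjoining the surfaces of $\c''_{\sf N}$ enlarges that index set rather than "rows" versus "columns"), which does not affect the argument.
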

\begin{proof}
Clearly translating $\c'_{\sf N}$ and $\c''_{\sf N}$ does not change the constraints $\{\f(\c'_{\sf N})\}$ and $\{\f(\c''_{\sf N})\}$. Furthermore, since ${\bf I}_2(\c'_{\sf N}:\c''_{\sf N})=0$, we have $S_\mathscr{I}(\c_{\sf N})=S_\mathscr{I}(\c'_{\sf N})+S_\mathscr{I}(\c''_{\sf N})$ for all $\mathscr{I}$. Since $\bQ$ is linear, \eqref{eq:surface_decomposition} now gives 
\begin{equation}
\bQ({\bf S}(\c_{\sf N}))=\sum_{\mu'} \left(\sum_\si\,  M'_{\si \mu'} \, \qcf{\si} \right) \omega^{\mu'}+\sum_{\mu''} \left(\sum_\si\,  M''_{\si \mu''} \, \qcf{\si} \right) \omega^{\mu''}\,,
\end{equation}
and since all the surfaces $\{\omega^{\mu'}\}$ and $\{\omega^{\mu''}\}$ are independent, the thesis follows.
\end{proof}

Using the uncorrelated union, we can now prove which combinations of canonical form constraints can be realized by some configuration.  
It is helpful to note at the outset that the constraints must always include those coming from the constraints associated with each individual color, i.e., those of the form 
$\alpha$, $\beta$, etc.\ in the notation introduced in Eq.~\eqref{eq:abconstraints}. The essential result we need is summarized by the following 
\begin{lemma}
For any subset $\mathfrak{F}\subseteq\mathfrak{F}^{\rm can}$, there exists a configuration $\c_{\sf N}\in\mathfrak{C}_{\sf N}$ such that $\{\f(\c_{\sf N})\}=\mathfrak{F}$
if and only if
\begin{equation}
\mathfrak{F} \supseteq \mathfrak{F}_{[{\sf N}]}\ , \qquad {\rm where} \ \ \mathfrak{F}_{[{\sf N}]}\eqdef\{\f^{\rm can}_\ell,\; \ell\!\in\!\left[{\sf N}\right]\}\,.
\label{eq:subsets}
\end{equation}
\label{lemma:building_blocks}
\end{lemma}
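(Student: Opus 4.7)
The lemma is a biconditional, so I would treat the two directions separately. For the \emph{only if} direction, for any $\c_{\sf N}\in\mathfrak{C}_{\sf N}$ and each color $\ell\in[{\sf N}]$ the entropy $S_{\regA_\ell}$ is computed by a non-empty collection of connected bulk extremal surfaces; by Lemma~\ref{lemma:topology} each such surface is homologous to a (monochromatic) domain contained in $\regA_\ell$, hence has degree $1$ with color index $\ell$. Applying the canonicalization procedure of Lemma~\ref{lemma:canonical_form} to any such degree-$1$ surface yields a canonical constraint of degree $1$ with color index $\ell$, i.e.\ exactly $\f^{\rm can}_\ell$. Therefore $\mathfrak{F}_{[{\sf N}]}\subseteq\{\f(\c_{\sf N})\}$ for every configuration in $\mathfrak{C}_{\sf N}$.

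For the \emph{if} direction I would argue by explicit construction, building any target $\mathfrak{F}\supseteq\mathfrak{F}_{[{\sf N}]}$ as an uncorrelated union of elementary blocks and invoking Lemma~\ref{lemma:uncorrelated_union}. The base block $\c^0_{\sf N}$ consists of ${\sf N}$ mutually decoupled monochromatic disks (one per color); since all mutual informations vanish, one verifies directly that $\{\f(\c^0_{\sf N})\}=\mathfrak{F}_{[{\sf N}]}$. For each polychromatic index $\si$ with $|\si|\geq 2$ I would introduce an \emph{octopus block} $\c^\si_{\sf N}$: place $|\si|$ disks of the colors in $\si$ in a symmetric arrangement (for example, on the vertices of a regular $|\si|$-gon) tuned so that the minimal surface homologous to $\regA_\si$ is a single connected octopus while no proper subset of the cluster admits a sub-octopus; then adjoin ${\sf N}-|\si|$ far-away isolated disks of the remaining colors. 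A short case analysis shows that $\om(\c^\si_{\sf N})$ consists only of the octopus (degree $|\si|$) plus one dome per disk (degree $1$). The octopus appears in $S_{\si'}$ iff $\si'\supseteq\si$, giving directly $\f^{\rm can}_\si$; the dome of an in-cluster color $\ell\in\si$ appears in $S_{\si'}$ iff $\ell\in\si'$ and $\si'\not\supseteq\si$, producing the raw relation $\f^{\rm can}_\ell-\f^{\rm can}_\si=0$ which combines with the octopus constraint to yield $\f^{\rm can}_\ell$; finally the isolated disks give $\f^{\rm can}_\ell$ for $\ell\notin\si$. Altogether $\{\f(\c^\si_{\sf N})\}=\mathfrak{F}_{[{\sf N}]}\cup\{\f^{\rm can}_\si\}$. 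For any target $\mathfrak{F}=\mathfrak{F}_{[{\sf N}]}\cup\{\f^{\rm can}_{\si_1},\ldots,\f^{\rm can}_{\si_r}\}$, the uncorrelated union $\c^0_{\sf N}\sqcup\c^{\si_1}_{\sf N}\sqcup\cdots\sqcup\c^{\si_r}_{\sf N}$ then realizes $\mathfrak{F}$ by Lemma~\ref{lemma:uncorrelated_union}.

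The main obstacle is the geometric realization of the octopus block for arbitrary $|\si|$: I need to verify that one can simultaneously tune the disk sizes and separations so that the joint minimal surface over the entire $|\si|$-cluster is connected while every strict sub-cluster minimizes as a disjoint union of domes. For $|\si|=3$ this is exactly the nesting-of-entanglement-wedges argument recalled in \S\ref{subsec:redundancy} (the absence of pairwise arches is compatible with the existence of the full three-legged octopus). For $|\si|\geq 4$, the permutation symmetry of a regular-polygon arrangement reduces the problem to monitoring a single scale parameter, and by continuity and the monotonic dependence of the ``octopus vs.\ domes'' area comparison on cluster size one expects an open window of scales in which precisely the full $|\si|$-octopus is favored. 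Making this windowing argument rigorous (or equivalently, exhibiting a hierarchical or inductive construction that produces the desired octopus) is the only non-formal step of the proof.
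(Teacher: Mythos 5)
Your proposal is correct and follows essentially the same route as the paper: the converse direction via Lemma~\ref{lemma:topology} and the canonicalization of the (necessarily present) degree-one surfaces, and the direct direction via a base configuration $\c_{\sf N}^{^\circ}$ of decoupled monochromatic disks plus, for each $\f^{\rm can}_{\widehat{\si}_{\sf n}}$, a building block in which only the full $\widehat{\si}_{\sf n}$-cluster is connected by an octopus (the paper realizes your ``octopus block'' by pulling the $\widehat{\si}_{\sf n}$ disks inward to a radius below a threshold $r^*$), assembled with the uncorrelated union of Lemma~\ref{lemma:uncorrelated_union}. The geometric window you flag as the only non-formal step is treated at the same level of rigor in the paper, which simply asserts the existence of $r^*$ (appealing to the nesting/mutual-information pattern discussed in \S\ref{subsec:redundancy}), so your proposal is not missing anything the paper supplies.
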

\vspace{-1cm}
\begin{proof}
We first prove the direct statement by explicitly constructing, for any $\mathfrak{F}\subseteq\mathfrak{F}^\text{can}$ satisfying \eqref{eq:subsets}, a configuration with an equivalent set of constraints. Consider ${\sf N}$ disks with their centers arranged on a circle of radius $R$ such that the distance along the circle between two nearest-neighbor disks is constant (see Fig.~\ref{fig:building_blocks_a}). We attribute a different color to each disk, such that we have $[{\sf N}]$ subsystems $\regA_\ell$ each of which is composed of a single region. Furthermore, we assume that $R$ is sufficiently large such that 
\begin{equation}
{\bf I}_2\left(\regA_{\ell_i}:\bigcup_{j\neq i}\regA_{\ell_j}\right)=0,\qquad \forall i\in[{\sf N}]\,.
\end{equation}
This particular configuration will be denoted by $\c_{\sf N}^{^\circ}$. 
It is straightforward to check that the set of constraints associated to this particular configuration is precisely $\mathfrak{F}_{[{\sf N}]}$. 

Consider now the set $\mathfrak{F}=\mathfrak{F}_{[{\sf N}]}\cup\{\f_{\widehat{\si}_n}^{\text{can}}\}$, where $\f_{\widehat{\si}_n}^{\text{can}}$ is a choice of an arbitrary additional single constraint. To construct a configuration $\c_{\sf N}$ which gives precisely the specified set of constraints, $\mathfrak{F}=\{\f(\c_{\sf N})\}$, we proceed as follows. Starting from $\c_{\sf N}^{^\circ}$ we move the disks corresponding to $\widehat{\si}_n$ inward while holding the other disks fixed (see Fig.~\ref{fig:building_blocks_a}), such that the centers of these disks are now located on a smaller circle of radius $r$. As $r$ decreases below some threshold, the mutual information between any subset of disks in $\widehat{\si}_n$ and the collection of the remaining ones will no longer vanish, while the analogous mutual information for any proper subset of such collections still remains zero.  In particular, there exists a minimal value $r^*$, which depends on the size of the disks, such that: 
\begin{itemize}
\item The entropy of the union of these $\widehat{\si}_n$ disks will be computed by a multi-legged octopus\footnote{ While irrelevant to our discussion here, it is interesting to note that these octopoid surfaces can in fact have non-trivial topology in the bulk, leading to `holey-octopi'. We thank Erik Tonni for checking some examples numerically.}
surface connecting all of them. 
\item The entropy of any subset of these disks is still computed by a union of domes homologous to the various disks.
\end{itemize}
Using the procedure of Lemma~\ref{lemma:canonical_form}, one can easily show that the constraints associated to this configuration are equivalent to $\mathfrak{F}$ as we wanted. We denote this configuration by $\c_{\sf N}^{^\circ}[\widehat{\si}_n]$  and refer to it as a \textit{building block}. 

Finally consider an arbitrary set $\mathfrak{F}$ satisfying \eqref{eq:subsets},
i.e., one which includes all the obligatory constraints $\mathfrak{F}_{[{\sf N}]}$.
 Using the uncorrelated union defined in Lemma~\ref{lemma:uncorrelated_union}, the desired configuration, such that $\{\f(\c_{\sf N})\}=\mathfrak{F}$, can be constructed from the building blocks as follows
\begin{equation}
\c_{\sf N}=\bigsqcup_{\widehat{\si}_n\;\text{s.t.}\;\f_{\widehat{\si}_n}^{\text{can}}\in\mathfrak{F}}\c_{\sf N}^{^\circ}[\widehat{\si}_n]\,.
\end{equation}
In other words, we can simply add constraints one by one, each implemented by its corresponding configuration $\c_{\sf N}^{^\circ}[\widehat{\si}_n]$ separated from all the others.

To prove the converse statement
(that if $\mathfrak{F}$ does not include the full set $\mathfrak{F}_{[{\sf N}]}$, then it cannot be generated by any configuration),
one only has to revisit the proof of Lemma~\ref{lemma:canonical_form}. The configuration $\c_{\sf N}^{^\circ}$ is the one which has the minimal number of entangling surfaces, and as we discussed above, gives $\mathfrak{F}_{[{\sf N}]}$. For any other configuration $\c_{\sf N}\in\mathfrak{C}_{\sf N}$, Lemma~\ref{lemma:canonical_form} showed that the constraints can be reduced to canonical form and it is clear from the proof (since $\om_1$ cannot be empty) that one always has $\mathfrak{F}_{[{\sf N}]}\subseteq\mathfrak{F}$. 
\end{proof}

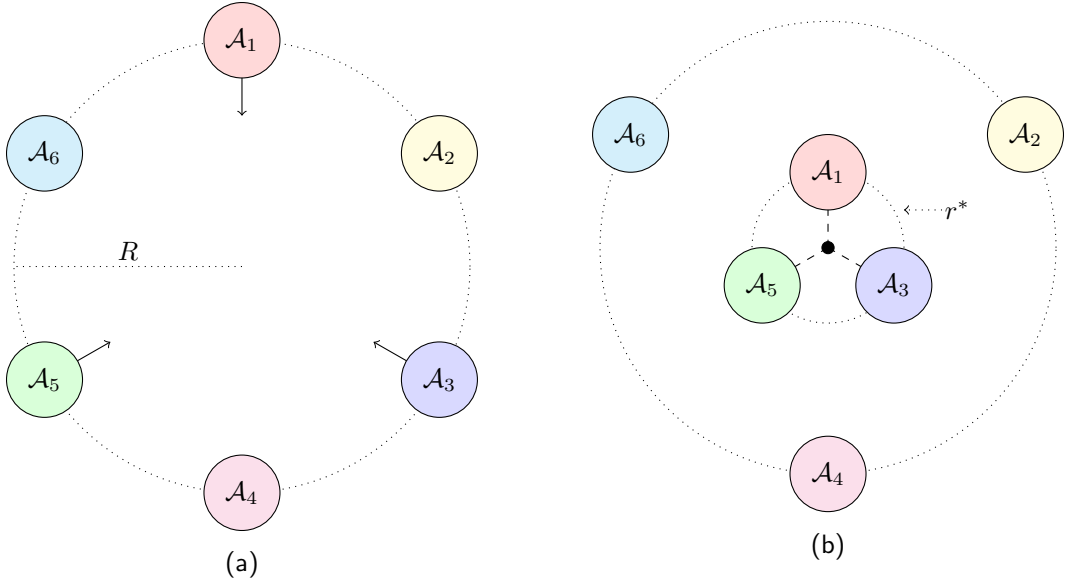
\begin{figure}[tb]
\centering
\begin{subfigure}{0.49\textwidth}
\centering
\begin{tikzpicture}
\draw[dotted] (0,0) circle (3cm);
\draw[dotted] (0,0) -- (-3,0);
\draw[->] (0,3) -- (0,2);
\draw[->] (-2.598,-1.5) -- (-1.732,-1);
\draw[->] (2.598,-1.5) -- (1.732,-1);
\draw[fill=red!15] (0,3) circle (0.5cm);
\draw[fill=green!15] (-2.598,-1.5) circle (0.5cm);
\draw[fill=blue!15] (2.598,-1.5) circle (0.5cm);
\draw[fill=yellow!15] (2.598,1.5) circle (0.5cm);
\draw[fill=cyan!15] (-2.598,1.5) circle (0.5cm);
\draw[fill=magenta!15] (0,-3) circle (0.5cm);
\node at (0,3) {\footnotesize{$\regA_1$}};
\node at (-2.598,-1.5) {\footnotesize{$\regA_5$}};
\node at (2.598,-1.5) {\footnotesize{$\regA_3$}};
\node at (2.598,1.5) {\footnotesize{$\regA_2$}};
\node at (-2.598,1.5) {\footnotesize{$\regA_6$}};
\node at (0,-3) {\footnotesize{$\regA_4$}};
\node at (-1.5,0.2) {\footnotesize{$R$}};
\end{tikzpicture}
\caption{}
\label{fig:building_blocks_a}
\end{subfigure}
\hfill
\begin{subfigure}{0.49\textwidth}
\centering
\begin{tikzpicture}
\draw[dotted] (0,0) circle (1cm);
\draw[dotted] (0,0) circle (3cm);
\draw[dotted,->] (1.5,0.5) -- (1,0.5);
\draw[fill=black] (0,0) circle (0.08cm);
\draw[dashed] (0,0) -- (0,1);
\draw[dashed] (0,0) -- (0.866,-0.5);
\draw[dashed] (0,0) -- (-0.866,-0.5);
\draw[fill=red!15] (0,1) circle (0.5cm);
\draw[fill=green!15] (-0.866,-0.5) circle (0.5cm);
\draw[fill=blue!15] (0.866,-0.5) circle (0.5cm);
\draw[fill=yellow!15] (2.598,1.5) circle (0.5cm);
\draw[fill=cyan!15] (-2.598,1.5) circle (0.5cm);
\draw[fill=magenta!15] (0,-3) circle (0.5cm);
\node at (0,1) {\footnotesize{$\regA_1$}};
\node at (-0.866,-0.5) {\footnotesize{$\regA_5$}};
\node at (0.866,-0.5) {\footnotesize{$\regA_3$}};
\node at (2.598,1.5) {\footnotesize{$\regA_2$}};
\node at (-2.598,1.5) {\footnotesize{$\regA_6$}};
\node at (0,-3) {\footnotesize{$\regA_4$}};
\node at (1.7,0.5) {\footnotesize{$r^*$}};
\end{tikzpicture}
\caption{}
\label{fig:building_blocks_b}
\end{subfigure}
\caption{Construction of a building block for ${\sf N}\!=\!6$. The special configuration $\c_6^{^\circ}$ is shown in (a). The arrows indicate the directions along which the disks $\regA_1,\regA_3,\regA_5$ are moved to construct the building block $\c_6^{^\circ}[\{1,3,5\}]$ shown in (b). The vertex and dashed lines in (b) represent the  octopoid surface which now connects the three disks.}
\label{fig:building_blocks}
\end{figure}

The combination of Lemma~\ref{lemma:canonical_form} and Lemma~\ref{lemma:building_blocks} imply that the equivalence classes of $\mathfrak{C}_{\sf N}$ are in one to one correspondence with the subsets $\mathfrak{F}\subseteq\mathfrak{F}^{\text{can}}$ which satisfy \eqref{eq:subsets}. As shown in the proof of Lemma~\ref{lemma:building_blocks}, a representative of each class can be constructed from the uncorrelated union of building blocks. Among the various classes we should now identify the ones that generate primitive information quantities. 
The following Lemma states that these correspond to the set $\mathfrak{F}^{\text{can}}$ with a single constraint removed.

\begin{lemma}
The equivalence classes of configurations in $\mathfrak{C}_{\sf N}$ which generate primitive information quantities are the ones which are associated to the following sets of constraints 
\begin{equation}
\mathfrak{F}^{\rm can}\setminus \f_{\widehat{\si}_{\sf n}}^{\rm can} ,\qquad {\rm for \ any} \quad \f_{\widehat{\si}_{\sf n}}^{\rm can}\notin\mathfrak{F}_{[{\sf N}]}
\label{eq:generating_equations}
\end{equation}	
where $1<{\sf n}\leq {\sf N}$.
\label{lemma:permutations}
\end{lemma}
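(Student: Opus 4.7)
The plan is to recast the lemma as a linear-algebra counting problem about the full set $\mathfrak{F}^{\rm can}$, and then to combine that with the realizability criterion of Lemma~\ref{lemma:building_blocks}.

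The key preliminary step is to verify that the ${\sf D}\times{\sf D}$ coefficient matrix $A$ of the system $\mathfrak{F}^{\rm can}$---rows indexed by the constraints $\si$ and columns by the variables $Q_\mathscr{K}$, with $A_{\si,\mathscr{K}}=1$ when $\mathscr{K}\supseteq\si$ and $0$ otherwise---is invertible. Ordering the non-empty subsets of $[{\sf N}]$ by increasing cardinality (breaking ties arbitrarily), one has $A_{\si,\si}=1$ on the diagonal, while $A_{\si,\mathscr{K}}=0$ whenever $\mathscr{K}$ strictly precedes $\si$ in this order, since then $|\mathscr{K}|\leq|\si|$ together with $\mathscr{K}\neq\si$ forces $\mathscr{K}\not\supseteq\si$. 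Hence $A$ is triangular with unit diagonal and therefore invertible, so the ${\sf D}$ constraints in $\mathfrak{F}^{\rm can}$ are linearly independent. By rank--nullity, any subset $\mathfrak{F}\subseteq\mathfrak{F}^{\rm can}$ of cardinality ${\sf D}-k$ yields a linear system in $\{Q_\si\}$ whose solution space $\mathcal{V}(\mathfrak{F})$ has dimension exactly $k$.

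Next I would translate primitivity into this dimension count. By Definition~\ref{con:fun}, a faithful $\bQ$ is primitive iff some pair $(\c_{\sf N},\psi_\Sigma)$ admits $\{Q_\si\}$ as the unique solution (up to overall scale) of $\{\f(\c_{\sf N})\}$, i.e.\ $\dim\mathcal{V}(\{\f(\c_{\sf N})\})=1$. Lemma~\ref{lemma:canonical_form} replaces $\{\f(\c_{\sf N})\}$ by an equivalent subset $\mathfrak{F}\subseteq\mathfrak{F}^{\rm can}$ with the same solution space, and Lemma~\ref{lemma:building_blocks} identifies the realizable such subsets as precisely those satisfying $\mathfrak{F}\supseteq\mathfrak{F}_{[{\sf N}]}$. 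Combining these with the dimension count, the primitive equivalence classes correspond bijectively to subsets $\mathfrak{F}\subseteq\mathfrak{F}^{\rm can}$ obeying both $|\mathfrak{F}|={\sf D}-1$ and $\mathfrak{F}\supseteq\mathfrak{F}_{[{\sf N}]}$, i.e.\ $\mathfrak{F}=\mathfrak{F}^{\rm can}\setminus\f_{\widehat{\si}_{\sf n}}^{\rm can}$ for a single omitted constraint $\f_{\widehat{\si}_{\sf n}}^{\rm can}\notin\mathfrak{F}_{[{\sf N}]}$ (equivalently, ${\sf n}\geq 2$), exactly as claimed.

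Finally, I would rule out the other cardinalities of $\mathfrak{F}$: if $\mathfrak{F}=\mathfrak{F}^{\rm can}$ then $\dim\mathcal{V}=0$ and no non-zero $\bQ$ is generated, while if $|\mathfrak{F}|\leq{\sf D}-2$ then $\dim\mathcal{V}\geq 2$, so at least one further information quantity $\bQ'\neq k\bQ$ solves the same constraints, violating the uniqueness clause in Definition~\ref{con:fun}. The main obstacle I anticipate is really just the triangularity argument of the first step---picking a total order on non-empty subsets of $[{\sf N}]$ that makes $A$ manifestly triangular---together with the bookkeeping needed to ensure the bijection between primitive equivalence classes, realizable canonical subsets, and one-dimensional null spaces is faithful on both sides. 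Neither is deep, but both demand some care to avoid off-by-one confusions in the counting.
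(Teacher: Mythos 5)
Your proposal is correct and follows essentially the same route as the paper's proof: linear independence of the ${\sf D}$ canonical constraints, the requirement of ${\sf D}-1$ of them for a one-dimensional solution space, and the realizability condition $\mathfrak{F}\supseteq\mathfrak{F}_{[{\sf N}]}$ from Lemma~\ref{lemma:building_blocks} forcing the single removed constraint to have degree ${\sf n}\geq 2$. The only difference is that you spell out the linear-independence step (which the paper leaves as ``straightforward to check'') via the triangularity of the inclusion matrix, which is a fine way to do it.
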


\begin{proof}
There are $2^{\sf N}-1={\sf D}$ canonical form constraints and it is straightforward to check that they are linearly independent. Since we want to find the combinations of constraints with a one-dimensional subspace of solutions, we have to take all possible `consistent' collections of ${\sf D}-1$ constraints. To obtain such a collection we start from the full list $\mathfrak{F}^\text{can}$ and remove from it a single constraint $\f_{\widehat{\si}_{\sf n}}^{\text{can}}$. However, because of Lemma~\ref{lemma:building_blocks}, we have to require $\f_{\widehat{\si}_{\sf n}}^{\text{can}}\notin\mathfrak{F}_{[{\sf N}]}$ for consistency.
\end{proof}

Finally, to find the desired primitive information quantities we just need to solve these systems of equations. 

\begin{lemma}
The solution to the system of equations \eqref{eq:generating_equations}, with $\widehat{\si}_{\sf n}=\{\ell_1,\ell_2,...,\ell_{\sf n}\}$, is a subspace generated by ${\bf I}_{\sf n}(\regA_{\ell_1}:\regA_{\ell_2}:...:\regA_{\ell_{\sf n}})$.
\label{lemma:solution}
\end{lemma}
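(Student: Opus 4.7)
The plan is threefold: propose the solution explicitly in the $\{Q_\sk\}$ coordinates, verify that it solves the punctured system of constraints, and invoke the linear independence already established in the proof of Lemma~\ref{lemma:permutations} to conclude that the solution space is one-dimensional.

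First I would translate ${\bf I}_{\sf n}(\regA_{\ell_1}:\ldots:\regA_{\ell_{\sf n}})$ into the coefficient vector of \eqref{eq:info_quantity_abstract}, reading off from its definition that $Q_\sk = (-1)^{|\sk|+1}$ whenever $\emptyset\neq\sk\subseteq\widehat{\si}_{\sf n}$ and $Q_\sk = 0$ otherwise. The problem then reduces to testing this ansatz against each canonical constraint $\f_\si^{\text{can}}$ with $\si\neq\widehat{\si}_{\sf n}$.

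The main step is the verification, which I would split into two cases. If $\si\not\subseteq\widehat{\si}_{\sf n}$, then every $\sk\supseteq\si$ also fails to sit inside $\widehat{\si}_{\sf n}$, so each $Q_\sk$ appearing in the constraint vanishes and $\f_\si^{\text{can}}$ is trivially satisfied. If instead $\si\subsetneq\widehat{\si}_{\sf n}$ with $|\si|=j$ (necessarily $1\leq j<{\sf n}$), the only nonzero contributions come from $\sk$ with $\si\subseteq\sk\subseteq\widehat{\si}_{\sf n}$, and grouping these by cardinality $j+m$ leaves a binomial resummation
\[
\sum_{\sk\supseteq\si}Q_\sk \;=\; (-1)^{j+1}\sum_{m=0}^{{\sf n}-j}\binom{{\sf n}-j}{m}(-1)^m \;=\; (-1)^{j+1}(1-1)^{{\sf n}-j} \;=\; 0,
\]
as required. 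As a sanity check, on the removed equation $\f_{\widehat{\si}_{\sf n}}^{\text{can}}$ the analogous sum collapses to the single term $(-1)^{{\sf n}+1}\neq 0$, confirming that the candidate genuinely lies off that one hyperplane.

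For the uniqueness step I would note that the proof of Lemma~\ref{lemma:permutations} already records that $\mathfrak{F}^{\text{can}}$ is a set of ${\sf D}$ linearly independent equations in ${\sf D}$ unknowns. Deleting one equation therefore leaves a system of rank $({\sf D}-1)$ whose null space is one-dimensional, so any solution must be a scalar multiple of ${\bf I}_{\sf n}(\regA_{\ell_1}:\regA_{\ell_2}:\ldots:\regA_{\ell_{\sf n}})$, completing the proof. The only real obstacle is the combinatorial bookkeeping in the two-case verification; once that is in hand, no structural input beyond Lemmas~\ref{lemma:canonical_form}--\ref{lemma:permutations} is needed.
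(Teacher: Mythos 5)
Your proposal is correct, but it is organized differently from the paper's proof. The paper solves the punctured system constructively by forward substitution: starting from the highest-degree canonical constraints it determines $Q_{\si_{{\sf N}-1}}=-\lambda$ from $Q_{\si_{\sf N}}=\lambda$, then peels down degree by degree using the alternating binomial sums, handling ${\sf n}={\sf N}$ first and then reducing the ${\sf n}<{\sf N}$ case to an effective ${\sf n}$-color problem (forcing $Q_{\si_k}=0$ for all indices not contained in $\widehat{\si}_{\sf n}$) before re-running the argument; uniqueness is automatic there because each coefficient is successively forced. You instead write down the candidate $Q_{\sk}=(-1)^{|\sk|+1}$ for $\emptyset\neq\sk\subseteq\widehat{\si}_{\sf n}$ (zero otherwise), verify each remaining canonical constraint via the same identity $(1-1)^{{\sf n}-j}=0$ split into the two cases $\si\not\subseteq\widehat{\si}_{\sf n}$ and $\si\subsetneq\widehat{\si}_{\sf n}$, and then get one-dimensionality of the solution space from the linear independence of $\mathfrak{F}^{\text{can}}$ (recorded in the proof of Lemma~\ref{lemma:permutations}), since deleting one equation from ${\sf D}$ independent ones leaves rank ${\sf D}-1$. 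Your case analysis is exhaustive and the rank argument is sound, so the proof goes through; it is shorter and avoids the paper's reduction step. What the paper's route buys is that it genuinely \emph{derives} the form of ${\bf I}_{\sf n}$ from the constraints rather than verifying a presupposed ansatz, which is more in the spirit of the framework's goal of discovering information quantities; your route buys economy and a cleaner uniqueness statement.
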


\begin{proof}
First consider the case ${\sf n}\!=\!{\sf N}$. 
Namely, we take the set of all canonical constraints up to the one involving all colors (so that our configuration $\c_{\sf N}$ does not admit $\omega$ corresponding to the ${\sf N}$-legged octopus).
The resulting system of equations contains, among all the others, the constraints $\f^{\,\text{can}}_{\mathscr{I}_{{\sf N}-1}}$. Explicitly these are 
\begin{equation}
\big\{Q_{\mathscr{I}_{{\sf N}-1}}+Q_{\mathscr{I}_{\sf N}}=0,\;\forall \mathscr{I}_{{\sf N}-1}\big\}\,.
\end{equation}
Setting  $Q_{\mathscr{I}_{\sf N}}\!\!=\!\lambda$ we get $\{Q_{\mathscr{I}_{{\sf N}-1}}\!\!=\!-\lambda,\;\forall \mathscr{I}_{{\sf N}-1}\}$. 
Next consider the constraints $\f^{\,\text{can}}_{\mathscr{I}_{{\sf N}-2}}$, and let us write them as
\begin{equation}
\big\{Q_{\mathscr{I}_{{\sf N}-2}}+\hspace{-0.7em}\sum_{\mathscr{I}_{{\sf N}-1}\supset\mathscr{I}_{{\sf N}-2}}\hspace{-0.7em} Q_{\mathscr{I}_{{\sf N}-1}}+Q_{\mathscr{I}_{\sf N}}=0,\;\forall \mathscr{I}_{{\sf N}-2}\big\}\,.
\end{equation}
Since there are $\binom{2}{1}\!=\!2$ terms in the sum, we get $\{Q_{\mathscr{I}_{{\sf N}-2}}\!\!=\!\lambda,\;\forall \mathscr{I}_{{\sf N}-2}\}$. Proceeding in this fashion, the $k$-th step is
\begin{equation}
\big\{Q_{\mathscr{I}_{{\sf N}-k}}+\hspace{-1em}\sum_{\mathscr{I}_{{\sf N}-k+1}\supset \mathscr{I}_{{\sf N}-k}}\hspace{-1em}Q_{\mathscr{I}_{{\sf N}-k+1}}+\hspace{-1.3em}\sum_{\mathscr{I}_{{\sf N}-k+2}\supset \mathscr{I}_{{\sf N}-k+1}}\hspace{-1.3em}Q_{\mathscr{I}_{{\sf N}-k+2}}+ \ldots+ Q_{\mathscr{I}_{{\sf N}-1}}+Q_{\mathscr{I}_{\sf N}}=0,\;\forall \mathscr{I}_{{\sf N}-k}\big\}\,,
\end{equation}
which reduces to
\begin{equation}
Q_{\mathscr{I}_{{\sf N}-k}}+\lambda\sum_{l=1}^k(-1)^{k-l}\binom{k}{l}=0
\end{equation}
and gives
\begin{equation}
Q_{\mathscr{I}_{{\sf N}-k}}=
\begin{cases}
      \lambda & \ k\; \text{even} \\
      -\lambda & \ k\; \text{odd}
    \end{cases}
\end{equation}
This procedure terminates when $k={\sf N}-1$, for which we get $Q_{\mathscr{I}_1}=\{-\lambda,\lambda\}$ depending on whether ${\sf N}$ is respectively even or odd. The resulting entropy relation is therefore ${\bf I}_{\sf N}$ (up to a possible factor of $-1$ which is irrelevant).

Consider now the general case of \eqref{eq:generating_equations}, with ${\sf n}<{\sf N}$. The constraint $\f_{\si_{\sf N}}^{\text{can}}$ implies $Q_{\si_{\sf N}} =0$, from which, using the constraints $\f_{\si_k}^{\text{can}}$ (with $k>{\sf n}$) and proceeding like above, one gets 
\begin{equation}
Q_{\si_k}=0,\quad \forall\si_k,\; \text{with}\;\,
{\sf n}<k\leq{\sf N}\,.
\end{equation}
Similarly one also gets
\begin{equation}
Q_{\si_{\sf n}}=0,\quad \forall\si_{\sf n}\neq\widehat{\si}_{\sf n}
\end{equation}
and
\begin{equation}
Q_{\si_k}=0,\quad \forall\si_k\not\subset\widehat{\si}_{\sf n},\;\,k<{\sf n}\,.
\end{equation}
These relations effectively implement a \textit{reduction} of the type discussed at the beginning of this section (see Eq.~\eqref{eq:reduction}), to a setting where there are effectively only ${\sf n}$ colors. Therefore one can run again the previous argument in this reduced setting, completing the proof.
\end{proof}

To summarize, we have proven that in an ${\sf N}$-party setting, any permutation of the ${\sf n}$-partite information is a primitive information quantity. Furthermore, the set of all quantities ${\bf I}^{(\sigma)}_{\sf n}$, for all permutations $\sigma$ and all values of $2\leq {\sf n}\leq {\sf N}$, is the full list of primitive quantities associated to the restricted class of configurations $\mathfrak{C}_{\sf N}$. To derive new primitive quantities, one therefore has to relax at least one of the two topological restrictions that we made at the beginning of this section.\footnote{ As we argued in \S\ref{sec:three_parties}, we expect that one needs at least ${\sf N}\geq 4$ to find new quantities.} Although this more complicated problem will be postponed to future work \citep{Hubeny:2018aa}, we will make further comments about how it can be approached in \S\ref{sec:discuss}. Before closing, we mention a simple result which goes in this direction and can be immediately derived from the construction presented in this section

\begin{corollary}
In a ${\sf N}$-party setting, any information quantity $\bQ$ derived from any primitive quantity ${\bf I}^{(\sigma)}_{\sf n}$ under the purification symmetry, is also primitive.
\label{cor:purifications}
\end{corollary}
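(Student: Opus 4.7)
The purification symmetry can be packaged as a linear involution $\pi_\ell$ on the space of information quantities, induced by swapping subsystem $\regA_\ell$ with the purifier $\univ$ in the pure state on $\bigcup_{\ell'=1}^{\sf N}\regA_{\ell'}\cup\univ$. Concretely, one uses $S_\si=S_{([{\sf N}]\cup\{\univ\})\setminus\si}$ to eliminate $\regA_\ell$ from those entries of $\mathbf{S}$ that contain it, and then relabels old $\univ$ as new $\regA_\ell$, exactly as in the SA$\to$AL derivation of \S\ref{sec:three_parties}. Iterated application of the various $\pi_\ell$'s generates the full purification orbit, so it is enough to handle a single swap. The goal is then to show that $\pi_\ell$ sends primitive quantities to primitive quantities.

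\paragraph{Transport at the level of configurations.} The strategy is to start from the configuration $\c_{\sf N}\in\mathfrak{C}_{\sf N}$ which generates a primitive ${\bf I}^{(\sigma)}_{\sf n}$ via Lemma~\ref{lemma:solution}, and construct a companion configuration $\c^\pi_{\sf N}$ whose constraint set $\{\f(\c^\pi_{\sf N})\}$ is the image of $\{\f(\c_{\sf N})\}$ under $\pi_\ell$. The construction I would follow is the one exemplified in Fig.~\ref{fig:AL_via_purification}: recolor the regions of $\regA_\ell$ as part of the purifier, promote a compact component of the original purifier to play the role of the new $\regA_\ell$, and designate an appropriate component of the old $\regA_\ell$ (or a sufficiently large region enclosing the original configuration) as the new non-compact universe. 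The crucial bulk input is the pure-state HRT identity $\mathcal{E}_\regA=\mathcal{E}_{\regA^c}$, which ensures that the set $\om(\c^\pi_{\sf N})$ of connected bulk surfaces is canonically identified with $\om(\c_{\sf N})$; only the incidence matrix $M_{\si\mu}$ of \eqref{eq:surface_decomposition} is affected, and it is affected in precisely the way prescribed by $\pi_\ell$ acting on the column index.

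\paragraph{Conclusion and obstacle.} Because $\pi_\ell$ is a linear isomorphism of coefficient space, the one-dimensional solution space of $\{\f(\c_{\sf N})\}$ spanned by ${\bf I}^{(\sigma)}_{\sf n}$ is transported to a one-dimensional solution space of $\{\f(\c^\pi_{\sf N})\}$ spanned by $\pi_\ell\,{\bf I}^{(\sigma)}_{\sf n}$. By Definition~\ref{con:fun} the image is then primitive, and iterating yields the full purification orbit. The main obstacle is concrete rather than conceptual: in the $\mathbb{R}^2$ setting the purifier contains the non-compact universe, so the literal swap $\regA_\ell\leftrightarrow\univ$ must be accompanied by a careful re-partition of the new purifier into a compact part plus a new non-compact universe, which is precisely step (c) of Fig.~\ref{fig:AL_via_purification}. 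What one must check is that this re-partition can always be performed without changing the equivalence class of the configuration under \eqref{eq:equivalence3}, possibly by invoking uncorrelated unions of the building blocks from Lemma~\ref{lemma:building_blocks} to absorb any residual mismatch. Once that is verified the corollary follows from the linear-algebraic argument above.
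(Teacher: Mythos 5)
Your argument is essentially the paper's own proof: take the very configuration that generates ${\bf I}^{(\sigma)}_{\sf n}$, swap the purifier $\univ$ with a monochromatic subsystem $\regA_\ell$, and use purity of the global state so that the bulk surfaces (hence the constraints) are unchanged up to the relabeling of entropy-vector entries, which transports the one-dimensional solution space to the purified quantity. The only divergence is your closing ``obstacle'', which is not actually needed: Definition~\ref{con:fun} places no topological requirement on the generating pair, so the literally relabeled configuration suffices even though it leaves $\mathfrak{C}_{\sf N}$ (the paper explicitly notes this right after the corollary); the re-partition of the new purifier and choice of a new non-compact universe in Fig.~\ref{fig:AL_via_purification} is only a presentational step (useful for decomposing into building blocks), and no check of the equivalence class under \eqref{eq:equivalence3} or appeal to Lemma~\ref{lemma:building_blocks} is required to conclude primitivity.
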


\begin{proof}
For any primitive quantity ${\bf I}^{(\sigma)}_{\sf n}$ we have given an explicit construction, using the uncorrelated union of building blocks, of a generating configuration. As we exemplified in Fig.~\ref{fig:AL_via_purification}, for the derivation of the quantity $\bQ^{\text{AL}}$  associated to the AL inequality in the case of three parties, one can use the \textit{same} configuration to generate any other quantity obtained from ${\bf I}^{(\sigma)}_{\sf n}$ under the purification symmetry. To do that, one simply has to swap the role of the purifier $\univ$ and one of the monochromatic subsystems $\regA_\ell$.
\end{proof}

Note that the relabeling used to construct the configurations that generate the quantities obtained from  ${\bf I}^{(\sigma)}_{\sf n}$ under the purification symmetry (see proof above), necessarily imply that the new configurations (after relabeling) do not satisfy the topological restrictions that define $\mathfrak{C}_{\sf N}$. More precisely, since the purifier $\univ$ is necessarily adjoining to all the regions composing the configurations in $\mathfrak{C}_{\sf N}$, after relabeling some of the regions with different colors will be adjoining to each other. The result of Corollary~\ref{cor:purifications} is therefore consistent with the ${\bf I}_{\sf n}$-Theorem.

\section{Discussion}
\label{sec:discuss}

The main goal of the present work was to introduce a new framework for the derivation of information quantities which are natural from the perspective of geometric states in holographic theories. We have shown that in the case of three parties, the information quantities which emerge from this framework precisely correspond to the facets of the holographic entropy cone \cite{Bao:2015bfa}. Furthermore, we have proved a general theorem about a particular family of information quantities which can be derived, for an arbitrary number of parties ${\sf N}$, under some topological restrictions for the allowed choice of configurations in the field theory. This result should be understood as the first step of a broader program \cite{Hubeny:2018aa} which aims at a deeper understanding of the entanglement structure of geometric states.  For the present, we will  explain the relation of our work with that of \cite{Bao:2015bfa}  and comment on other outstanding questions that will be addressed in future publications.

\begin{figure}[tb]
\centering
\begin{tikzpicture}
\draw[fill=yellow!15] (0,0) circle (4cm);
\draw[fill=white!15] (0,2) circle (1.5cm);
\draw[fill=white!15] (1.732,-1) circle (1.5cm);
\draw[fill=white!15] (-1.732,-1) circle (1.5cm);
\draw[fill=red!15] (0,2) circle (1cm);
\draw[fill=green!15] (1.732,-1) circle (1cm);
\draw[fill=blue!15] (-1.732,-1) circle (1cm);
\draw[dashed] (0,3) -- (0,3.5);
\draw[dashed] (2.598,-1.5) -- (3.031,-1.75);
\draw[dashed] (-2.598,-1.5) -- (-3.031,-1.75);
\draw[dashed] (0,0) -- (0,1);{}
\draw[dashed] (0,0) -- (-0.866,-0.5);
\draw[dashed] (0,0) -- (0.866,-0.5);
\draw[fill=black] (0,0) circle (0.08cm);
\node at (0,2) {\small{$\mathcal{A}$}};
\node at (1.732,-1) {\small{$\mathcal{B}$}};
\node at (-1.732,-1) {\small{$\mathcal{C}$}};
\node at (-2.5,1.5) {\small{$\mathcal{D}$}};
\end{tikzpicture}
\caption{Example of a four-party configuration whose constraints do not reduce to Eq.~\eqref{eq:canonical_form_constaints}. The dashed lines indicate non-vanishing mutual information, namely ${\bf I}_2(\mathcal{A}:\mathcal{D})>0$ and  ${\bf I}_2(\mathcal{A}:\mathcal{B}\mathcal{C})>0$ while  ${\bf I}_2(\mathcal{A}:\mathcal{B})=0$, etc.}
\label{fig:4_parties}
\end{figure}
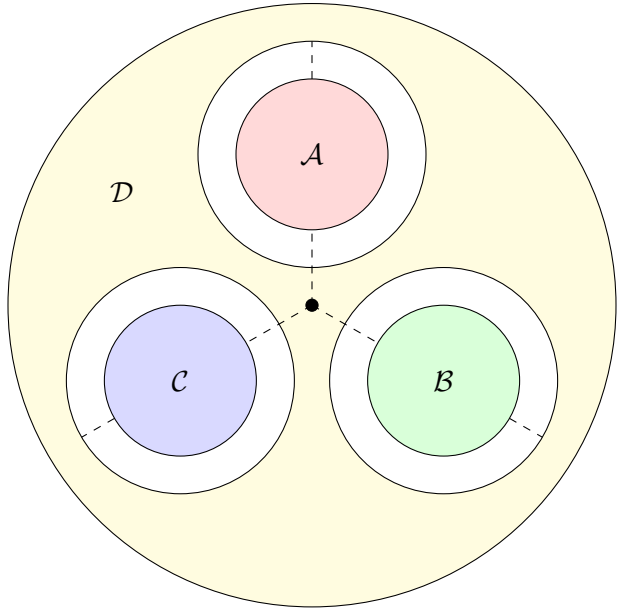

\paragraph{Beyond the ${\bf I}_{\sf n}$-Theorem:} The central result of our analysis is  Theorem~\ref{thm:In} which was proved under two restrictions on the allowed field theory configurations, namely that all regions are disjoint from each other (the disjoint scenario discussed in \S\ref{subsec:organizing}), and that each region has a connected boundary. While these assumptions were sufficient to prove the theorem, they are far from being necessary. In fact, one can show that the theorem holds much more generally. For example, we note without explicit proof, that the theorem continues to hold even when the regions have disconnected boundaries, as long as the regions are not \textit{enveloping} each other (see \S\ref{subsec:organizing} for the definition). The reason is that Lemma~\ref{lemma:canonical_form} continues to hold, and the constraints of any configuration, even in this more general family, can always be converted to the same canonical form Eq.~\eqref{eq:canonical_form_constaints}. Even more generally, our investigations show that there are plenty of configurations with enveloped regions whose constraints can nevertheless be converted to the canonical form Eq.~\eqref{eq:canonical_form_constaints}. Similarly, lifting our restrictions and allowing regions which share parts of their boundaries also turns out not to guarantee new information quantities (this is already clear from the first configuration that we  used to derive the tripartite information, see Fig.~\ref{fig:tripartite_information1}). 

To find new information quantities, one has to look for particular configurations whose corresponding sets of constraints cannot be reduced to the canonical form of Eq.~\eqref{eq:canonical_form_constaints}. Because of the general validity of  Theorem \ref{thm:In} beyond the restricted class of configurations we considered in its proof, we expect these configurations to be quite rare. For instance,  Fig.~\ref{fig:4_parties} provides a particular example involving four parties where we find a different canonical representative for the constraints. At this stage, it is unclear how the number of primitive information quantities scales with the number of parties ${\sf N}$. However,  the fine tuning required by configurations like the one in Fig.~\ref{fig:4_parties} seems to suggest that the number of possibilities could remain under control, at least for reasonably small values of ${\sf N}$. Furthermore, for any given ${\sf N}$, the primitive information quantities found by the ${\bf I}_{\sf n}$-Theorem are the ($\sigma,{\sf n}$)-reductions of the multipartite information (for all ${\sf n}$ with $2\leq {\sf n}<{\sf N}$). Let us define a \textit{genuine} ${\sf N}$-partite primitive information quantity $\bQ$ as one where all the ${\sf N}$ monochromatic subsystems appear in $\bQ$.\footnote{
More precisely, suppose that we are working in an ${\sf N}$-party setting. The quantities ${\bf I}^{(\sigma)}_{\sf n}$, with ${\sf n}<{\sf N}$, which result from the ${\bf I}_{\sf n}$-Theorem, clearly are non-genuine according to the above definition, since they only contain ${\sf n}$ monochromatic subsystems. But note that the other quantities obtained from them via Corollary~\ref{cor:purifications} can in general contain all the ${\sf N}$ subsystems and naively would seem to be genuine. However, this is an artifact of the definition of the purification symmetry, where we decided to hold fixed the total number of subsystems and never allow to join them (see also the comments about the various instance of SA and AL for three parties in \S\ref{sec:overview}). Hence, we define a  genuine information quantity as one which cannot be reduced to a simpler form (with a smaller number ${\sf n}$ of subsystems) by recombining the ${\sf N}$ appearing subsystems into new (joined) ones.} One can infer from the ${\bf I}_{\sf n}$-Theorem that (under the aforementioned restrictions) only a single new primitive information quantity emerges for any ${\sf N}$ (the others being instances of lower party quantities). This observation supports the intuition that the problem 
of finding primitive information quantities could be solved efficiently for any value of ${\sf N}$.  

We will continue the systematic search of new primitive information quantities, for four or more parties, in future publications. Our search strategy will broadly mimic the proof of Theorem~\ref{thm:In}, by examining new classes of configurations with increasing levels of topological complexity. More precisely, one first defines a set $\mathfrak{C}_{\sf N}$ of allowed configurations, by specifying some topological restriction. To classify the equivalence classes of configurations in $\mathfrak{C}_{\sf N}$, under the equivalence relation Eq.~\eqref{eq:equivalence3}, one has to introduce an appropriately generalized version of the canonical form for the constraints. The equivalence classes are then derived by explicitly constructing their representatives via the disjoint union of suitably defined building blocks. Finally, one considers the classes associated to a set of ${\sf D}-1$ linearly independent constraints to derive the primitive information quantities. 

As we discussed in \S\ref{subsec:organizing}, and exemplified in \S\ref{sec:three_parties} for the case of three parties, the pattern of constraints associated to various configurations is more transparent if one assumes that the regions do not share any portions of their boundaries. Indeed, the generalization of the ${\bf I}_{\sf n}$-Theorem to arbitrary configurations in this disjoint scenario will be the focus of \cite{Hubeny:2018aa}. We then hope to understand more complicated situations, where the regions do share portions of their boundaries, as particular limits of the aforementioned ones. Ultimately one would like to lift all topological restrictions in the definition of $\mathfrak{C}_{\sf N}$ and allow for completely general configurations.

\paragraph{Entropy hyperplane arrangement:} For given ${\sf N}$, consider the set of all primitive information quantities. Geometrically, we can visualize each quantity as being associated to a hyperplane in entropy space.\footnote{ More precisely, for a quantity ${\bf Q}$, we consider the particular hyperplane which is the space of solutions to the equation ${\bf Q}({\bf S})=0$, where the components $S_\si$ of ${\bf S}$ are now treated as variables.} Thus, the set of all of them is a \textit{hyperplane arrangement} that we will denote by $\arr_{\sf N}$. As we mentioned in the introduction, the main motivation for the program initiated in this work is the understanding of the structure of correlations which characterizes geometric states in holographic theories. The hope is that we can unearth its main features via the derivation of the primitive quantities which compose the arrangement, as well as of new holographic inequalities (more on this below). In particular, we expect that some distinctive properties of the entanglement structure of geometric states can become more evident via the study of the structure of the arrangement. 

First, notice that for any $\bQ\!\in\!\arr_{\sf N}$, all the other quantities obtained from $\bQ$ via the purification symmetry also belong to $\arr_{\sf N}$. This simply follows from the fact that all these quantities are generated by the same configuration by changing the labeling\footnote{ Corollary~\ref{cor:purifications} in \S\ref{sec:multipartite_information} is just a particular instance of this general fact.} (see Fig.~\ref{fig:AL_via_purification} for an example).  One is then naturally led to inquiring about the topology of the boundaries of the generating configurations, which in turn informs whether the mutual information between components is finite. 

Consider, for example,  the information quantity associated to AL (for both two and three parties), which is obtained from SA under the symmetry. It appears that there is no way to generate such quantities without considering regions that share some boundaries (and therefore having a divergent mutual information between components). 
In case of AL, this is necessitated by the fact that AL is not balanced, so that we need coincident entangling surfaces to effectively restore balance and cancel the divergence in the corresponding information quantity.
One might nevertheless wonder if a similar property also holds for other quantities. Specifically, we can ask if it is the case that for any primitive quantity $\bQ$, generated by a configuration where all mutual informations are finite (like SA above), all the other quantities obtained from $\bQ$ under the purification symmetry, can only be generated by configurations with divergent mutual information.

This turns out to be not necessarily the case. Indeed, for four parties, there is an instance of MMI which is of the form
\begin{equation}
S_\mathcal{ABD}+S_\mathcal{BC}+S_\mathcal{ACD}\geq S_\mathcal{AD}+S_\mathcal{B}+S_\mathcal{C}+S_\mathcal{ABCD} \,.
\label{eq:mmi_4}
\end{equation}
It is immediate to check that it can be derived from \eqref{eq:mmidef} using the purification symmetry. However, as we will show in detail in \cite{Hubeny:2018ab}, by allowing enveloping of regions, one can generate the quantity associated to \eqref{eq:mmi_4} with a pattern of finite mutual information. 

A second related question concerns the various instances of a given ${\sf N}$-partite quantity $\bQ$ appearing for a larger number of parties ${\sf N}' \!>\! {\sf N}$. It would be interesting to understand if there is some fundamental principle that determines which ${\sf N}'$-instances of $\bQ$ are primitive and which are not. For instance, as exemplified in \S\ref{sec:overview}, the instances of SA for ${\sf N}' >2$ can be non-primitive, as a consequence of MMI.  On the other hand, the previous example involving MMI in the form \eqref{eq:mmi_4} illustrates that some instances of ${\bf I}_3$ can be primitive (albeit not genuine) for ${\sf N}' >3$. In the $4$-party case, this is related to the fact that there are no new inequalities that could render \eqref{eq:mmi_4} redundant. In general, whether ${\sf N}'$-instances of a ${\sf N}$-partite quantity are primitive or not, is intimately related to the presence of new inequalities. Therefore, answering this question could be helpful in the search for new inequalities. Furthermore, having a general principle to establish which instances remain primitive at increasing number of parties could be useful to make more efficient the construction of  $\arr_{{\sf N}'}$, assuming knowledge of all $\arr_{\sf N}$ with ${\sf N}<{\sf N}'$.

Finally, a comment is in order about the relation between the structure of the arrangement and how a pair $(\c_{\sf N}, \psi_\Sigma)$ of a state and a configurations can be ``localized'' in entropy space. By construction, the primitive quantities are associated to configurations whose entropy vector can be `minimally localized',  in the sense that the regulated entropy vectors belong to a particular hyperplane (but not simultaneously to any other). Other configurations might be further localized. They could, for instance, satisfy more than a single relation, i.e., more than one primitive information quantity vanishes and the configuration belongs to a subspace of higher codimension (the intersection of various hyperplanes). It is interesting to inquire which configurations can be localized at particular locations on the arrangement. For sufficiently generic configurations (see \S\ref{sec:overview}), the codimension of the subspaces on which they are localized increases as the number of constraints decreases. A relevant case is the ${\sf N}$-party ``completely uncorrelated'' configuration, which can be realized, for example, by a set of ${\sf N}$ disks which are sufficiently separated from each other. As we discussed in \S\ref{sec:multipartite_information}, this particular configuration is associated to the set of (canonical) constraints $\mathfrak{F}_{[{\sf N}]}$. The solution to this set of constraints is a special ${\sf N}$-dimensional subspace of entropy space where all the hyperplanes associated to balanced information quantities intersect. This follows from the fact that this particular configuration belongs to any such hyperplane (see \S\ref{sec:overview}). 
This simple observation suggests that as we deform a configuration to reduce the number of constraints, the various regions become less and less correlated (see more on this point below), and the configurations are more and more localized in entropy space. However, this seems to indicate that only certain locations on the arrangement can be reached by this procedure. To find configurations which are localized on other specific locations on the arrangement, one might have to work with finite values of the entropy (and not with the proto-entropy).

\paragraph{Relation to the holographic entropy cone:} Clearly, not all primitive information quantities are associated to new holographic inequalities. It was already observed in \cite{Hayden:2011ag} that there are counterexamples to the sign-definiteness of the ${\sf N}$-partite information, at least for some values of ${\sf N}\geq 4$. In fact, since the holographic entropy cone \cite{Bao:2015bfa} for four parties is known, one can immediately check that the $4$-partite information ${\bf I}_4$ has opposite signs for some of the extremal rays of the cone. 

One of the main applications of our framework is the search for new holographic entropy inequalities. Indeed, this is what motivated the definition of primitive information quantities in the first place. Suppose that for a given ${\sf N}$ we have a list of primitive information quantities generated by the procedure outlined above. To find good candidates for new holographic inequalities we need to construct a \textit{sieve} which allows us to efficiently extract the candidates from the list. We will explain how this can be done in \cite{Hubeny:2018ab}.

It is clear that, strictly speaking, this procedure alone does not prove that the candidates are indeed true inequalities, in full generality, for any state (including dynamical ones) and choice of configuration. For static spacetimes, a direct proof of the inequalities via the standard `cutting and pasting' procedure of \cite{Headrick:2007km} quickly becomes unfeasible as ${\sf N}$ grows \cite{Bao:2015bfa}. The situation is even more dire in the dynamical case, where such technique cannot be employed. In fact, even if MMI was proven also in the dynamical case using the `maximin technique' of \cite{Wall:2012uf}, it was shown in \cite{Rota:2017ubr} that this method cannot be extended to the 5-party case.\footnote{ Furthermore, it was argued in \cite{Rota:2017ubr} that for more parties, the technique is even more unlikely to be useful for proving any inequality.} A more promising technique could be the one introduced in \cite{Hubeny:2018bri} to prove MMI using bit-threads \cite{Freedman:2016zud}. However this remains to be explored further.

On the other hand, one could also hope to be able to prove the inequalities via a more indirect argument. If one could prove that any holographic inequality is necessarily associated to a primitive information quantity, one could dispense with a direct proof technique and instead try to optimize the sieve. An indication that this might be possible already comes from \cite{Bao:2015bfa}, which showed that the holographic entropy cone is closed topologically (see \S\ref{sec:overview} above and the original paper for more details). In fact, this implies that given the geometries which realize the extremal rays of the cone, one can realize any other ray by an appropriate choice of tensor products and rescaling of the metric. In particular, one can realize a ray which is minimally localized on a facet of the cone (which corresponds to the saturation of one of the inequalities), i.e., localized only on the facet but not on a lower dimensional face. 
Heuristically, such a ray would then be associated to a configuration that generates the corresponding primitive quantity according to Definition~\ref{con:fun}. 

However, there is a potential subtlety, in that it is in principle possible that some information quantity can only be saturated  for finite values of the entropies (either obtained by fixing a cut-off or by choosing entire boundaries as subsystems) and not in the more abstract sense of our framework (proto-entropy). In other words, the existence of an entropy vector defined as in \cite{Bao:2015bfa} which saturates a facet inequality, does not necessarily guarantee that the corresponding information quantity is primitive in the sense of  Definition \ref{con:fun}. Additionally, this argument relies on the polyhedrality of the holographic entropy cone, but this was proven only for static spacetimes.\footnote{ As we explained in \S\ref{sec:formalization}, our construction makes no distinction between RT/HRT. In particular the full arrangement $\arr_{\sf N}$ is derived for \textit{all} geometric states, both static and dynamical and it is only a function of the number of parties ${\sf N}$. We find this to be indicative that also the holographic entropy cone could be the same, for static and dynamical geometries.} While we believe that some of these situations do not come to pass, it behooves us to explore these questions in greater detail as we develop our framework further.

The main purpose for considering the description of the holographic cone in terms of its generators, was that it allows us to directly check the completeness of the set of inequalities. Suppose that a set of inequalities has been proved for some ${\sf N}$, but one does not know how many more there are. One can consider the cone specified by the proven inequalities, extract the extremal rays and try to realize them by a certain geometry/configurations. This is in fact how \citep{Bao:2015bfa} proved that there are no new inequalities other than MMI for three and four parties. Of course, should one fail to realize such rays, one does not know if it is because other inequalities exist and have to be found, or if it just because of the complexity of the geometric problem.

Extremal rays however could also be interesting for another reason. As we explained in the preceding paragraphs, there are locations on the arrangement on which one could only hope to localize configurations by working with finite entropies. The extremal rays are likely to be important examples of such locations\footnote{ The bipartite case, being particularly simple, is an exception, see \S\ref{sec:overview}.} and it could be useful to further explore the corresponding entanglement structure. One option would be to introduce a regulator, but this is in general unphysical, as we explained in \S\ref{sec:overview}. The other option, adopted by \cite{Bao:2015bfa} is to realize the extremal rays with multi-boundary wormhole solutions. It is not fully clear if such solutions are indeed dual to field theory states (see also \cite{Marolf:2017shp,Marolf:2018ldl}), however it would be very interesting to understand if there are particular patterns of correlations which can only be realized by field theory states with a non-trivial bulk topology.\footnote{ Here we are imagining to work with finite values of the entropy, and not with the proto-entropy. Therefore, the actual value of the mutual information between regions within a configuration matters. In particular, there is no contradiction with the arguments of \S\ref{sec:formalization} and \S\ref{sec:multipartite_information}, where we only needed to know whether the mutual information was vanishing or not.}

\paragraph{The exceptional case of $(1+1)$-dimensional CFTs:} The astute reader will note that our gauge fixing procedure, illustrated in \S\ref{subsec:redundancy}, narrowed down our focus to scanning over boundary regions in the vacuum of a $(2+1)$-dimensional CFT, which holographically would be dual to an AdS$_4$ spacetime. Given that a vast amount of holographic entanglement entropy literature focuses on the simpler case of $(1+1)$-dimensional CFT and AdS$_3$ dynamics, should we have not further simplified to this case, one may naturally wonder. In this instance (and in fact with many other explorations in the subject), the $(1+1)$-dimensional case happens to be misleading owing to some over-simplifications, so conclusions drawn from here may not hold more generally.\footnote{ The issue is simple: the absence of non-trivial gravitational dynamics which is the reason for focusing on this case, also ends up being the bane of the analysis. Features of extremal surfaces that are generic to AdS$_3$ are non-generic in higher dimensions, potentially invalidating many conclusions drawn from the low-dimensional example.} 

More specifically, it is not fully clear, for example, if the technology that we employed in \S\ref{sec:multipartite_information} to prove the ${\bf I}_{\sf n}$-Theorem could be applied in a straightforward manner also in this more special setting. To illustrate the point, consider a configuration $\c_{\sf N}$ which would comprise of a collection of ${\sf N}$ intervals $\regA_\ell$, each one with a different color. The intervals can be arbitrarily ordered,  arbitrarily distanced from each other, and can have different length, but a key feature of the holographic scenario is that the mutual information vanishes quite easily when the the distance between the intervals increases. This feature, which is helpful in the analysis of bipartite systems (as the reader will notice we implicitly used this in \S\ref{sec:overview}), is a hassle in the case of multipartite systems. It is easy to convince oneself that, if one is only allowed to vary the size of the intervals and the distance between them, it is not possible to find configurations, like the one mentioned above, which realize arbitrary patterns of correlations. This should be contrasted with the higher dimensional case, where, as we discussed in \S\ref{subsec:redundancy}, one has instead the freedom to deform the shape of the regions (now disks), to realize any desired pattern of mutual information. Thus one is severely constrained in the class of configurations available to us in $(1+1)$-dimensions. Specifically, to try and circumvent this limitation, one is forced to consider more complicated configurations, where each monochromatic subsystem is composed of multiple intervals. This implies that in the configurations one should consider, intervals of different colors are inevitably enveloping each other (see \S\ref{subsec:organizing} for a definition), which can in turn obfuscate the pattern of constraints.

There is an equivalent way to see what the issue is from an information theoretic point of view. In $(1+1)$-dimensional CFT the limitation we just mentioned seems to imply that generically all the holographic inequalities collapse down to SSA. For instance, since it is hard to have three disjoint intervals with non-vanishing common mutual information, in the $3$-party case, we
could ensure that ${\bf I}_2(\mathcal{A}:\mathcal{B}) \neq 0$ and ${\bf I}_2(\mathcal{B}:\mathcal{C}) \neq 0$, but we easily end up having ${\bf I}_2(\mathcal{A}:\mathcal{C}) =0$ (all we need is to order the regions sequentially and have $\mathcal{A}$ and $\mathcal{C}$ be further away from each other). If this is case, it is straightforward to check that the tripartite information ${\bf I}_3(\mathcal{A}:\mathcal{B}:\mathcal{C})$ reduces to the conditional mutual information ${\bf I}_2(\mathcal{A}:\mathcal{C}|\mathcal{B})$. What this means is that in this particular situation MMI is already implied by SSA and does not contain any new information!  More generally, this logic seems to indicate that the same trivialization characterizes all the other holographic inequalities of \cite{Bao:2015bfa}. In other words, generically in $(1+1)$-dimensional holographic CFTs, the information quantities and inequalities constraining the holographic entropy cone are trivially implied by SSA for a single copy of the CFT.\footnote{ We thank Xi Dong for important discussions on this issue.} Showing that all the relations actually amount to no more than SSA involves a more detailed analysis which we will not undertake here (see \cite{Czech:2018aa} for progress in this direction). All we wish to illustrate here is that it is possible to be misled into thinking that one is deriving new relations owing to the somewhat degenerate situation in $(1+1)$-dimensions.

\paragraph{Interpretation of primitive information quantities:} Ultimately we would like to understand the implications of the properties of the arrangement and the entropy cone for the entanglement structure of geometric states. Although at this stage this is still not clear, it is worthwhile to extract some preliminary observations inspired by the derivation thus far.

For bipartite systems, the saturation of SA implies that the density matrix factorizes. Similarly, at least for finite dimensional Hilbert spaces, a particular form of factorization of the density matrix is also implied by the saturation of AL \cite{Zhang:2011aa}. Specifically, if the second inequality of Eq.~\eqref{eq:al} is saturated, there exists a bipartition of $\mathcal{A}$ into two subsystems $\mathcal{A}_1,\mathcal{A}_2$ such that
\begin{equation}
\rho_\mathcal{AB}=\ket{\psi}_{\mathcal{A}_1\mathcal{B}}\bra{\psi}_{\mathcal{A}_1\mathcal{B}}\otimes\rho_{\mathcal{A}_2}\,,
\label{eq:saturation_AL}
\end{equation}
where $\ket{\psi}_{\mathcal{A}_1\mathcal{B}}$ is a pure state. In the holographic context one can easily see why this must be the case. Consider the configuration of Fig.~\ref{fig:cut-off_independence2} and call $\univ$ the purifier of the bipartite system $\mathcal{AB}$. It is clear that one has $I_2(\mathcal{B}:\univ)=0$, which implies $\rho_\mathcal{BO}=\rho_\mathcal{B}\otimes\rho_\univ$. Since the state on $\mathcal{ABO}$ is pure, one immediately arrives at the factorization given in \eqref{eq:saturation_AL} (for other observations regarding the saturation of AL in the holographic context see  \cite{Hubeny:2013gta,Headrick:2013zda}). 

More interestingly, in the case of three parties, the configuration of Fig.~\ref{fig:alternative_tripartite_info} which we used to generate the tripartite information ${\bf I}_3$, also implies a factorization of the density matrix, now of the form
\begin{equation}
\rho_{\mathcal{ABC}}=\rho_{\mathcal{A}_1\mathcal{B}_1}\otimes\rho_{\mathcal{B}_2\mathcal{C}_1}\otimes\rho_{\mathcal{A}_2\mathcal{C}_2}\,.
\label{eq:saturation_MMI}
\end{equation}
Likewise, despite naively looking different, the other configuration which generates ${\bf I}_3$ (Fig.~\ref{fig:tripartite_information1}) is also associated to a density matrix which reduces to a similar form (up to permutation of the labels, and additional factors involving portions of single subsystems). To see this, one can simply use the structure \eqref{eq:saturation_AL} for each ``island'' in the configuration. More generally, the ${\bf I}_{\sf n}$-Theorem shows that for any ${\sf N}>3$ one can derive the tripartite information ${\bf I}_3^{(\sigma)}$, now a ($\sigma,{\sf n}$)-reduction, by removing from the full list of constraints  $\mathfrak{F}^{\text{can}}$ the one which is associated to an octopus connecting three monochromatic subsystems $\f^{\text{can}}_{\widehat{\si}_3}$ (see Lemma \ref{lemma:permutations}). The presence of surfaces of higher degree now implies that the density matrix does not have the structure \eqref{eq:saturation_MMI}. Nevertheless, the structure of the building blocks ensures that the factor containing ${\sf N}$-partite correlations still do not contain 3-partite correlations, i.e., all the marginals completely factorize. For example, in the $4$-party case, one gets (schematically)
\begin{equation}
\rho_{\mathcal{ABCD}}=\rho_{\mathcal{A}_1\mathcal{B}_1\mathcal{C}_1\mathcal{D}_1}\otimes\rho_{\text{bipartite}}\,.
\end{equation}
This seems to suggest that, holographically, the tripartite information is a measure of genuine tripartite correlation. 

The previous argument naturally generalizes to the ${\sf N}$-partite information, for arbitrary ${\sf N}$. Therefore, all the primitive information quantities found so far have a natural (from the holographic perspective) saturating density matrix with a tensor product structure. It is tempting to speculate that this is a general feature of all primitive information quantities. Indeed one might wonder if the above structure is also necessary, at least in the holographic context, for saturation. Similar properties would then be inherited by other more special locations of the arrangement. Developing this intuition further within our framework, and in particular making a connection with the conjecture of \cite{Cui:2018dyq} about the structure of geometric states, is a very interesting question that we leave for future investigations.\footnote{ However, we warn the reader that these statements should be understood as approximate, since a-priori one does not expect these relations to hold exactly if $1/N$ corrections are included in the evaluation of the various entropies.}

We conclude with a general comment that transcends the holographic context. While we have defined the arrangement $\arr_{\sf N}$ using the RT/HRT prescription, this object could be of interest in quantum field theory more broadly. In fact, while the primitive information quantities were identified using purely holographic arguments, as we discussed in \S\ref{subsec:overview2b}, their particular structure seems to suggest that, like the mutual information, they can be well defined measures of correlations. Specifically, they can be finite (at least when the subsystems do not share portions of their boundaries) and independent from the regulator scheme. If this were the case, one could use these quantities, from which one could now extract physically relevant information, to further localize configurations of subsystems in entropy space. One could see this procedure as a way to meaningfully characterize the multipartite correlation structure of field theory states, for arbitrary relativistic QFTs.

\section*{Acknowledgments}

It is a pleasure to thank Ning Bao, Xi Dong, Don Marolf, Bogdan Stoica and Sean J. Weinberg for useful conversations. We thank the Centro de Ciencias de Benasque Pedro Pascual, the Kavli Institute for Theoretical Physics in Santa Barbara, the Centro Atomico Bariloche, and the Galileo Galilei Institute in Florence for hospitality during various stages of this project. 	
V.~Hubeny and M.~Rangamani  would also like to acknowledge the hospitality of ICTS-TIFR, Bengaluru and the Yukawa Institute for Theoretical Physics at Kyoto University, for hospitality during the course of the workshops “20 years of AdS/CFT and beyond” and  YITP-T-18-04 ``New Frontiers in String Theory 2018", respectively. M.~Rota would also like to thank QMAP at University of California Davis, the University College London and Nordita in Stockholm, during the workshop ``Cosmology and Gravitational Physics with Lambda'', for hospitality while this work was in progress. 

V.~Hubeny and M.~Rangamani were supported by  U.S. Department of Energy grant DE-SC0009999 and by funds from the University of California. M.~Rota is supported by the Simons Foundation via the ``It from Qubit'' collaboration and by funds from the University of California.


\providecommand{\href}[2]{#2}\begingroup\raggedright\endgroup

\end{document}